\crefname{theorem}{thm.}{thm.}
\crefname{section}{sec.}{sec.}
\crefname{corollary}{cor.}{cor.}
\crefname{lemma}{lem.}{lem.}
\Crefname{theorem}{Thm.}{Thm.}
\Crefname{section}{Sec.}{Sec.}
\Crefname{figure}{Fig.}{Fig.}
\Crefname{equation}{Eqn.}{Eqn.}
\Crefname{assumption}{Assumption}{Assumption}
\pgfplotsset{compat=1.15}
\definecolor{classical}{HTML}{DBD3F5}
\definecolor{quantum}{HTML}{FBE7E6}
\tikzset{
     heat/.style={brown,thick,-Latex, decoration={coil,aspect=0},decorate}
  }
\newtheorem{theorem}{Theorem}
\newtheorem*{theorem*}{Theorem}
\newtheorem{lemma}{Lemma}
\newtheorem{assumption}{Assumption}
\newtheorem{remark}{Remark}
\newcommand{\bv}{\mathbf{b}}
\newcommand{\xv}{\mathbf{x}}
\newcommand{\yv}{\mathbf{y}}
\newcommand{\psiv}{\mathbf{\psi}}
\newcommand{\phiv}{\mathbf{\phi}}
\newcommand{\pv}{\mathbf{p}}
\newcommand{\Zev}{\mathbf{0}}
\newcommand{\bvt}{\tilde{\mathbf{b}}}
\newcommand{\Hm}{\mathbf{H}}
\newcommand{\Am}{\mathbf{A}}
\newcommand{\Amt}{\tilde{\mathbf{A}}}
\newcommand{\Km}{\mathbf{K}}
\newcommand{\Phim}{\mathbf{\Phi}}
\newcommand{\Xm}{\mathbf{X}}
\newcommand{\Ym}{\mathbf{Y}}
\newcommand{\Zm}{\mathbf{Z}}
\newcommand{\Um}{\mathbf{U}}
\newcommand{\In}{\mathbf{I}}
\newcommand{\Id}{\mathbf{I}}
\newcommand{\Pm}{\mathbf{P}}
\newcommand{\Vm}{\mathbf{V}}
\newcommand{\Sp}{\sigma_+}
\newcommand{\Sm}{\sigma_-}
\newcommand{\thetav}{\mathbf{\theta}}
\newcommand{\ra}{\rangle}
\newcommand{\la}{\langle}
\newcommand{\Ar}{\mathbf{A}}
\newcommand{\uv}{\mathbf{u}}
\newcommand{\uvh}{\hat{\mathbf{u}}}
\newcommand{\nx}{n} 
\newcommand{\nc}{n_c} 
\newcommand{\np}{n_p} 
\newcommand{\nt}{N} 
\newcommand{\tsteps}{M} 
\newcommand{\Rr}{\mathbb{R}}
\newcommand{\Cr}{\mathbb{C}}
\newcommand{\Nr}{\mathbb{N}}
\newcommand{\tra}{T}
\newcommand{\tauv}{\mathbf{\tau}}
\newcommand{\Dm}{\mathbf{D}}
\newcommand{\etav}{\mathbf{\eta}}
\newcommand{\wv}{\mathbf{w}}
\newcommand{\wvh}{\hat{\mathbf{w}}}
\newcommand{\wvo}{\overline{\mathbf{w}}}
\newcommand{\wvc}{\mathbf{w}}
\newcommand{\hv}{\mathbf{h}}
\newcommand{\F}{\mathbf{F}}
\newcommand{\tF}{\tilde{\mathbf{F}}}
\newcommand{\lam}{\lambda}
\newcommand{\nxx}{n_x}
\newcommand{\tstepst}{n_t}
\definecolor{backcolour}{rgb}{0.95,0.95,0.92}
\definecolor{codegreen}{rgb}{0,0.6,0}
\lstdefinestyle{myStyle}{
    backgroundcolor=\color{backcolour},   
    commentstyle=\color{codegreen},
    basicstyle=\ttfamily\normalsize,
    breakatwhitespace=false,         
    breaklines=true,                 
    keepspaces=true,    
    numbersep=5pt,                  
    showspaces=false,                
    showstringspaces=false,
    showtabs=false,                  
    tabsize=2,
}
\definecolor{deepblue}{rgb}{0,0,0.5}
\definecolor{deepred}{rgb}{0.6,0,0}
\definecolor{deepgreen}{rgb}{0,0.5,0}
\newcommand\pythonstyle{\lstset{
language=Python,
basicstyle=\ttfamily,
morekeywords={self},              
keywordstyle=\ttfamily\color{deepblue},
emph={MyClass,__init__},          
emphstyle=\ttfamily\color{deepred},    
stringstyle=\color{deepgreen},
frame=tb,                         
showstringspaces=false
}}
\tikzset{
>=stealth',
  punktchain/.style={
    rectangle, 
    rounded corners, 
    draw=black, very thick,
    text width=11em, 
    minimum height=8em, 
    text centered, 
    on chain},
   chain/.style={
    rectangle, 
    rounded corners, 
    draw=black, very thick,
    text width=11em, 
    minimum height=8em, 
    text centered 
    },
  line/.style={draw, thick, <-},
   punktchainVert/.style={
    rectangle, 
    rounded corners, 
    draw=black, very thick,
    text width=20em, 
    minimum height=8em, 
    text centered, 
    on chain},
  line/.style={draw, thick, <-}
  element/.style={
    tape,
    top color=white,
    bottom color=blue!50!black!60!,
    minimum width=8em,
    draw=blue!40!black!90, very thick,
    text width=10em, 
    minimum height=3.5em, 
    text centered, 
    on chain},
  every join/.style={->, thick,shorten >=1pt},
  decoration={brace},
  tuborg/.style={decorate},
  tubnode/.style={midway, right=2pt},
}
\tikzset{
  basic box/.style={
    shape=rectangle, rounded corners, align=center, draw=#1, fill=#1!25},
  header node/.style={
    node family/width=header nodes,
    text depth=+.3ex, fill=white, draw},
  header/.style={%
    inner ysep=+1.5em,
    append after command={
      \pgfextra{\let\TikZlastnode\tikzlastnode}
      node [header node] (header-\TikZlastnode) at (\TikZlastnode.north) {#1}
      node [span=(\TikZlastnode)(header-\TikZlastnode)]
           at (fit bounding box) (h-\TikZlastnode) {}
    }
  },
  fat blue line/.style={ultra thick, blue},
  fat black line/.style={thick, black}
} 
\patchcmd\maketitle{\def\@makefnmark{\rlap{\@textsuperscript{\normalfont\@thefnmark}}}}{}{}{}
\def\thanksAAffil#1{
  \footnotemarkAAffil\protected@xdef\@thanks{\@thanks%
        \protect\footnotetextAAffil[\the \c@footnoteAAffil]{#1}}%
}
\def\thanksANote#1{%
  \footnotemarkANote%
  \protected@xdef\@thanks{\@thanks%
        \protect\footnotetextANote[\the \c@footnoteANote]{#1}}%
}
\author{
Abeynaya Gnanasekaran%
\thanks{Equal contribution}$^{\hspace{4pt}, }$\thanks{abeynaya.gnanasekaran@rtx.com, SRI International, Menlo Park, CA, USA.}%
  %
  \hspace{5pt}, %
  Amit Surana%
  \footnotemark[1]$^{\hspace{4pt}, }$\thanks{amit.surana@rtx.com, RTX Technology Research Center (RTRC), East Hartford, CT, USA.}%
  \hspace{5pt}, %
  Hongyu Zhu \thanks{RTX Technology Research Center (RTRC), East Hartford, CT, USA.}%
}
\title{Variational Quantum Framework for Nonlinear PDE Constrained Optimization Using Carleman Linearization}
\begin{document}
\maketitle
\abstract{We present a novel variational quantum framework for nonlinear partial differential equation (PDE) constrained optimization problems.
The proposed work extends the recently introduced bi-level variational quantum PDE constrained optimization (BVQPCO) framework for linear PDE to a nonlinear setting by leveraging Carleman linearization (CL). CL framework allows one to transform a system of polynomial ordinary differential equations (ODE), i,e. ODE with polynomial vector field,  into an system of infinite but linear system of ODE. For instance, such polynomial ODEs naturally arise when the PDE are semi-discretized in the spatial dimensions. By truncating the CL system to a finite order, one obtains a finite system of linear ODE to which the linear BVQPCO framework can be applied. In particular, the finite system of linear ODE is discretized in time and embedded as a system of linear equations. The variational quantum linear solver (VQLS) is used to solve the linear system for given optimization parameters, and evaluate the design cost/objective function, and a classical black box optimizer is used to select next set of parameter values based on this evaluated cost. We present detailed computational error and complexity analysis and prove that under suitable assumptions, our proposed framework can provide potential advantage over classical techniques. We implement our framework using the PennyLane library and apply it to solve inverse Burgers' problem. We also explore an alternative tensor product decomposition which exploits the sparsity/structure of linear system arising from PDE discretization to facilitate the computation of VQLS cost functions. 

\section{Introduction}
We present a variational quantum framework for optimization problems constrained by nonlinear partial differential equations (PDE). This extends the recently proposed bi-level variational quantum PDE constrained optimization (BVQPCO) framework  for linear PDEs in \cite{surana2024variational} to a nonlinear setting. This is an important extension as many science and engineering applications such as aerodynamics, computational fluid dynamics (CFD), material science, computer vision, and inverse problems necessitate the solution of optimization problems constrained by a system of nonlinear PDE. Examples include Euler and Navier-Stokes PDEs and heat equation in fluid dynamics, combustion and weather forecasting; magnetohydrodynamics and Vlasov-Maxwell PDEs in plasma dynamics and astrophysics; wave equation in structural mechanics; Black–Scholes PDE in finance; and reaction-diffusion PDEs in chemistry, biology and ecology to name a few.  Closed-form solutions are generally unavailable for PDE constrained optimization problems, necessitating the development of numerical methods~\cite{biegler2003large,hinze2008optimization,antil2018frontiers}. A variety of classical gradient-based and gradient-free numerical methods have been developed, and rely on repeated calls to the underlying PDE solver. Since PDE simulations are computationally expensive, using them within the design/optimization loop can become a bottleneck. 

To extend the BVQPCO framework for nonlinear PDE we employ Carleman linearization (CL) framework. The CL framework allows one to transform polynomial ordinary differential equations (ODE), i.e., ODE with polynomial vector field, into an system of infinite but linear ODE~\cite{kowalski1991nonlinear}. For instance, such polynomial ODE naturally arise when PDE, e.g., as mentioned above, are semi-discretized in spatial dimensions. By truncating the CL system to a finite order, one obtains a finite system of linear ODE and under suitable conditions the impact of truncation error can be characterized \cite{forets2017explicit,amini2022carleman}. CL has become an important tool in developing quantum algorithms for simulating polynomial ODE on quantum devices since the seminal work \cite{liu2021efficient}. Specifically, assuming $R_2<1$, where $R_2$ is a parameter characterizing the ratio of the nonlinearity and forcing to the linear dissipation, the query/gate complexity of proposed algorithm takes the form $\frac{sT^2 q}{\epsilon}\textsf{poly}(\log T, \log \nx, \log 1/\epsilon)$, where, $\epsilon$ is desired solution accuracy, $\nx$ is system dimension,  $q=\|\xv(0)\|/\|\xv(T)\|$ measures decay of the solution $\xv(t)\in \Rr^{\nx}$, $T$ is the period of integration,  and $s$ is sparseness of system matrices describing the ODE. This algorithm/analysis was extended for systems of  $k$-th degree polynomial ODE for arbitrary (finite) values of $k$ in \cite{surana2024efficient}. Recent work has shown that by transforming the problem or working with a different formulation of conservation laws in CFD applications, e.g. lattice Boltzman method (LBM), one may be able to more readily satisfy such conditions \cite{li2023potential}. Furthermore, \cite{penuel2024feasibility} developed detailed quantum resource estimates for implementing CL based LBM simulation of incompressible CFD on a fully fault tolerant quantum computer. While the motivation of exploring quantum algorithms in \cite{penuel2024feasibility} is to accelerate simulation based design which results in a PDE constrained optimization, there was no algorithmic treatment of that problem. The application of CL for solving polynomial ODE in a variational quantum setting has also been numerically investigated in \cite{demirdjian2022variational}, in which QLSA is replaced by the variational quantum linear solver (VQLS) \cite{VQLS}.  This work was also restricted to applying CL+VQLS for PDE simulation and did not consider any associated design/optimization problem. Furthermore, the work does not make any computational error/complexity assessments for solving polynomial ODE by the CL+VQLS approach. 

The main contributions of this work are as follows:
\begin{itemize}
  \item We formulate a bi-level variational quantum framework, referred to as nonlinear BVQPCO (nBVQPCO),  for solving nonlinear PDE constrained optimization problems as discussed above, 
   
  \item We present a detailed computational error analysis for CL+VQLS based solution of polynomial ODE and prove that under suitable assumptions the CL+VQLS approach can generate a normalized solution of given ODE to an arbitrary accuracy. Furthermore by combining this error analysis with the empirically known results for query complexity of VQLS, we assess potential utility of our nBVQPCO framework over equivalent classical methods.  
  
  \item We implement the nBVQPCO framework using the PennyLane library and apply it to solve a prototypical inverse problem applied to nonlinear Burgers' PDE which involves calibrating the PDE model parameter to match the given measurements. We also explore an alternative tensor product decomposition which exploits the sparsity/structure of linear system arising from PDE discretization to facilitate the computation of VQLS cost functions.
\end{itemize}

The paper is organized as follows. We start with mathematical preliminaries in~\Cref{sec:not} and introduce the nonlinear PDE constrained optimization problem in~\Cref{sec:PDEopt}. We describe the CL+VQLS procedure in~\Cref{sec:QCquadODE} and develop the nBVQPCO framework extension including an outline of pseudo-code for its implementation in~\Cref{sec:nBVQPCO}. Complexity and error analysis of the proposed framework is covered in~\Cref{sec:erranalysis}. In ~\Cref{sec:num} we formulate the inverse PDE problem in our nBVQPCO framework, apply an alternative tensor product decomposition for computing VQLS cost function, and describe the implementation details and numerical results using the PennyLane library. We conclude in~\Cref{sec:conc} with avenues for future work.  

\section{Mathematical Preliminaries}\label{sec:not}
Let $\Nr = \{1, 2,\dots\}$, $\Rr$, and $\Cr$ be the sets of positive integers,  real numbers, and  complex numbers respectively. We will denote vectors by small bold letters and  matrices by capital bold letters. $\Am^{\dagger}$ and $\Am^{\prime}$ will denote the vector/matrix complex conjugate and vector/matrix transpose, respectively. $Tr(\Am)$ will denote the trace of a matrix. We will represent an identity matrix of size $s \times s$ by $\Id_s$.  

Kronecker product will be denoted by $\otimes$. For any pair of vectors, $\xv\in \Rr^n$ and $\yv \in \Rr^m$, their Kronecker product $\wv\in\Rr^{nm}$ is
\begin{equation}\label{eq:kron}
\wv=\xv\otimes\yv= (x_1y_1, x_1y_2, . . . , x_1y_m, x_2y_1, . . . , x_2y_m, . . . , x_ny_1, . . . , x_ny_m)^\tra.
\end{equation}
Similarly, for $\Am\in\Rr^{m\times n}$ and $\mathbf{B}\in\Rr^{p\times q}$, their Kronecker product is defined as
\begin{equation}\label{eq:kronM}
\mathbf{C}=\Am\otimes\mathbf{B}=\left(
                                 \begin{array}{ccc}
                                   a_{11}\mathbf{B} & \cdots & a_{1n}\mathbf{B} \\
                                   \vdots & \vdots & \vdots \\
                                    a_{m1}\mathbf{B} & \cdots & a_{mn}\mathbf{B} \\
                                 \end{array}
                               \right),
\end{equation}
where, $\mathbf{C}\in \Rr^{mp\times nq}$.  The Kronecker power is a convenient notation to express all possible products of elements
of a vector up to a given order, and is denoted by
\begin{equation}\label{eq:powx}
\xv^{[i]}=\underbrace{\xv\otimes\xv\cdots\otimes\xv}_{i-\textsf{times}},
\end{equation}
for any $i\in\Nr$ with the convention $\xv^{[0]}=1$. Moreover, $\mbox{dim} (\xv^{[i]})= n^i$,  and each component of $\xv^{[i]}$ is of the form $x_1^{\omega_1}x_2^{\omega_2}\cdots x_n^{\omega_n} $ for some multi-index $\mathbf{\omega}\in \Nr^n$  of weight $\sum_{j=1}^n\omega_j=i$.
Similarly, we denote the matrix Kronecker power by,
\begin{equation}\label{eq:powA}
\Ar^{[i]}=\underbrace{\Ar\otimes\Ar\dots\otimes\Ar}_{i-\textsf{times}}.
\end{equation}

The standard inner product between two complex vectors $\mathbf{x},\mathbf{y}\in \Rr^n$ will be denoted by $\la\mathbf{x},\mathbf{y}\ra$. The $l_p$-norm in the Euclidean space $\Rr^n$ will be denoted by $\|\cdot\|_p,p=1,2,\dots,\infty$ and is defined as follows
\begin{equation}\label{eq:normx}
\|\xv\|_p=\left(\sum_{j=1}^n|x_j|^p\right)^{1/p}.
\end{equation}
For the norm of a matrix $\Am \in \Rr^{n\times m}$, we use the induced norm, namely
\begin{equation}\label{eq:normA}
\|\Am\|_p=\max_{\xv \in \Rr^m} \frac{\|\Am \xv\|_p}{\|\xv\|_p}.
\end{equation}
We will use $p=2$, i.e., $l_2$ norm for vectors and spectral norm for matrices unless stated otherwise. The spectral radius of a matrix denoted by $\rho(\Am)$ is defined as 
\begin{equation}\label{eq:specA}
\rho(\Am)=\max\{|\lambda|:\lambda \mbox{ eignenvalues of } \Am\}.
\end{equation}
Trace norm of a matrix $\Am$ is defined as $\|\Am\|_{Tr}=Tr(\sqrt{\Am^*\Am})$ which is the  Schatten $q$-norms with $q=1$.

Let $\mathcal{F}$ be a vector space of real vector valued functions $\uv(t):[0,T]\rightarrow \Rr^n$ defined on $[0,T]$. Then the inner product between $\uv_1,\uv_2\in \mathcal{F}$ is defined as
\begin{equation}\label{eq:funcip}
\la \uv_1,\uv_2\ra_T=\int_{0}^{T}\la \uv_1(t),\uv_2(t)\ra dt.
\end{equation}

We will use standard braket notation, i.e. $|\psiv\ra$ and $\la\psiv|$ in representing the quantum states \cite{nielsen2010quantum}. The inner product between two quantum states $|\psiv\ra$ and $|\phiv\ra$ will be denoted by $\la\psiv|\phiv\ra$. For a vector $\mathbf{x}$, we denote by $|\mathbf{x}\ra=\frac{\mathbf{x}}{\|\mathbf{x}\|}$ as the corresponding quantum state. The trace norm $\rho(\psiv,\phiv)$ between two pure states $|\psiv\ra$ and $|\phiv\ra$ is defined as
\begin{equation}\label{eq:trace}
\rho(\psiv,\phiv)=\frac{1}{2}\| |\phiv\ra \la\phiv| -|\psiv\ra \la\psiv| \|_{Tr}=\sqrt{1-|\la \phiv|\psiv \ra|^2}.
\end{equation}

\section{Variational Quantum Framework for Nonlinear PDE Constrained Optimization}\label{sec:VQNPCO}

\subsection{Nonlinear PDE Constrained Optimization Problem}\label{sec:PDEopt}
We consider a general class of PDE constrained design optimization problems of the form~\cite{de2015numerical}
\begin{eqnarray}
&&\min_{\mathbf{p},\mathbf{u}} C_d(\mathbf{u}, \mathbf{p}) \label{eq: gen_opt}\\
\text{s.t.} &&\mathcal{F}(\mathbf{u},\mathbf{p},t) = 0,\label{eq:pdecons}\\
&&\mathcal{B}(\mathbf{u},\mathbf{p},t) = 0,\label{eq:bcic}\\
&& g_i(\mathbf{p})\leq 0, i=1,\cdots,\nc,
\end{eqnarray}
where $t\in[0,T]$ for given $T\in\Rr$, $\mathbf{p}\in\Rr^{\np}$ is the vector of design variables (e.g., material properties, aerodynamic shape), $C_d$ is the cost function (e.g., heat flux, drag/ lift) and $\mathcal{F}(\mathbf{u}, \mathbf{p},t)$ are the PDE constraints (e.g., conservation laws, constitutive relations) with $\mathbf{u}(\mathbf{x},t;\pv)$ being the solution of the PDE defined as a function of space $\mathbf{x}$ and time $t$ for given parameters $\pv$, $\mathcal{B}(\mathbf{u},\mathbf{p},t)$ capture the constraints coming from the PDE boundary and initial conditions, and $g_i(\mathbf{p}),i=1,\cdots,\nc$ are constraints on the parameters. We make the following assumptions.
\begin{assumption}\label{assump:poly}
When semi-discretized in space, the PDE (\ref{eq:pdecons}) and associated boundary/initial conditions (\ref{eq:bcic}) results in a polynomial ODE of the form
\begin{eqnarray}
\dot{\uv}(t)&=&\F_0(t,\pv)+\F_1(t,\pv)\uv+\F_2(t,\pv)\uv^{[2]}+\cdots+\F_k(t,\pv)\uv^{[k]},\label{eq:gen}\\
\uv(0)&=&\uv_0(\pv)\in\Rr^\nx,\notag
\end{eqnarray}
where, $k\in \Nr$ is the order of the polynomial,  $\uv(t)\in \Rr^\nx, t\in[0,T]$ (with slight about of notation) is an $\nx$-dimensional solution vector and $\F_i(t,\pv)\in\Rr^{\nx\times \nx^i}$ are in general time dependent matrices. We assume that the boundary conditions have already been accounted for in this ODE representation, see~\Cref{sec:invprob} for an example.
\end{assumption}
\begin{assumption}\label{assump:costfunc}
The cost function is a quadratic function of the solution $\uv(t), t\in[0,T]$, i.e.
\begin{equation}
C_d(\uv, \pv)=f(w_1\la\uv,\Hm\uv\ra_T+w_2\la\uv,\hv\ra_T,\pv),\label{eq:cstform}
\end{equation}
where $w_1,w_2\in \Rr$, $\Hm \in \Rr^{\nx \times \nx}$ and $\hv\in \Rr^{\nx}$.
\end{assumption}
Under these assumptions, the optimization problem (\ref{eq: gen_opt}) can be represented as
\begin{eqnarray}
&&\min_{\mathbf{p},\mathbf{u}} C_d(\mathbf{u}, \mathbf{p})\equiv f(w_1\la\uv,\Hm\uv\ra_T+w_2\la\uv,\hv\ra_T,\pv) \label{eq:gen_optrefor}\\
\text{s.t.} && \dot{\uv}(t)=\F_0(t,\pv)+\F_1(t,\pv)\uv+\cdots+\F_k(t,\pv)\uv^{[k]},\label{eq:odecons}\\
&& \uv(0)=\uv_0(\pv),\label{eq:ic}\\
&& g_i(\mathbf{p})\leq 0, i=1,\cdots,\nc. \label{eq:pc}
\end{eqnarray}

\begin{remark}\label{rem:polyjust}
\Cref{assump:poly} is fairly general as several PDEs arising in engineering and science result in polynomial ODEs when semi-discretized in space, see \cite{surana2024efficient} and references therein. Examples include  Euler and Navier-Stokes PDEs and heat equation in fluid dynamics, combustion and weather forecasting; magnetohydrodynamics and Vlasov-Maxwell PDEs in plasma dynamics and astrophysics; wave equation in structural mechanics; Black–Scholes PDE in finance; and reaction-diffusion PDEs in chemistry, biology and ecology to name a few. Furthermore, polynomial ODEs also arise in mechanics, molecular dynamics, chemical kinetics, epidemiology, social dynamics, and biological networks. 
\end{remark}

\begin{remark}\label{rem:quadcost}
\Cref{assump:costfunc} commonly arises in many PDE constrained optimization problems. For instance in inverse problems the cost function is typically taken as a squared norm of difference between the solution and measured variables, which results in a quadratic form, see~\Cref{sec:invprob} for details. Other examples include PDE optimal control where the cost function is a weighted combination of deviation from a reference signal and squared norm of control input, aerodynamic design where the objective is to minimize heat transfer/drag or maximize lift, and computer vision problems such as shape from shading, surface reconstruction from sparse data and optical flow computation where the objective function depends on an appropriate form of squared norm.
\end{remark}

The BVQPCO framework proposed in \cite{surana2024variational} cannot be applied directly to the optimization problem (\ref{eq:gen_optrefor})-(\ref{eq:pc}), as the constraints are nonlinear. To address this we employ CL framework, whereby the polynomial ODEs are transformed into an infinite-dimensional system of linear ODEs and truncated. Since, it is always possible to map the $k$-th degree polynomial ODE (\ref{eq:gen}) to a higher dimensional quadratic polynomial ODE, and then apply CL \cite{surana2024efficient}, without loss of generality we restrict $k=2$ throughout this paper. 

\subsection{Solving Nonlinear ODEs via Carelman Linerization based VQLS}\label{sec:QCquadODE}
Consider a system of inhomogeneous quadratic polynomial ODE (\ref{eq:odecons}) with $k=2$, i.e.
\begin{eqnarray}
\dot{\uv}&=&\F_0(t,\pv)+\F_1(t,\pv)\uv+\F_2(t,\pv)\uv^{[2]}, \label{eq:qdcs}\\
\uv(0)&=&\uv_0\in\Rr^\nx.\notag
\end{eqnarray}
Below we describe steps required to transform solution of these nonlinear ODEs into a form amenable for VQLS, see Fig.~\ref{fig:workflow} for illustration of steps involved. 

\paragraph{Step 1:} For CL we introduce variables $\wv_i=\uv^{[i]}\in \Rr^{\nx^i}, i\in \Nr$, which satisfy
\begin{equation}\label{eq:genz}
\dot{\wv}_i=\sum_{j=0}^{2}\Ar^{i}_{i+j-1}(t,\pv)\wv_{i+j-1},
\end{equation}
where, $\Ar^{i}_{i+j-1}\in\Rr^{\nx^i\times \nx^{i+j-1}}$ is given by
\begin{equation}\label{eq:Adef}
\Ar^{i}_{i+j-1}(t,\pv)=\sum_{p=1}^{i}\overbrace{\Id_{\nx\times \nx}\otimes\cdots\otimes\underbrace{\F_j(t,\pv)}_{p-\textsf{th position}}\otimes\cdots\otimes \Id_{\nx\times \nx}}^{i \textsf{ factors}},
\end{equation}
with $0 \leq j\leq 2$. Note that (\ref{eq:genz}) defines an infinite dimensional system of linear ODE with state $\wv_{\infty}(t)=(\wv_1^\tra(t),\wv_2^\tra(t),\cdots)^\tra $ and initial condition $\wv_{\infty}(0)=(\uv_0^\tra,(\uv^{[2]}_0)^\tra,\cdots)^\tra$, and is known as CL. For any finite $\nt\in \Nr$ define 
\begin{equation}\label{eq:wup}
\wvc(t)=\left(\begin{array}{c}
                                          \wv_1(t) \\
                                          \wv_2(t) \\
                                          \wv_3(t)\\
                                          \vdots \\
                                          \wv_{\nt}(t)
                                        \end{array}\right)\in \Rr^{\nt_c},
\end{equation}
where $\nt_c=\frac{\nx^{\nt+1}-\nx}{\nx-1}$. 
The CL when truncated to order $\nt$, results in a finite system of linear ODEs
\begin{eqnarray}
\dot{\wvh}&=&\Ar_\nt(t,\pv)\wvh+\bv(t,\pv),\label{eq:carlfq}\\
\wvh(0)&=&(\uv_0^\tra,(\uv^{[2]}_0)^\tra,\cdots,(\uv^{[N]}_0)^\tra)^\tra,
\end{eqnarray}
where
\begin{equation}\label{eq:AN2}
\Ar_\nt(t,\pv)=\left(\begin{array}{ccccc}
                \Ar_1^1(t,\pv) & \Ar^1_2(t,\pv) & 0 & \cdots & 0  \\
                \Ar^2_1(t,\pv) & \Ar_2^2(t,\pv) & \Ar^2_3(t,\pv) & \cdots & 0 \\
                0 & \Ar^3_2(t,\pv) & \Ar_3^3(t,\pv) & \Ar^3_4 (t,\pv)& 0  \\
                \vdots & \vdots & \vdots & \vdots & \vdots  \\
                0 & 0 & 0 & \Ar^{N}_{N-1}(t,\pv) & \Ar^{N}_N(t,\pv)\\
              \end{array}
            \right),
\end{equation}
\begin{equation}\label{eq:what}
\wvh(t)=\left(\begin{array}{c}
                                          \wvh_1(t) \\
                                          \wvh_2(t) \\
                                          \wvh_3(t)\\
                                          \vdots \\
                                          \wvh_{\nt}(t)
                                        \end{array}\right)\in \Rr^{\nt_c}, \quad \bv(t,\pv)=\left(\begin{array}{c}
                                          \F_0(t,\pv) \\
                                          0 \\
                                          0\\
                                          \vdots \\
                                          0
                                        \end{array}\right)
\end{equation}
Note that $\wvh(t)$ approximates $\wvc(t)$ as defined in~\Cref{eq:wup} 
and the error incurred as a function of the truncation level $\nt$ is described by the Lemma \ref{lem:cltruc}.

%

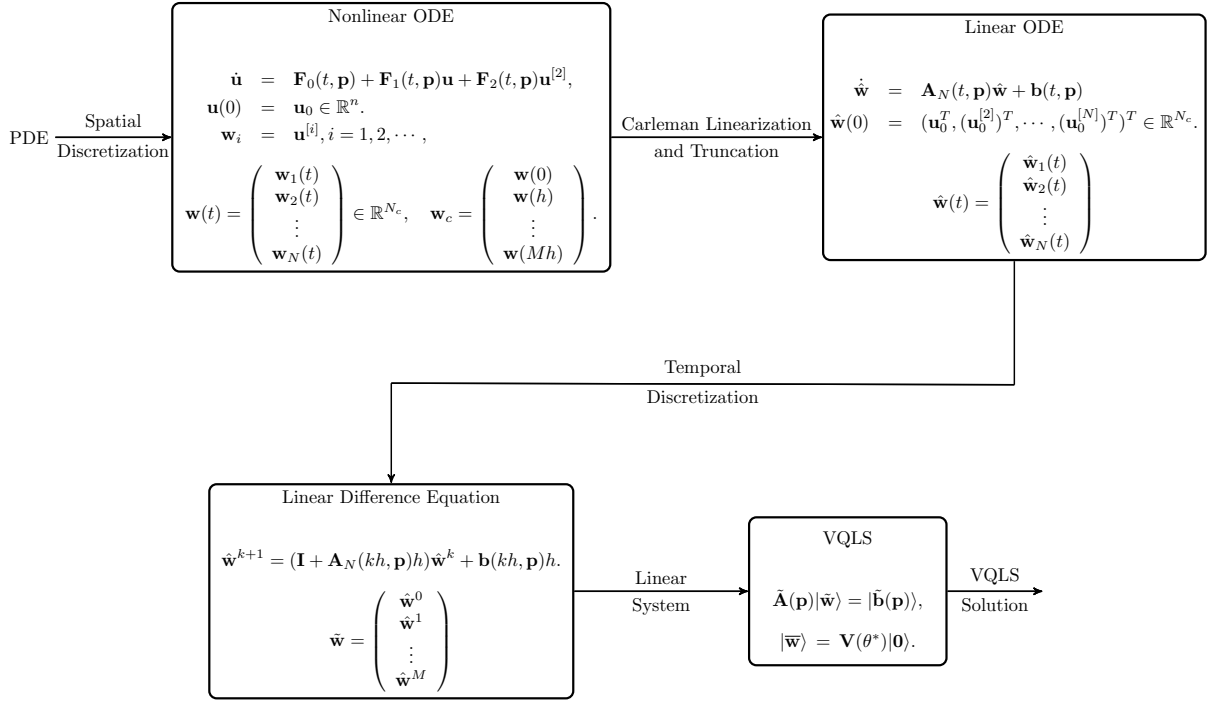
\begin{figure}
\centering
\scalebox{0.7}{
\begin{tikzpicture}
[node distance=4cm,]
\node[chain,text width= 23em] (intro){Nonlinear ODE\\ 
\begin{eqnarray*}
\dot{\uv}&=&\F_0(t,\pv)+\F_1(t,\pv)\uv+\F_2(t,\pv)\uv^{[2]},\\
\uv(0)&=&\uv_0\in\Rr^\nx.\notag\\
\wv_i&=&\uv^{[i]}, i=1,2,\cdots,
\end{eqnarray*}
\begin{equation*}
\wvc(t)=\left(\begin{array}{c}
                                          \wv_1(t) \\
                                          \wv_2(t) \\
                                          \vdots \\
                                          \wv_{\nt}(t)
                                        \end{array}\right)\in \Rr^{\nt_c},\quad \wv_c=\left(
        \begin{array}{c}
         \wvc(0) \\
         \wvc(h) \\
          \vdots \\
          \wvc(\tsteps h) \\
        \end{array}
      \right).
\end{equation*}
};

\node[chain,text width= 20em] 
(carleman) [right=of intro] {Linear ODE\\ 
\begin{eqnarray*}
\dot{\wvh}&=&\Ar_\nt(t,\pv)\wvh+\bv(t,\pv)\\
\wvh(0)&=&(\uv_0^\tra,(\uv^{[2]}_0)^\tra,\cdots,(\uv^{[N]}_0)^\tra)^\tra\in \Rr^{\nt_c}.
\end{eqnarray*}
\begin{equation*}
\wvh(t)=\left(\begin{array}{c}
                                          \wvh_1(t) \\
                                          \wvh_2(t) \\
                                          \vdots \\
                                          \wvh_{\nt}(t)
                                        \end{array}\right)
\end{equation*}
};

\node[chain, text width= 19em] (euler) [below=of intro] {Linear Difference Equation\\ 
\begin{equation*}
\wvh^{k+1}=(\Id+\Ar_N(kh,\pv)h)\wvh^k+\bv(kh,\pv)h.
\end{equation*}
\begin{equation*}
\tilde{\wv}=\left( \begin{array}{c}
         \wvh^0 \\
         \wvh^1 \\
         \vdots \\
         \wvh^\tsteps \\
       \end{array} \right)
\end{equation*}
};
\node[chain, text width= 10em, shift=(left:2em)] (vqls)  [right=of euler]    {VQLS\\ 
\begin{equation*}
\tilde{\Am}(\pv)|\tilde{\wv}\ra=|\tilde{\bv}(\pv)\ra,
\end{equation*}
$|\wvo\ra=\Vm(\thetav^*)|\Zev\ra$.};

\node (foo) [left=2.2cm of intro] {PDE};
\node (foo1) [below=2.1cm of carleman] {};
\node (foo2) [above=1.9cm of euler] {};
\node (end) [right=1.8cm of vqls] {};

\draw [thick, ->]  (foo)  to node[above] {Spatial} (intro);
\draw [thick, ->]  (foo)  to node[below] {Discretization} (intro);
\draw [thick, ->] (intro) to node[above] {Carleman Linearization} (carleman);
\draw [thick, ->] (intro) to node[below] {and Truncation} (carleman);

\draw [thick, -]  (carleman.south)  to node[above] {} (foo1.south);
\draw [thick, -]  (foo1.south)  to node[above] {Temporal} (foo2.south);
\draw [thick, -]  (foo1.south)  to node[below] {Discretization} (foo2.south);
\draw [thick, ->]  (foo2.south)  to node[above] {} (euler.north);

\draw [thick, ->]  (euler)  to node[above] {Linear} (vqls);
\draw [thick, ->]  (euler)  to node[below] {System} (vqls);
\draw [thick, ->]  (vqls)  to node[above] {VQLS} (end);
\draw [thick, ->]  (vqls)  to node[below] {Solution} (end);
\end{tikzpicture}
}
\caption{Illustration of steps involved in transforming a quadratic polynomial ODE~(\ref{eq:qdcs}) into a system of linear algebraic equations~(\ref{eq:linq}) amenable for VQLS. This involves CL and truncation resulting in a system of linear ODEs, time discretization of linear ODEs to generate a system of linear difference equations, and expressing difference equation into a single system of linear algebraic equations with normalized solution and right hand side. While an explicit time discretization is shown in the figure, one can use implicit Euler discretization or other schemes as well.}\label{fig:workflow}
\end{figure}

\paragraph{Forward Euler: }
\paragraph{Step 2:} Applying the forward Euler method to the system (\ref{eq:carlfq}) with step size $h=T/\tsteps$, $M\in\Nr$, results in a set of difference equations
\begin{equation}\label{eq:n_foreuler}
\wvh^{k+1}-(\Id+\Ar_N(kh,\pv)h)\wvh^k=\bv(kh,\pv)h,
\end{equation}
where $\wvh^k$ approximates $\wvh(kh)$ as defined in~\Cref{eq:what} for $k=\{0,\cdots,\tsteps-1\}$ with $\wvh^0=\wvh(0)$. The error introduced by Euler discretization is characterized by the Lemma \ref{lem:euler}.

\paragraph{Step 3:}  The iterative system (\ref{eq:n_foreuler}) can be expressed as a single system of linear equations
\begin{equation}\label{eq:n_forlin}
\overbrace{\left(
  \begin{array}{cccc}
    \Id & 0 & 0 & \cdots \\
    -[\Id+\Ar_N(0,\pv) h] & \Id & 0 & \cdots \\
    0 & 0 & \ddots & \ddots \\
    0 & 0 &  -[\Id+\Ar_N((\tsteps-1)h,\pv) h] & \Id \\
  \end{array}
\right)}^{\Amt(\pv)}\overbrace{\left(\begin{array}{c}
         \wvh^0 \\
         \wvh^1 \\
         \vdots \\
         \wvh^\tsteps \\
       \end{array}\right)}^{\tilde{\wv}}
=\overbrace{\left(\begin{array}{c}
         \wvh(0) \\
         h\bv(0,\pv) \\
         \vdots \\
         h\bv((\tsteps-1)h,\pv) \\
       \end{array}\right)}^{\bvt(\pv)}.
\end{equation}
\paragraph{Backward Euler: }
\paragraph{Step 2:} Similarly, applying the backward Euler method to the system (\ref{eq:carlfq}) with step size $h=T/\tsteps$ yields
\begin{equation}\label{eq:n_backeuler}
\left(\Id - \Ar_N(kh,\pv) h\right)\wvh^{k+1}-\wvh^k=\bv(kh,\pv)h,
\end{equation}
where, $\wvh^k$ approximates $\wvh(kh)$ for $k=\{0,\cdots,\tsteps-1\}$. 

\paragraph{Step 3:}  The iterative system (\ref{eq:n_backeuler}) can similarly be expressed as a single system of linear equations
\begin{equation}\label{eq:n_backlin}
\overbrace{\left(
  \begin{array}{cccc}
    \Id & 0 & 0 & \cdots \\
    -\Id & [\Id-\Ar_N(0,\pv) h] & 0 & \cdots \\
    0 & 0 & \ddots & \ddots \\
    0 & 0 &  -\Id & [\Id-\Ar_N((\tsteps-1)h,\pv) h] \\
  \end{array}
\right)}^{\tilde{\Ar}(\pv)}\overbrace{\left(\begin{array}{c}
         \wvh^0 \\
         \wvh^1 \\
         \vdots \\
         \wvh^\tsteps \\
       \end{array}\right)}^{\tilde{\wv}}
=\overbrace{\left(\begin{array}{c}
         \wvh_{in} \\
         h\bv(0,\pv) \\
         \vdots \\
         h\bv((\tsteps-1)h,\pv) \\
       \end{array}\right)}^{\tilde{\bv}(\pv)}.
\end{equation}

In the VQLS framework the linear system obtained using forward Euler (\Cref{eq:n_forlin}) or backward Euler (\Cref{eq:n_backlin}) is further transformed into the form
\begin{equation}
\tilde{\Am}(\pv)|\tilde{\wv}\ra=|\tilde{\bv}(\pv)\ra,\label{eq:linq}
\end{equation}
where, $|\tilde{\bv}\ra=\tilde{\bv}/\|\tilde{\bv}\|$ and $|\tilde{\wv}\ra=\tilde{\wv}/\|\tilde{\wv}\|$ are the normalized vectors.
The VLQS algorithm then optimizes the parameter $\thetav$ of the ansatz $\Vm(\thetav)$ such that
\begin{equation*}
\tilde{\Am}(\pv)\Vm(\thetav)|\Zev\ra=|\tilde{\bv}(\pv)\ra.
\end{equation*}
The optimal parameter $\thetav^*$ then prepares a solution $|\wvo\ra=\Vm(\thetav^*)|\Zev\ra$ which is an approximation to $|\tilde{\wv}\ra$. The approximation error is characterized by the Lemma \ref{lem:vqlserr}. 

Different cost functions have been proposed for the VQLS algorithm~\cite{VQLS}. This includes global cost function $C_{ug}(\thetav)$ and its normalized version $C_{g}(\thetav)$
\begin{equation}\label{eq:costglob}
C_{ug}(\thetav)=Tr(|\phiv\ra\la\phiv|(\In-|\bvt\ra\la\bvt|))=\la \psiv|\mathbf{H}_{g}|\psiv\ra,\quad C_{g}(\thetav)=\frac{C_{ug}}{\la\phiv|\phiv\ra}=1-\frac{|\la\bvt|\phiv\ra|^2}{\la\phiv|\phiv\ra}
\end{equation}
where $|\phiv\ra=\Amt|\psiv(\thetav)\ra $, $|\psiv(\thetav)\ra=\Vm(\thetav)|\Zev\ra$ and
\begin{equation}\label{eq:Hg}
\mathbf{H}_{g}=\Amt^*\Um_b(\In-|\Zev\ra\la\Zev|)(\Um_b)^*\Amt,
\end{equation}
and similarly unnormalized and normalized local cost functions
\begin{equation}\label{eq:costloc}
C_{ul}(\thetav)=\la \psiv|\mathbf{H}_{l}|\psiv\ra,\quad C_{l}(\thetav)=\frac{C_{ul}}{\la\phiv|\phiv\ra},
\end{equation}
respectively, where the local Hamiltonian $\mathbf{H}_l$ is defined as
\begin{equation}
\mathbf{H}_l=\Amt^*\Um_b\left(\In-\frac{1}{n}\sum_{i=1}^{n}|0_k\ra\la 0_k|\otimes \In_{\tilde{k}}\right)(\Um_b)^*\Amt, \label{eq:Hl}
\end{equation}
with $|0_k\ra$ being a zero state on qubit $k$,  $\In_{\tilde{k}}$ being identity on all qubits except the qubit $k$, and the unitary $\Um_b$ prepares $|\bvt\ra=\Um_b|\Zev\ra$. Computation of these cost functions can be accomplished via Hadamard test circuit and its variations (see~\cite{VQLS} for the details) and rely on linear combination of unitaries (LCU) decomposition of the matrix $\Amt$ as 
\begin{equation}\label{eq:LCU}
\Amt=\sum_{i=1}^{n_l}\alpha_i \Amt_i,
\end{equation}
where $\Amt_i$ are unitary operators and $\alpha_i$ are complex scalar coefficients. One popular approach for unitary operators is based on the Pauli basis formed from the identity $\In$ and the Pauli gates $\Xm$, $\Ym$ and $\Zm$. A matrix $\Amt$ with the size $2^n \times 2^n$ can then be written as a linear combination of elements selected from the basis set
\begin{equation*}
\mathcal{P}_n=\{\Pm_1\otimes \Pm_2 \dots \otimes \Pm_n:\Pm_i\in\{\In,\Xm,\Ym,\Zm\}\}.
\end{equation*}
In this basis, each $\Amt_i\in \mathcal{P}_n$ and its corresponding coefficient $\alpha_i$ can be determined via different numerical approaches~\cite{lcu}.  More recently, alternative tensor product decomposition has been proposed which uses different basis elements to better exploit the underlying structure and sparsity of matrices and results in more efficient decomposition~\cite{liu2021variational,gnanasekaran2024efficient}. We have used this alternate decomposition in the numerical study in~\Cref{sec:num}.

\subsection{Nonlinear BVQPCO}\label{sec:nBVQPCO}
To solve the optimization problem (\ref{eq:gen_optrefor})-(\ref{eq:pc}) in a variational quantum framework, we extend the BVQPCO framework to nonlinear setting, where the outer optimization level iteratively selects $\pv^k$ using a classical black box optimizer based on the cost function evaluated using solution of CL truncated system obtained via VQLS. To do so, let $\Km_0 \in \Rr^{\nx \times \nt_c(\tsteps+1)}$ and $\Km_f\in \Rr^{\nx(\tsteps+1) \times \nt_c(\tsteps+1)}$ be matrices such that
\begin{equation}
\uv_0=\wvh_1^0=\Km_0 \tilde{\wv},\label{eq:}
\end{equation}
and
\begin{equation}
\left(
  \begin{array}{c}
    \wvh_1^0  \\
    \wvh_1^1\\
     \vdots\\
     \wvh_1^M \\
  \end{array}
\right)
=\Km_f \tilde{\wv},
\end{equation}
where, $\wvh^k_1$ is the first $\nx$ components extracted from the vectors $\wvh^k$ for $k=0,\cdots,\tsteps$.  Thus, the two terms in  the cost function \Cref{eq:cstform} can be approximated via the Riemann sum as
\begin{eqnarray}
\la\uv,\Hm\uv\ra_T   &\approx & \sum_{k=0}^{\tsteps} h\la\uv(kh),\Hm\uv(kh)\ra\approx h\la \Km_f\tilde{\wv}, (\Id_{M+1}\otimes\Hm) \Km_f \tilde{\wv}\ra,\\
\la\uv,\hv\ra_T  & \approx &  h \la \mathbf{1}_{\tsteps+1}\otimes\hv,\Km_f \tilde{\wv} \ra,
\end{eqnarray}
where $\mathbf{1}_{\tsteps+1}\in \Rr^{\tsteps+1}$ is a vector of ones. Furthermore
\begin{eqnarray}
\la \Km_f\tilde{\wv}, (\Id_{M+1}\otimes\Hm) \Km_f \tilde{\wv}\ra &=& \|\uv_0\|^2\frac{\la \Km_f\tilde{\wv}, (\Id_{M+1}\otimes\Hm) \Km_f \tilde{\wv}\ra}{\la \Km_0 \tilde{\wv}, \Km_0 \tilde{\wv}\ra},\notag\\
&=& \|\uv_0\|^2\frac{\la  \tilde{\wv} |\Km_f^\tra(\Id_{M+1}\otimes\Hm) \Km_f |\tilde{\wv}\ra}{\la  \tilde{\wv} |\Km^\tra_0 \Km_0| \tilde{\wv}\ra},
\end{eqnarray}
and similarly 
\begin{equation}
\la \mathbf{1}_{\tsteps+1}\otimes\hv,\Km_f \tilde{\wv} \ra = \|\uv_0\| \frac{\la \mathbf{1}_{\tsteps+1}\otimes\hv,\Km_f \tilde{\wv} \ra}{\sqrt{\la \Km_0 \tilde{\wv}, \Km_0 \tilde{\wv}\ra}}=\|\uv_0\| \frac{\la (\mathbf{1}_{\tsteps+1}\otimes\hv)^\tra\Km_f |\tilde{\wv} \ra}{\sqrt{\la  \tilde{\wv} |\Km^\tra_0 \Km_0| \tilde{\wv}\ra}}.
\end{equation}
Thus, the optimization problem (\ref{eq:gen_optrefor})-(\ref{eq:pc})  can be expressed in a variational quantum form as
\begin{eqnarray}
\min_{\pv,\thetav} && f\left(w_1h\|\uv_0\|^2\frac{\la  \Zev|\Vm^{\dagger}(\thetav) \Phim_f\Vm(\thetav)|\Zev\ra}{\la  \Zev |\Vm^{\dagger}(\thetav) \Phim_0\Vm(\thetav) | \Zev\ra}+w_2h\|\uv_0\| \frac{\la  \phiv|\Vm(\thetav)|\Zev\ra}{\sqrt{\la  \Zev |\Vm^{\dagger}(\thetav) \Phim_0\Vm(\thetav) | \Zev\ra}},\pv\right),\label{eq:transopt1}\\
\text{s.t.} && \tilde{\Am}(\pv)\Vm(\thetav)|\Zev\ra=|\tilde{\bv}(\pv)\ra,\label{eq:transopt2}\\
 && g_i(\pv)\leq 0,i=1,\cdots,\nc,\label{eq:transopt3}
\end{eqnarray}
where
\begin{equation}\label{eq:consts}
\Phim_f=\Km_f^\tra(\Id_{M+1}\otimes\Hm) \Km_f, \quad \Phim_0=\Km_0^\tra\Km_0, \quad \phiv=(\mathbf{1}_{\tsteps+1}\otimes\hv)^\tra\Km_f.
\end{equation}

\begin{algorithm}[hbt!]
\begin{algorithmic}[1]
\STATEx Input: Semi-discretized ODE in the form (\ref{eq:qdcs}), cost function (\ref{eq:cstform}), $\{g_i(\pv)\}_{i=1}^{\nc}$, CL truncation level $\nt$, $\Vm(\thetav)$,$\gamma$,$n_{sh}$ and $\epsilon$
\STATEx Output: Optimal parameters: $\pv^*$
\STATE Initialize $k=0$ and $\pv^0=\pv_{in}$.
\STATE Apply CL to (\ref{eq:qdcs}) with truncation level $\nt$ and determine $\Amt(\pv)$, $\bvt(\pv)$ in~\Cref{eq:linq} and associated $\phiv$,$\Phim_f,\Phim_0$ as defined in~\Cref{eq:consts}. 
\STATE Determine the unitary $\Um_{\phiv}$ to prepare $|\phiv\ra$, and LCU decomposition for $\Phim_f=\sum_{i=1}^{n_f}\alpha_{fi}\Phim_{fi}$ and $\Phim_0=\sum_{i=1}^{n_0}\alpha_{0i}\Phim_{0i}$. \label{algoline:prep}
\WHILE{stopping criteria not met}
\STATE Compute LCU $\{\alpha_i(\pv^k),\Amt_i(\pv^k)\}_{i=1}^{n_k}$ of $\Amt(\pv^k)$, and determine an unitary $\Um_b^{k}$ such that $|\bvt(\pv^k)\ra=\Um_b^k|\Zev\ra$.\label{algoline:LCU}
\STATE Compute $\thetav_*^k=VQLS(\{\alpha_i(\pv^k),\Amt_i(\pv^k)\}_{i=1}^{n_k},\Um_b^k,\Vm(\thetav),\gamma,n_{sh})$. \label{algoline:VQLS}
\STATE Let $|\psiv(\thetav_*^k)\ra=\Vm(\thetav_*^k)|\Zev\ra$. Compute $\la\phiv|\psiv(\thetav_*^k)\ra$, $\la\psiv(\thetav_*^k)|\Phim_f|\psiv(\thetav_*^k)\ra$ and $\la\psiv(\thetav_*^k)|\Phim_0|\psiv(\thetav_*^k)\ra$ using associated quantum circuits, and evaluate the cost function~(\ref{eq:transopt1}). \label{algoline:swap}
\STATE Use classical black box optimizer to select next $\pv^{k+1}$ subject to constraints $\{g_i(\pv)\}_{i=1}^{\nc}$.  \label{algoline:opt}
\STATE Check the stopping criterion, e.g., whether $S(\pv^{k},\pv^{k+1})\leq \epsilon$. \label{algoline:conv}
\STATE $k\leftarrow k+1$
\ENDWHILE
\STATE Return $\pv^k$
\end{algorithmic}
\caption{nBVQPCO Algorithm.} \label{algo:nbilevel}
\end{algorithm}

\begin{figure}[!htbp]
\begin{center}
\includegraphics[scale=0.87]{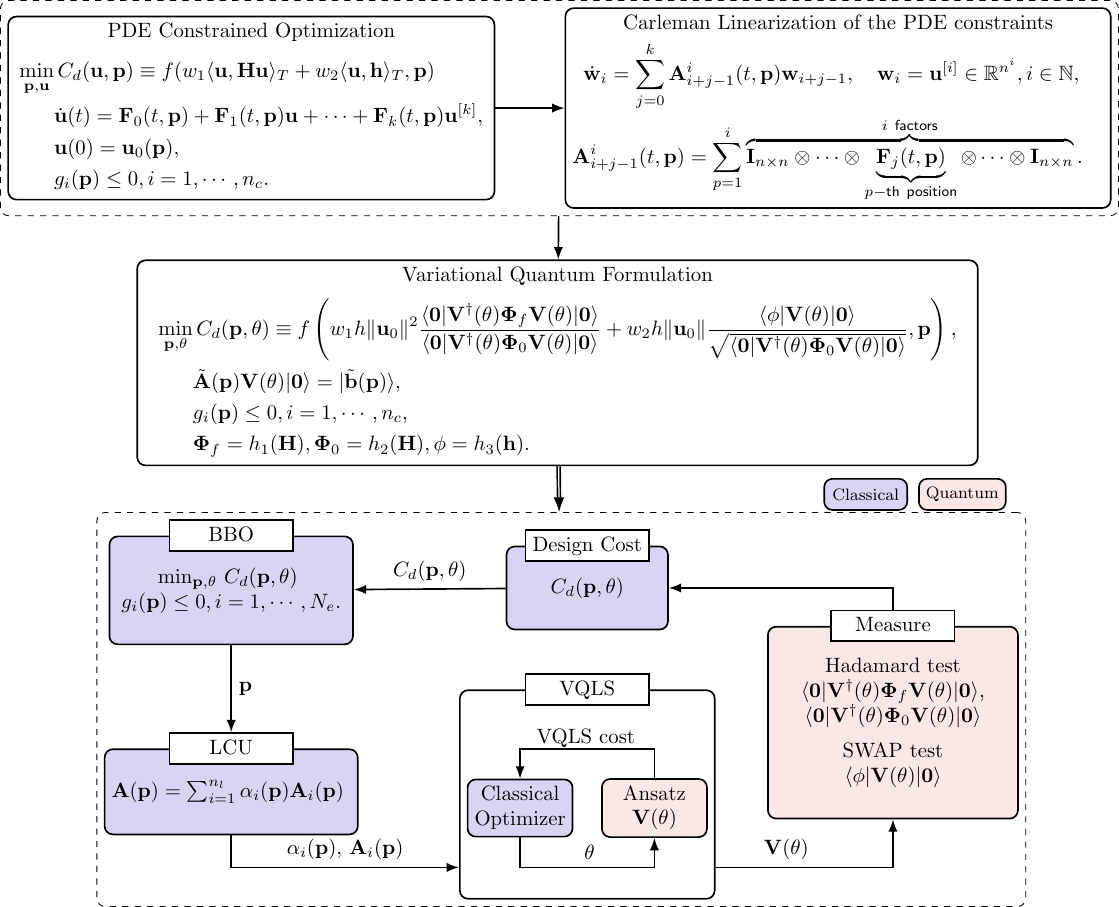}
\caption{Top: Schematic showing transformation of a nonlinear PDE constrained optimization problem into a variational quantum form. Bottom: Flow diagram of the nBVQPCO framework. VQLS uses an inner level optimization to solve the linear system constraints, arising from the discretization of the underlying PDEs, for given design parameters, and evaluate the quantities related to design cost/objective function. A black box optimizer is used for the outer level optimization to select next set of parameters values based on the evaluated design cost.}
\label{fig:nBVQPCO}
\end{center}
\end{figure}

The pseudo code for nBVQPCO is summarized in the~\Cref{algo:nbilevel} and ~\Cref{fig:nBVQPCO} shows the overall flow diagram. Several remarks follow:
\begin{itemize}
    \item Note that VQLS provides solution of the given linear system only upto a normalization constant. By reformulating the cost function expression as discussed above one can make the cost function evaluation independent of the normalization constant.

   \item  The inputs for VQLS in the line (\ref{algoline:VQLS}) includes the linear combination of unitaries (LCU) decomposition of $\Amt(\pv)$, unitary $\Um_b^k$ which prepares $|\bvt(\pv^k)\ra$,  $\Vm(\thetav)$ the selected ansatz, $\gamma$ the stopping threshold and $n_{sh}$ the number of shots used in VQLS cost function evaluation. See~\Cref{sec:num} for the details.  From the relations (\ref{eq:errbounds}), one can select $\gamma$ such that, for instance, for $C_{g}\leq \gamma$, results in a VLQS approximation error $\epsilon\leq \kappa \sqrt{\gamma \log N}$, see~\Cref{lem:vqlserr}.

  \item For LCU decomposition a popular approach is to use Pauli basis as discussed above. We propose to employ a more efficient tensor product decomposition~\cite{gnanasekaran2024efficient}, which exploits the underlying structure and sparsity of matrices such as arising from PDE discretizations, see ~\Cref{sec:num}.

  \item Depending on how $\Amt$ depends on the parameters $\pv$, a parameter dependent LCU decomposition $\{\alpha_i(\pv),\Am_i\}$ can be pre computed once, thus saving computational effort, see~\Cref{sec:invprob} for an example. 
 
   \item For computing $\la\phiv|\psiv(\thetav_*^k)\ra$ in line \ref{algoline:swap} one can use the unitary $\Um_{\phiv}$ as computed in line (\ref{algoline:prep}) and quantum circuit for the SWAP test. Similarly, given the LCU decompositions of $\Phim_f$ and $\Phim_0$ as computed in line (\ref{algoline:prep}) one can express 
       \begin{eqnarray*}
         \la\psiv(\thetav_*^k)|\Phim_f|\psiv(\thetav_*^k)\ra &=& \sum_{i=1}^{n_f}\alpha_{fi}\la\psiv(\thetav_*^k)|\Phim_{fi}|\psiv(\thetav_*^k)\ra,\\
         \la\psiv(\thetav_*^k)|\Phim_0|\psiv(\thetav_*^k)\ra &=&  \sum_{i=1}^{n_0}\alpha_{fi}\la\psiv(\thetav_*^k)|\Phim_{0i}|\psiv(\thetav_*^k)\ra,
       \end{eqnarray*}
       and compute each term in the above sums using quantum circuit associated with the Hadamard test. 
   
   \item For the outer optimization one can utilize any global black box optimization method \cite{surana2024variational}. For instance, one can sample the optimization variables $\pv$ to generate a set of predetermined grid points, evaluate the cost function at those points and select the grid point with the minimum cost. Alternatively, one can use more adaptive techniques such as Bayesian optimization (BO) which is a sequential design strategy for global optimization of black-box functions and uses exploration/exploitation trade off to find optimal solution with minimum number of function calls.

  \item For convergence, line \ref{algoline:conv} uses a step size tolerance, i.e., $S(\pv^{k},\pv^{k+1})=\|\pv^{k+1}-\pv^{k}\|$. However, other conditions can be employed, such as functional tolerance, i.e. the algorithm is terminated when the change in design cost $|C_d^{k+1}-C_d^k|$ is within a specified tolerance $\epsilon$.
   
  \item In the \Cref{sec:erranalysis} we provide a detailed  computational error analysis for CL+VQLS based solution of polynomial ODEs. Since, there are no theoretical results available for query complexity of VQLS, we employ empirically known results along with our rigorous CL+VQLS error analysis to assess potential advantage of the nBVQPCO framework over classical methods.
 

\end{itemize}

\section{Computational Complexity and Error Analysis}\label{sec:erranalysis}
Consider a system of inhomogeneous quadratic polynomial ODEs (\ref{eq:qdcs}) 
\begin{eqnarray}
\dot{\uv}&=&\F_0(t)+\F_1\uv+\F_2\uv^{[2]},\label{eq:qdc}\\
\uv(0)&=&\uv_0\in\Rr^\nx,\notag
\end{eqnarray}
where we have dropped explicit dependence on the parameters $\pv$ and time $t$. Associated with above system, let $\wvc,\wvh$ and $\tilde{\wv}$ be as defined in the Eqns. (\ref{eq:wup}), (\ref{eq:what}) and (\ref{eq:n_forlin}), and let
\begin{equation}\label{eq:wc}
\wv_c=\left(
        \begin{array}{c}
         \wvc(0) \\
         \wvc(h) \\
          \vdots \\
          \wvc(\tsteps h) \\
        \end{array}
      \right).
\end{equation}
See also Fig.~\ref{fig:workflow}, which illustrates the relationship between these different variables. We make following assumptions. 

\begin{assumption}\label{assum1}
For the system (\ref{eq:qdc})
\begin{itemize}
  \item $\F_1,\F_2$ be time independent matrices.
  \item Spectral norms $\|\F_1\|, \|\F_2\|$ and $\|\F_0\|=\max_{t\in[0,T]}\|\F_0(t)\|$ are known, and $\max_{t\in[0,T]}\|\dot{\F}_0\|<\infty$
  \item $\F_1$ be diagonalizable with eigenvalues $\lam_j$ of $\F_1$ satisfying $\textsf{Re}(\lam_n)\leq\cdots\leq \textsf{Re}(\lam_1)<0$.
  \item Let
    \begin{equation}\label{eq:R2}
    R_2=\frac{1}{\lvert \textsf{Re}(\lam_1)\rvert }(\|\uv_0\| \| \F_2\|+\frac{\|\F_0\|}{\|\uv_0\|})<1.
    \end{equation}
\end{itemize}
\end{assumption}

\begin{remark}\label{rem1}
Under the condition (\ref{eq:R2}), system (\ref{eq:qdc}) can be rescaled $\uv\rightarrow \frac{\overline{\uv}}{\gamma }$ to (see Appendix in \cite{liu2021efficient} for details)
\begin{equation}\label{eq:qdct}
\dot{\overline{\uv}}=\overline{\F}_0+\overline{\F}_1\overline{\uv}+\overline{\F}_2\overline{\uv}^{[2]},
\end{equation}
such that the following relations hold
\begin{eqnarray}
   \|\overline{\F}_2\|+\|\overline{\F}_0\|&<& \lvert \textsf{Re}(\lam_1)\rvert , \label{eq:condt1}\\
   \|\overline{\uv}(0)\| &<& 1,\label{eq:condt2}
\end{eqnarray}
where, $\overline{\F}_0=\gamma \F_0$, $\overline{\F}_1=\F_1$, $\overline{\F}_2=\frac{1}{\gamma}\F_2$  and $\gamma$ satisfies, 
\begin{equation}\label{eq:gamma}
\gamma=\frac{1}{\sqrt{\|\uv(0)\|r_{+}}}, \quad \text{with} \quad r_{\pm}=\frac{-\textsf{Re}(\lam_1)\pm\sqrt{(\textsf{Re}(\lam_1))^2-4\|\F_2\|\|\F_0\|}}{2\|\F_2\|}.
\end{equation}
Note that under this rescaling, the value of $R_2$ in (\ref{eq:R2}) remains unchanged.
\end{remark}

Under the Assumptions \ref{assum1}, Lemmas~\ref{lem:cltruc}-\ref{lem:kappa} hold, see \cite{liu2021efficient} for the details.
\begin{lemma}\label{lem:cltruc}
The error $\etav_j(t)=\wv_j(t)-\wvh_j(t)$ for $j\in\{1,2,\cdots,\nt\}$ is bounded as follows
\begin{equation}\label{eq:carlerr}
\|\etav_j(t)\|\leq \|\etav(t)\|\leq t \nt \|\F_2\| \|\uv_{0}\|^{\nt+1},
\end{equation}
where, $\etav(t)=(\etav_1^\tra, \etav_2^\tra,\cdots,\etav_N^\tra)^\tra$.
\end{lemma}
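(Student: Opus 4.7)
The plan is to set up a forced linear ODE for the truncation error and then apply Duhamel's formula, leveraging the dissipativity built into Assumption~\ref{assum1}. I would begin by subtracting the truncated CL system~(\ref{eq:carlfq}) from the infinite CL system~(\ref{eq:genz}). Since the initial conditions agree, i.e., $\wv_j(0)=\uv_0^{[j]}=\wvh_j(0)$ for every $j\leq \nt$, all of $\etav(0)$ vanishes. For $j<\nt$ the dynamics of $\wv_j$ and $\wvh_j$ involve the same blocks $\Ar^{j}_{j-1},\Ar^{j}_{j},\Ar^{j}_{j+1}$, so $\dot{\etav}_j$ depends only on $\etav_{j-1},\etav_j,\etav_{j+1}$ with the same coefficients. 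For $j=\nt$, however, the truncated system drops the coupling to $\wv_{\nt+1}$, which appears as an inhomogeneous forcing term in the error equation. Stacking into $\etav$, this yields $\dot{\etav}=\Ar_\nt\etav+\bv_{\text{err}}(t)$ with $\bv_{\text{err}}(t)=(0,\ldots,0,\Ar^{\nt}_{\nt+1}\wv_{\nt+1}(t))^{\tra}$, and Duhamel's principle then gives
\[
\etav(t)=\int_0^t e^{\Ar_\nt(t-s)}\bv_{\text{err}}(s)\,ds.
\]

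Next I would bound each factor appearing in this integral. The operator norm satisfies $\|\Ar^{\nt}_{\nt+1}\|\leq \nt\|\F_2\|$ directly from the Kronecker-sum definition~(\ref{eq:Adef}), since it is a sum of $\nt$ terms each of the form $\Id\otimes\cdots\otimes\F_2\otimes\cdots\otimes\Id$, which has spectral norm equal to $\|\F_2\|$. The forcing magnitude is $\|\wv_{\nt+1}(s)\|=\|\uv(s)^{[\nt+1]}\|=\|\uv(s)\|^{\nt+1}$, so I need a uniform bound on $\|\uv(s)\|$. Under Assumption~\ref{assum1}, and in particular the condition $R_2<1$ combined with $\textsf{Re}(\lam_1)<0$, a standard energy estimate (as carried out in the appendix of \cite{liu2021efficient}) shows that $\|\uv(s)\|\leq \|\uv_0\|$ for all $s\in[0,T]$.

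The crux of the argument is to show that the propagator of the truncated Carleman operator is non-expansive, i.e., $\|e^{\Ar_\nt(t-s)}\|\leq 1$. This reduces to showing that the symmetric part of $\Ar_\nt$ is negative semidefinite, which follows from the fact that the diagonal blocks $\Ar^{i}_{i}$ inherit the spectral location of $\F_1$ through their Kronecker-sum structure while the off-diagonal couplings, controlled via $R_2<1$, are dominated by the diagonal dissipation. Putting these ingredients together,
\[
\|\etav(t)\|\leq \int_0^t \|e^{\Ar_\nt(t-s)}\|\,\|\Ar^{\nt}_{\nt+1}\|\,\|\wv_{\nt+1}(s)\|\,ds \leq t\,\nt\,\|\F_2\|\,\|\uv_0\|^{\nt+1},
\]
and the componentwise inequality $\|\etav_j(t)\|\leq \|\etav(t)\|$ is immediate from the block structure of $\etav$.

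The main obstacle is establishing the non-expansive propagator bound $\|e^{\Ar_\nt(t-s)}\|\leq 1$ (equivalently, that $\Ar_\nt$ is a dissipative operator in the appropriate sense). This is the substantive estimate in \cite{liu2021efficient} and is exactly where the assumption $R_2<1$ enters non-trivially; the remaining pieces are then routine norm and Kronecker-power manipulations. If one only sought a weaker bound of the form $e^{\|\Ar_\nt\|(t-s)}$, one would incur exponential growth in $\nt$ which would destroy the usefulness of the estimate, so the dissipativity argument is essential rather than cosmetic.
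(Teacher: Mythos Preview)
The paper does not give its own proof of this lemma; it simply states (just before Lemma~\ref{lem:cltruc}) that ``Under the Assumptions~\ref{assum1}, Lemmas~\ref{lem:cltruc}--\ref{lem:kappa} hold, see \cite{liu2021efficient} for the details.'' Your outline is essentially the argument of \cite{liu2021efficient}: derive the linear error ODE $\dot\etav=\Ar_\nt\etav+\bv_{\text{err}}$ with forcing supported only in the last block, apply Duhamel, bound $\|\Ar^{\nt}_{\nt+1}\|\le \nt\|\F_2\|$ from the Kronecker-sum form, use the uniform-in-time bound $\|\uv(s)\|\le\|\uv_0\|$, and then invoke contractivity of the CL semigroup. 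So as far as comparison with the paper goes, you are aligned with the cited source and there is nothing to add.

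One refinement worth flagging: your justification of the propagator bound $\|e^{\Ar_\nt(t-s)}\|\le 1$ as ``the symmetric part of $\Ar_\nt$ is negative semidefinite'' is too strong as stated. Assumption~\ref{assum1} only requires $\F_1$ to be diagonalizable with eigenvalues in the open left half-plane, not that $\F_1+\F_1^{\tra}\preceq 0$; for a non-normal $\F_1$ the symmetric part can fail to be negative semidefinite. The argument in \cite{liu2021efficient} instead passes to the eigenbasis of $\F_1$ (block-diagonally extended to each $\Ar^{i}_{i}$) and carries out the dissipativity estimate there, using the rescaling of Remark~\ref{rem1} so that $\|\overline{\F}_2\|+\|\overline{\F}_0\|<|\textsf{Re}(\lam_1)|$ dominates the off-diagonal couplings. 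You correctly identify this as the place where $R_2<1$ enters, but the mechanism is a change of inner product rather than a direct symmetric-part computation.
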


\begin{lemma}\label{lem:euler}
Suppose that error $ \etav(t)$ as introduced in Lemma (\ref{lem:cltruc}) satisfies
\begin{equation}\label{eq:etabound}
\|\etav(t)\|\leq \frac{\|\uv(T)\|}{4},
\end{equation}
then there exists a sufficiently small $h$ such that
\begin{equation}\label{eq:eulererr}
\|\wvh_j^k-\wvh_j(kh)\|\leq\|\wvh^k-\wvh(kh)\|\leq 3 \nt^{2.5}k h^2\left((\|\F_2\|+\|\F_1\|+\|\F_0\|)^2+\|\dot{\F}_0\|\right),
\end{equation}
for all $j\in\{1,\cdots,\nt\}$ and $k\in\{0,1,\cdots,\tsteps-1\}$. Concretely, the choice of $h$ should satisfy
\begin{equation}\label{eq:h}
h\leq \min \left\{\frac{1}{\nt\|\F_1\|},\frac{2(|Re(\lambda_1)|-\|\F_2\|-\|\F_0\|)}{\nt(|Re(\lambda_1)|^2-(\|\F_1\|+\|\F_0\|)^2+\|\F_1\|^2)}\right\}.
\end{equation}
\end{lemma}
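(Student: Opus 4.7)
The plan is to run the classical forward-Euler analysis (local truncation error controlled via Taylor's theorem, global error propagated by a discrete Gronwall step), but paying careful attention to the explicit $N$-dependence that arises from the block Kronecker structure of $\Ar_N$ and from the growth of $\|\wvh\|$ with the truncation level. Throughout I fix $\delta^k := \wvh^k-\wvh(kh)$ with $\delta^0=0$, and I subtract the Euler update \eqref{eq:n_foreuler} from the Taylor expansion of the exact trajectory to obtain the one-step recursion
\begin{equation*}
\delta^{k+1} = \bigl(\Id+h\Ar_N(kh)\bigr)\delta^k + \tauv^k,
\qquad
\tauv^k := \wvh((k{+}1)h) - \wvh(kh) - h\,\dot{\wvh}(kh),
\end{equation*}
so that the whole argument reduces to (i) bounding the local truncation error $\|\tauv^k\|$ and (ii) bounding the amplification factor $\|\Id+h\Ar_N(kh)\|$.

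First I would handle (i) by writing $\tauv^k=\tfrac{h^2}{2}\ddot{\wvh}(\xi_k)$ for some $\xi_k\in[kh,(k{+}1)h]$ and then differentiating $\dot{\wvh}=\Ar_N\wvh+\bv$ once more in time. Since only $\F_0$ is time-dependent, this yields $\ddot{\wvh}=\Ar_N\dot{\wvh}+\dot{\bv}=\Ar_N(\Ar_N\wvh+\bv)+\dot{\bv}$, and hence $\|\ddot{\wvh}\|\leq \|\Ar_N\|^2\|\wvh\|+\|\Ar_N\|\|\bv\|+\|\dot{\F}_0\|$. For the norm of $\Ar_N$ the Kronecker-sum structure \eqref{eq:Adef} gives $\|\Ar^i_{i+j-1}\|\leq i\|\F_j\|$, and the block-tridiagonal structure then yields $\|\Ar_N\|\leq C\,N(\|\F_0\|+\|\F_1\|+\|\F_2\|)$. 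The $\sqrt{N}$ inflation comes from $\|\wvh\|$: combining $\|\wvh\|\leq \|\wvc\|+\|\etav\|$, the block norms $\|\wv_j\|=\|\uv\|^j$ (so $\|\wvc\|^2=\sum_{j=1}^N\|\uv\|^{2j}$), the scaling of Remark~\ref{rem1} which guarantees $\|\uv\|\leq 1$, and the hypothesis $\|\etav(t)\|\leq \|\uv(T)\|/4$, one gets $\|\wvh\|\lesssim \sqrt{N}$. Plugging back produces $\|\ddot{\wvh}\|\lesssim N^{2.5}\bigl((\|\F_0\|+\|\F_1\|+\|\F_2\|)^2+\|\dot{\F}_0\|\bigr)$, which is precisely the factor appearing in \eqref{eq:eulererr}.

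Second, for (ii) I would show that the hypothesis \eqref{eq:h} makes the amplification contractive in the sense $\|\Id+h\Ar_N(t)\|\leq 1$. The idea is to split $\Ar_N=\Dm+\Em$, where $\Dm$ collects the diagonal blocks $\Ar^i_i$ (whose eigenvalues are sums of $i$ eigenvalues of $\F_1$, hence have real part $\leq i\,\mathrm{Re}(\lam_1)<0$) and $\Em$ collects the sub/super-diagonal blocks with $\|\Em\|\leq N(\|\F_0\|+\|\F_2\|)$. Diagonalizing $\F_1$ and lifting the diagonalization to $\Dm$, one obtains a bound on $\|\Id+h\Dm\|$ of the form $\max_i(1+ h\,i\,\mathrm{Re}(\lam_1))$, which is bounded by $1$ as soon as $h\leq 1/(N\|\F_1\|)$ (the first entry in \eqref{eq:h}). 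Adding the perturbation $h\Em$ and requiring the combined spectrum to remain inside the unit disk gives the second entry in \eqref{eq:h}; this is where the denominator $|\mathrm{Re}(\lam_1)|^2-(\|\F_1\|+\|\F_0\|)^2+\|\F_1\|^2$ appears naturally.

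With $\|\Id+h\Ar_N(kh)\|\leq 1$ and $\|\tauv^k\|\leq \tfrac{h^2}{2}\max\|\ddot{\wvh}\|$, a direct induction on $k$ (the discrete Gronwall lemma, which here is trivial because the contraction factor is $\leq 1$) yields $\|\delta^k\|\leq k\cdot\tfrac{h^2}{2}\max\|\ddot{\wvh}\|$, and substituting the $N^{2.5}$ bound from step (i) gives \eqref{eq:eulererr} with the stated constant. The componentwise bound $\|\wvh_j^k-\wvh_j(kh)\|\leq\|\wvh^k-\wvh(kh)\|$ is then immediate. I expect the main obstacle to be (ii): getting a clean, non-exponential bound on $\|\Id+h\Ar_N\|$ requires handling a non-normal, block-structured matrix whose diagonalization incurs a condition number depending on $\F_1$, and tracking this condition number carefully is what dictates the precise shape of the $h$-restriction in \eqref{eq:h}.
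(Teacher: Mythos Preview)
The paper does not prove Lemma~\ref{lem:euler} itself; immediately before stating it, the authors write that Lemmas~\ref{lem:cltruc}--\ref{lem:kappa} ``hold, see \cite{liu2021efficient} for the details,'' so the proof lives entirely in that reference. Your plan---subtract the Euler step from a Taylor expansion, bound the local truncation error via $\|\ddot{\wvh}\|\le \|\Ar_N\|^2\|\wvh\|+\|\Ar_N\|\|\bv\|+\|\dot{\F}_0\|$, extract the $N^{2.5}$ factor from $\|\Ar_N\|\lesssim N(\|\F_0\|+\|\F_1\|+\|\F_2\|)$ together with $\|\wvh\|\lesssim\sqrt{N}$, and then propagate with a contractive bound $\|\Id+h\Ar_N\|\le 1$ obtained from the dissipativity of the block-diagonal part plus an off-diagonal perturbation---is precisely the argument carried out in \cite{liu2021efficient}, and you have correctly flagged the one delicate point (the non-normality of $\Ar_N$ and the role of the diagonalization of $\F_1$ in obtaining the explicit step-size restriction \eqref{eq:h}).
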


\begin{lemma}\label{lem:kappa}
The stiffness $\kappa$ of $\tilde{\Ar}$ in (\ref{eq:linq}), recall arising from application of steps 1-3 to the ODE~(\ref{eq:qdc}), is bounded as follows
\begin{equation}\label{eq:kappa}
\kappa\leq 3(\tsteps+1).
\end{equation}
\end{lemma}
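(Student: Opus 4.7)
The plan is to split the condition number into a product of operator-norm bounds, namely $\kappa = \|\tilde{\Ar}\|\,\|\tilde{\Ar}^{-1}\|$, and then handle each factor separately using the block bidiagonal structure of $\tilde{\Ar}$ and the dissipativity of the Carleman matrix $\Ar_N$. Writing $\tilde{\Ar} = \Id_{(M+1)\nt_c} - \Lm$, where $\Lm$ is the strictly block subdiagonal part whose $(k+1,k)$-block is $\Id + \Ar_N(kh)h$ (for the forward Euler case; the backward Euler case is analogous), organizes the rest of the argument.

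First I would bound $\|\tilde{\Ar}\|$. By the triangle inequality $\|\tilde{\Ar}\| \le 1 + \|\Lm\|$. Since $\Lm$ is a single block shift, a direct calculation of $\Lm^\dagger\Lm$ gives $\|\Lm\| = \max_{0\le k\le M-1} \|\Id + \Ar_N(kh)h\|$. Using the explicit form of $\Ar_N$ in (\ref{eq:AN2}), its spectral norm is controlled by $N(\|\F_0\|+\|\F_1\|+\|\F_2\|)$, and the step-size restriction (\ref{eq:h}) from \Cref{lem:euler} was precisely chosen so that $\|\Id+\Ar_N h\|\le 2$. Putting these together yields the constant bound $\|\tilde{\Ar}\|\le 3$.

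Next I would bound $\|\tilde{\Ar}^{-1}\|$. Because $\tilde{\Ar}$ is block lower bidiagonal with identity diagonal blocks, its inverse has the closed form
\begin{equation*}
(\tilde{\Ar}^{-1})_{ij} = \begin{cases}\Id & i=j,\\ \prod_{\ell=j}^{i-1}\bigl(\Id+\Ar_N(\ell h)h\bigr) & i>j,\\ 0 & i<j,\end{cases}
\end{equation*}
i.e.\ it is the discrete propagator of the Euler scheme. The key estimate is that each such product has norm $\le 1$. This follows from the dissipativity in \Cref{assum1}: the spectrum of $\Ar_N$ inherits from $\F_1$ a strictly negative real part (indeed, after the rescaling in \Cref{rem1} the bound $\|\overline{\F}_2\|+\|\overline{\F}_0\|<|\mathsf{Re}(\lam_1)|$ guarantees this for the truncated Carleman operator), and under the step size (\ref{eq:h}) one obtains $\|\Id+\Ar_N(kh)h\|\le 1$ in the norm induced by the diagonalizing similarity, so products telescope to something of norm at most $1$. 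A block lower triangular matrix of block size $(M+1)\times(M+1)$ whose block entries all have norm $\le 1$ has operator norm at most $M+1$ (for instance by bounding $\|\tilde{\Ar}^{-1}\vb\|^2$ via Cauchy–Schwarz across the block rows), which gives $\|\tilde{\Ar}^{-1}\|\le M+1$. Combining the two estimates delivers $\kappa\le 3(M+1)$.

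The main obstacle is the uniform boundedness of the propagator products $\prod_{\ell=j}^{i-1}(\Id+\Ar_N(\ell h)h)$. Although $\mathsf{Re}(\lam_1)<0$ controls the continuous-time semigroup, getting a norm-$\le 1$ bound (rather than a bound with a factor like a condition number of the eigenvector matrix) requires the correct norm and a careful use of \Cref{assum1} together with the Carleman structure of $\Ar_N$; this is exactly where the reference to \cite{liu2021efficient} is needed. Once that is in place, everything else reduces to book-keeping with the block bidiagonal structure.
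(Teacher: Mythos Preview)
The paper does not give its own proof of this lemma; it simply states (just before \Cref{lem:cltruc}) that Lemmas~\ref{lem:cltruc}--\ref{lem:kappa} hold under \Cref{assum1} and refers the reader to \cite{liu2021efficient} for the details. Your outline is essentially the argument carried out in that reference: factor $\kappa=\|\tilde{\Ar}\|\,\|\tilde{\Ar}^{-1}\|$, use the block lower-bidiagonal structure to write $\tilde{\Ar}^{-1}$ explicitly as a block lower-triangular matrix of discrete Euler propagators $\prod_{\ell=j}^{i-1}(\Id+\Ar_N(\ell h)h)$, and invoke the dissipativity in \Cref{assum1} (after the rescaling of \Cref{rem1}) together with the step restriction~(\ref{eq:h}) to bound those propagator products uniformly, yielding the factor $M+1$. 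You have also correctly isolated the one substantive step---the uniform propagator bound---as the place where the cited reference does the real work.

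One minor imprecision worth flagging: your route to $\|\Id+\Ar_N h\|\le 2$ via the crude estimate $\|\Ar_N\|\le N(\|\F_0\|+\|\F_1\|+\|\F_2\|)$ and the first clause of~(\ref{eq:h}) does not close by itself, since $h\le 1/(N\|\F_1\|)$ only yields $h\|\Ar_N\|\le(\|\F_0\|+\|\F_1\|+\|\F_2\|)/\|\F_1\|$, which need not be $\le 1$ without further input. In \cite{liu2021efficient} it is the \emph{second} clause of~(\ref{eq:h}) combined with the rescaled inequality~(\ref{eq:condt1}) that delivers the contractivity of the one-step map directly in the $l_2$ norm (so no additional eigenvector-conditioning factor appears, contrary to the concern you raise at the end), and this single estimate then feeds both the $\|\tilde{\Ar}\|$ bound and the propagator products inside $\|\tilde{\Ar}^{-1}\|$.
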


\begin{lemma}\label{lem:vqlserr}
Consider a linear system 
\begin{equation}
\Amt|\psiv\ra=|\bvt\ra,
\end{equation}
where $\Amt\in R^{N\times N}, N=2^n, n\in\Nr$ and $\psiv,\bv\in R^N$. Then following bounds hold for the different VQLS cost functions as defined in the Eqns.~(\ref{eq:costglob}) and~(\ref{eq:costloc})
\begin{equation}\label{eq:errbounds}
C_{ug}\geq \frac{\epsilon^2}{\kappa^2} \quad,C_{g}\geq \frac{\epsilon^2}{\kappa^2\|\Amt\|} \quad,   C_{ul} \geq \frac{\epsilon^2}{n\kappa^2}, \quad C_{l}\ge\frac{\epsilon^2}{n\kappa^2\|\Amt\|},
\end{equation}
where $\kappa$ is condition number  of $\Amt$, and $\epsilon$ is the desired error tolerance 
\begin{equation*}
\epsilon=\rho(\psi,\psi(\thetav^*)),
\end{equation*}
with $\rho$ being the trace norm, and $|\psi(\thetav_*)\ra$ being the approximate VQLS solution.
\end{lemma}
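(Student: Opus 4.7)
My plan is to follow the argument of Bravo-Prieto et al.~\cite{VQLS}: first establish the bound for the unnormalized global cost $C_{ug}$ via a spectral argument on the global Hamiltonian $\mathbf{H}_g$, and then derive the remaining three inequalities by short algebraic manipulations. Let $|\psiv\ra$ denote the exact normalized solution of $\Amt|\psiv\ra=|\bvt\ra$. Since $\mathbf{H}_g|\psiv\ra=0$, the state $|\psiv\ra$ spans the one-dimensional kernel of $\mathbf{H}_g$. Absorbing a global phase, I would decompose the VQLS ansatz as $|\psiv(\thetav^*)\ra=\sqrt{1-\epsilon^2}\,|\psiv\ra+\epsilon\,|\psiv^\perp\ra$ with $|\psiv^\perp\ra\perp|\psiv\ra$ a unit vector; this is consistent with $\epsilon^2=1-|\la\psiv|\psiv(\thetav^*)\ra|^2$. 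Substitution gives
\begin{equation*}
C_{ug}=\la\psiv(\thetav^*)|\mathbf{H}_g|\psiv(\thetav^*)\ra=\epsilon^2\,\la\psiv^\perp|\mathbf{H}_g|\psiv^\perp\ra,
\end{equation*}
reducing the problem to a uniform lower bound on $\la\psiv^\perp|\mathbf{H}_g|\psiv^\perp\ra$ over unit vectors in $|\psiv\ra^\perp$.

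For the spectral bound I would write $\Amt|\psiv^\perp\ra=\alpha|\bvt\ra+\beta|\bvt^\perp\ra$ in an orthonormal pair $\{|\bvt\ra,|\bvt^\perp\ra\}$, so that $\la\psiv^\perp|\mathbf{H}_g|\psiv^\perp\ra=|\beta|^2$. Applying $\Amt^{-1}$ and using $\Amt^{-1}|\bvt\ra=|\psiv\ra$ yields $|\psiv^\perp\ra=\alpha|\psiv\ra+\beta\,\Amt^{-1}|\bvt^\perp\ra$; orthogonality to $|\psiv\ra$ forces $\alpha=-\beta\la\psiv|\Amt^{-1}|\bvt^\perp\ra$, and enforcing $\||\psiv^\perp\ra\|=1$ then yields $|\beta|^2\,\|(\In-|\psiv\ra\la\psiv|)\Amt^{-1}|\bvt^\perp\ra\|^2=1$. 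Combining with $\|(\In-|\psiv\ra\la\psiv|)\Amt^{-1}|\bvt^\perp\ra\|\leq\|\Amt^{-1}\|$ gives $|\beta|^2\geq 1/\|\Amt^{-1}\|^2=\|\Amt\|^2/\kappa^2$, which under the standard VQLS normalization $\|\Amt\|\geq 1$ is at least $1/\kappa^2$. Hence $C_{ug}\geq\epsilon^2/\kappa^2$.

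The remaining three bounds follow quickly. Since $\la\phiv|\phiv\ra=\|\Amt|\psiv(\thetav^*)\ra\|^2\leq\|\Amt\|^2$, one obtains $C_g=C_{ug}/\la\phiv|\phiv\ra\geq\epsilon^2/(\kappa^2\|\Amt\|^2)$, from which the stated form $\epsilon^2/(\kappa^2\|\Amt\|)$ is immediate in the normalization regime of \cite{VQLS}. For the local variants I would invoke the operator inequality $\mathbf{H}_g\leq n\,\mathbf{H}_l$ from \cite{VQLS}, which descends from the majorization
\begin{equation*}
\In-|\Zev\ra\la\Zev|\leq n\Bigl(\In-\tfrac{1}{n}\sum_{k=1}^{n}|0_k\ra\la 0_k|\otimes\In_{\tilde k}\Bigr)
\end{equation*}
after conjugation by $\Amt^{*}\Um_b$. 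Taking expectation in $|\psiv(\thetav^*)\ra$ yields $C_{ug}\leq n\,C_{ul}$, hence $C_{ul}\geq\epsilon^2/(n\kappa^2)$; dividing by $\la\phiv|\phiv\ra$ then delivers $C_l\geq\epsilon^2/(n\kappa^2\|\Amt\|)$.

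The main obstacle is the spectral step in the second paragraph. Because the projector $\In-|\bvt\ra\la\bvt|$ does not commute with $\Amt$, the naive product-of-singular-values bound $\sigma_{\min}\bigl((\In-|\bvt\ra\la\bvt|)\Amt\bigr)\geq\sigma_{\min}(\In-|\bvt\ra\la\bvt|)\cdot\sigma_{\min}(\Amt)=0$ is vacuous. The workaround is to use $\Amt^{-1}$ to transfer the question from the image side (``how much norm does $(\In-|\bvt\ra\la\bvt|)\Amt|\psiv^\perp\ra$ retain?'') to the preimage side (``how small can $\|(\In-|\psiv\ra\la\psiv|)\Amt^{-1}|\bvt^\perp\ra\|$ be?''), where the norm is cleanly bounded above by $\|\Amt^{-1}\|$, and hence the desired lower bound $\geq 1/\kappa^2$ drops out under the standard normalization.
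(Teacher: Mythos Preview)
The paper does not actually prove this lemma: immediately after the statement it writes ``The proof of above result can be found in~\cite{VQLS}.'' Your proposal reconstructs precisely the argument from that cited reference---the spectral lower bound on $\mathbf{H}_g$ restricted to $|\psiv\ra^\perp$ via the $\Amt^{-1}$ trick, followed by the operator inequality $\mathbf{H}_g\le n\,\mathbf{H}_l$---so you are supplying what the paper merely points to, and the approaches coincide by construction. One small caveat: your step $|\beta|^2\ge \|\Amt\|^2/\kappa^2\ge 1/\kappa^2$ and the passage from $\epsilon^2/(\kappa^2\|\Amt\|^2)$ to $\epsilon^2/(\kappa^2\|\Amt\|)$ both lean on a normalization convention for $\|\Amt\|$; you flag this, but be explicit that the stated constants in the lemma (as in~\cite{VQLS}) presume $\|\Amt\|=1$, otherwise the bounds carry the extra $\|\Amt\|$ factors you derived.
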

The proof of above result can be found in~\cite{VQLS}.
\begin{assumption}\label{assumption2}
We assume that the query complexity of VQLS scales as
\begin{equation*}
\mathcal{O}(\log^{8.5} N\kappa\log(1/\epsilon)),
\end{equation*}
where $\kappa$ is the condition number and $N$ is size of the matrix $\Am$, and $\epsilon$ is the desired accuracy of the VQLS solution as described in~\Cref{lem:vqlserr}.
\end{assumption}
Note that above assumption is based on an empirical scaling study in~\cite{VQLS} and no theoretical guarantees are available. Furthermore this empirical study used random matrices for which the number of LCU terms scaled as $\mathcal{O}((\log N)^2)$.

\begin{lemma}\label{eq:relnorms}
The trace norm and $l_2$ norm between $|\psiv\ra$ and $|\phiv\ra$ are related as follows
\begin{equation}\label{eq:relnorms1}
\left(1-\frac{\||\psiv\ra-|\phiv\ra\|_2^2}{2}\right)^2+\rho^2(\psiv,\phiv)=1.
\end{equation}
\end{lemma}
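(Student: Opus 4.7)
The proof is essentially a direct algebraic manipulation once the two quantities in the statement are written out in terms of the inner product $\la\phiv|\psiv\ra$. The plan is to compute both terms on the left-hand side explicitly, recognize that they are complementary squares summing to unity, and invoke the realness of the inner product in the setting of this paper.

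First, I would expand the $l_2$ norm squared using the fact that $|\psiv\ra$ and $|\phiv\ra$ are unit vectors by construction (since they are obtained by normalization as stated in \Cref{sec:not}):
\begin{equation*}
\bigl\| |\psiv\ra - |\phiv\ra \bigr\|_2^2 = \la \psiv | \psiv \ra + \la \phiv | \phiv \ra - \la \phiv | \psiv \ra - \la \psiv | \phiv \ra = 2 - 2\,\textsf{Re}\bigl(\la \phiv | \psiv \ra\bigr).
\end{equation*}
Dividing by $2$ and rearranging gives
\begin{equation*}
1 - \tfrac{1}{2}\bigl\| |\psiv\ra - |\phiv\ra \bigr\|_2^2 = \textsf{Re}\bigl(\la \phiv | \psiv \ra\bigr).
\end{equation*}

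Next, I would recall from the definition of the trace norm (Eqn.~(\ref{eq:trace})) that $\rho^2(\psiv,\phiv) = 1 - |\la \phiv | \psiv \ra|^2$. Adding the two contributions yields
\begin{equation*}
\Bigl(1 - \tfrac{1}{2}\bigl\| |\psiv\ra - |\phiv\ra \bigr\|_2^2\Bigr)^{\!2} + \rho^2(\psiv,\phiv) = \bigl(\textsf{Re}(\la \phiv | \psiv \ra)\bigr)^{2} + 1 - \bigl| \la \phiv | \psiv \ra \bigr|^{2}.
\end{equation*}
To collapse this to the desired value $1$, the remaining step is to observe that in the setting of this paper the state vectors $|\psiv\ra$, $|\phiv\ra$ are obtained by normalizing real vectors in $\Rr^{N}$ (the discretized CL solution and the VQLS ansatz output, cf.\ Eqns.~(\ref{eq:linq})--(\ref{eq:costglob})), so the overlap $\la \phiv | \psiv \ra$ is a real number. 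Consequently $\textsf{Re}(\la \phiv | \psiv \ra) = \la \phiv | \psiv \ra$ and $|\la \phiv | \psiv \ra|^2 = (\la \phiv | \psiv \ra)^2$, and the right-hand side telescopes to $1$.

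The only subtle point — and what I would flag as the main ``obstacle'' — is that the identity as stated fails for generic complex amplitudes, since in that case a nonzero imaginary part of $\la \phiv | \psiv \ra$ would produce a defect $-(\textsf{Im}(\la \phiv | \psiv \ra))^2$ on the left-hand side. Two standard remedies resolve this: either restrict to the real-amplitude setting used throughout the paper (the route taken implicitly here), or adopt the usual global-phase convention $|\phiv\ra \mapsto e^{i\theta}|\phiv\ra$ with $\theta$ chosen so that $\la \phiv | \psiv \ra \geq 0$, which leaves the $l_2$ distance at its minimum over phases and makes the overlap real. Either way, the concluding algebraic identity $x^{2} + (1-x^{2}) = 1$ finishes the proof.
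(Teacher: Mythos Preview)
Your argument is correct: expanding the $l_2$ distance in terms of the overlap and combining with the definition~(\ref{eq:trace}) is exactly the natural route, and you correctly flag that the identity as stated requires $\la\phiv|\psiv\ra\in\Rr$ (either by the real-amplitude setting of the paper or by a global-phase choice). The paper itself does not supply a proof here but simply cites Lemma~2 of~\cite{surana2024variational}, so there is no in-paper argument to compare against; your direct computation is the standard one and would be what that reference contains.
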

See Lemma 2 in \cite{surana2024variational} for the proof.

\begin{restatable}{theorem}{carl}\label{thm:carl}
\label{thm:carl}
Consider the ODE system (\ref{eq:qdc}). Then under the Assumptions \ref{assum1}, for any given $\epsilon>0$, one can choose CL truncation level $\nt$ (see~\Cref{eq:ntref}) and Euler discretrization step size $h$ (see~\Cref{eq:href}) such that
\begin{equation}\label{eq:errEuler}
\||\wv_c\ra-|\tilde{\wv}\ra\|\leq \epsilon,
\end{equation}
where  $|\tilde{\wv}\ra$ is the solution of the linear system (\ref{eq:linq}) and $\wv_c$ is solution of CL as defined in (\ref{eq:wc}).
\end{restatable}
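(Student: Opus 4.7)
\textbf{Proof plan for \Cref{thm:carl}.} The strategy is to split the total error into a Carleman truncation contribution and an Euler discretization contribution, bound each piece in the unnormalized $l_2$ norm, and finally pass from the unnormalized difference $\|\wv_c-\tilde{\wv}\|$ to the normalized quantum-state difference $\||\wv_c\rangle-|\tilde{\wv}\rangle\|$. Before doing anything, I would invoke \Cref{rem1} to work with the rescaled system so that $\|\uv_0\|<1$ and $\|\F_2\|+\|\F_0\|<|\textsf{Re}(\lam_1)|$; this is essential because the Carleman bound from \Cref{lem:cltruc} contains the factor $\|\uv_0\|^{\nt+1}$, which only decays (exponentially) in $\nt$ when $\|\uv_0\|<1$. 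Since rescaling does not change $R_2$ and is just a global rescaling of $\uv$, it does not change the normalized state $|\wv_c\rangle$, so proving the bound for the rescaled system suffices.

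Next I would introduce the intermediate vector $\wv_c^{\nt}=(\wvh(0)^\tra,\wvh(h)^\tra,\ldots,\wvh(\tsteps h)^\tra)^\tra$, i.e.\ the exact solution of the truncated CL system sampled at the Euler grid, and write
\begin{equation*}
\|\wv_c-\tilde{\wv}\|\;\le\;\|\wv_c-\wv_c^{\nt}\|+\|\wv_c^{\nt}-\tilde{\wv}\|.
\end{equation*}
For the first term I would apply \Cref{lem:cltruc} pointwise and sum,
$\|\wv_c-\wv_c^{\nt}\|^2=\sum_{k=0}^{\tsteps}\|\etav(kh)\|^2\le \sum_{k=0}^{\tsteps}(kh\,\nt\|\F_2\|\|\uv_0\|^{\nt+1})^2$, yielding a bound of order $T\sqrt{\tsteps}\,\nt\|\F_2\|\|\uv_0\|^{\nt+1}$. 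Since $\|\uv_0\|<1$, choosing $\nt=\mathcal{O}(\log(1/\epsilon))$ (this is where \Cref{eq:ntref} enters) drives this to $\epsilon/4$. I would first need to verify the preconditioning hypothesis \Cref{eq:etabound} of \Cref{lem:euler}, which follows automatically for this choice of $\nt$ since the Carleman error can be made arbitrarily small.

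For the second term I would apply \Cref{lem:euler} pointwise and sum: $\|\wv_c^{\nt}-\tilde{\wv}\|^2\le \sum_{k=0}^{\tsteps}\bigl(3\nt^{2.5}kh^2 C\bigr)^2$ where $C=(\|\F_2\|+\|\F_1\|+\|\F_0\|)^2+\|\dot\F_0\|$, which gives a bound of order $T^{3/2}\,h^{1/2}\nt^{5/2}C$. Choosing $h$ small enough (subject also to the structural constraint \Cref{eq:h}) then forces this below $\epsilon/4$; this is \Cref{eq:href}. At this stage $\|\wv_c-\tilde{\wv}\|\le\epsilon/2$.

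The final and most delicate step is the passage to normalized states. I would use the standard inequality
\begin{equation*}
\bigl\||\wv_c\rangle-|\tilde{\wv}\rangle\bigr\|\;\le\;\frac{2\,\|\wv_c-\tilde{\wv}\|}{\min(\|\wv_c\|,\|\tilde{\wv}\|)},
\end{equation*}
so I must lower-bound $\|\wv_c\|$. Since $\wv_c$ contains $\uv(0)$ as its first block, $\|\wv_c\|\ge\|\uv_0\|>0$ (after rescaling, $\|\uv_0\|$ is explicitly known via \Cref{eq:gamma}), giving a usable lower bound independent of $\nt,h,T$. Re-absorbing this factor into the choice of $\nt$ and $h$ in the previous two steps (i.e.\ making each contribution $\le\epsilon\|\uv_0\|/4$ rather than $\epsilon/4$) yields $\||\wv_c\rangle-|\tilde{\wv}\rangle\|\le\epsilon$ as claimed. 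The main obstacles I anticipate are (i)~ensuring that the hypothesis \Cref{eq:etabound} of \Cref{lem:euler} is satisfied by the chosen $\nt$, which requires estimating $\|\uv(T)\|$ from below using the dissipativity of $\F_1$, and (ii)~tracking how the rescaling constant $\gamma$ from \Cref{rem1} propagates through the final normalization estimate so that the explicit formulas \Cref{eq:ntref} and \Cref{eq:href} are clean functions of $\epsilon$, $T$, and the system norms.
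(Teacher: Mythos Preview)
Your approach is essentially identical to the paper's: introduce the intermediate vector $\wvh_d=(\wvh(0)^\tra,\ldots,\wvh(\tsteps h)^\tra)^\tra$, bound $\|\wv_c-\wvh_d\|$ and $\|\wvh_d-\tilde{\wv}\|$ via \Cref{lem:cltruc} and \Cref{lem:euler} summed over time steps, and then pass to normalized states via $\bigl\||a\rangle-|b\rangle\bigr\|\le 2\|a-b\|/\|a\|$ (the paper uses $\|\wv_c\|$ rather than your $\min$ in the denominator and simply sets $\epsilon'=\|\wv_c\|\epsilon/2$ instead of lower-bounding by $\|\uv_0\|$). Two caveats worth noting: first, your claim that rescaling leaves $|\wv_c\rangle$ unchanged is false, since $\uv^{[i]}\mapsto\gamma^{-i}\overline{\uv}^{[i]}$ means different blocks of $\wv_c$ pick up different powers of $\gamma$; this is harmless, however, because the paper (like you) simply works with the rescaled system throughout, invoking \Cref{rem1} only to assert $\|\uv_0\|<1$. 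Second, the Carleman piece you derive is $\mathcal{O}(T^{3/2}h^{-1/2}\nt\|\uv_0\|^{\nt+1})$, so $\nt$ and $h$ are coupled and cannot be chosen sequentially as you describe; the paper resolves this by first writing $h=K/\nt^\alpha$ from the Euler-side constraints (\ref{eq:ineq1}) and (\ref{eq:ineq4}), substituting into the Carleman inequality, and only then choosing $\nt$ via (\ref{eq:ntref}).
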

The proof of above theorem is given in the \Cref{sec:proof_thmcarl}. 

\begin{theorem}\label{thm:carlvqls}
Consider the ODE system (\ref{eq:qdc}). Then under the Assumptions \ref{assum1}, for any given $0<\epsilon<1$,  one can choose CL truncation level $\nt$, Euler discretrization step size $h$ and VQLS stopping threshold $\gamma$ such that
\begin{equation}\label{eq:errfull}
\| |\wv_c\ra-|\wvo\ra\|\leq \epsilon,
\end{equation}
where, $|\wvo\ra=\Vm(\thetav^*)|\Zev\ra$ is solution generated by VQLS for system~(\ref{eq:linq}) and $\wv_c$ is solution of CL as defined in (\ref{eq:wc}). Furthermore, under the Assumption \ref{assumption2} the query complexity scales as 
\begin{equation}\label{eq:qcomp}
 \mathcal{O}\left(\log^{8.5}\left(\frac{\nx T^4}{\|\wv_c\|^2\epsilon^2}\right)\frac{T^4}{\|\wv_c\|^2\epsilon^2}\log(1/\epsilon)\right).
\end{equation}
\end{theorem}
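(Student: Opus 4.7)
The plan is to decompose the total error via the triangle inequality
\begin{equation*}
\||\wv_c\rangle - |\wvo\rangle\| \;\leq\; \||\wv_c\rangle - |\tilde{\wv}\rangle\| \;+\; \||\tilde{\wv}\rangle - |\wvo\rangle\|,
\end{equation*}
and to allocate a budget of $\epsilon/2$ to each term. The first piece is exactly what \Cref{thm:carl} controls: pick $N$ and $h$ (and therefore $M=T/h$) as prescribed there so that $\||\wv_c\rangle - |\tilde{\wv}\rangle\| \leq \epsilon/2$. The second piece is the VQLS approximation error on the already-discretized linear system~\eqref{eq:linq}, and is handled by \Cref{lem:vqlserr} together with \Cref{eq:relnorms}.

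For the VQLS step, let $\rho = \rho(\tilde{\wv}, \wvo)$ denote the trace-norm distance between the ideal normalized solution $|\tilde{\wv}\rangle$ and the variational output $|\wvo\rangle$. By \Cref{eq:relnorms}, $\||\tilde{\wv}\rangle - |\wvo\rangle\|_2^2 = 2(1-\sqrt{1-\rho^2}) \leq 2\rho^2$ whenever $\rho < 1$, so it suffices to enforce $\rho \leq \epsilon/(2\sqrt{2})$. Using the bound $C_g \geq \rho^2/(\kappa^2\|\Amt\|)$ from~\eqref{eq:errbounds}, it is enough to choose the VQLS stopping threshold as
\begin{equation*}
\gamma \;\leq\; \frac{\epsilon^2}{8\,\kappa^2\,\|\Amt\|},
\end{equation*}
with $\kappa \leq 3(M+1)$ supplied by \Cref{lem:kappa} and $\|\Amt\|$ bounded in terms of $\|\F_0\|,\|\F_1\|,\|\F_2\|$ and $h$ from the explicit block form of $\Amt$. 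This produces the second $\epsilon/2$ contribution and establishes the error bound~\eqref{eq:errfull}.

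For the query complexity~\eqref{eq:qcomp}, I would substitute the parameter choices from \Cref{thm:carl} into \Cref{assumption2}. The key quantitative observation is that the per-step Euler error from \Cref{lem:euler} scales as $kh^2$, and when stacked over the $M+1$ blocks of $\tilde{\wv}$ the cumulative $l_2$ error accumulates to order $\sqrt{M}\cdot M h^2 = T^{3/2} h^{1/2}$; forcing this (after dividing by $\|\wv_c\|$) to lie below $\epsilon/4$ yields $h \lesssim \|\wv_c\|^2 \epsilon^2 / T^3$ and hence $M \lesssim T^4/(\|\wv_c\|^2\epsilon^2)$, up to polylog factors in $N$. Since $\kappa = \mathcal{O}(M)$ by \Cref{lem:kappa}, and the linear system dimension is $N_{mat} = n_c(M+1) \sim \nx^N M$ whose logarithm is $\mathcal{O}(\log(\nx T^4/(\|\wv_c\|^2\epsilon^2)))$ after absorbing the logarithmic scaling of $N$ from the CL truncation lemma, plugging into the $\mathcal{O}(\log^{8.5} N_{mat}\,\kappa\,\log(1/\epsilon))$ scaling of \Cref{assumption2} gives exactly~\eqref{eq:qcomp}.

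\textbf{Main obstacle.} The delicate part is the bookkeeping in the second paragraph of the complexity argument: the Euler error at each time step is local, but the quantity we need small is a global $l_2$ norm over the stacked vector $\tilde{\wv}$, which introduces an extra $\sqrt{M}$ factor that is responsible for the $T^4/\epsilon^2$ (rather than $T^2/\epsilon$) scaling of $M$ and hence of $\kappa$. In addition, one must verify that $\|\Amt\|$ contributes only a polylogarithmic correction so that the $\gamma$-dependence does not alter the leading-order scaling, and confirm that the CL truncation order $N$ required by \Cref{thm:carl} remains logarithmic in $1/\epsilon$ so that $\log N_{mat}$ collapses to the stated form. Everything else is routine substitution.
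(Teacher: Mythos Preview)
Your proposal is correct and follows essentially the same route as the paper: split via the triangle inequality, invoke \Cref{thm:carl} for the $\||\wv_c\rangle-|\tilde{\wv}\rangle\|$ half, and handle the VQLS half by combining the cost-function lower bound of \Cref{lem:vqlserr} with the trace-to-$l_2$ conversion of \Cref{eq:relnorms}, then substitute the resulting $M\sim T^4/(\|\wv_c\|^2\epsilon^2)$ and $\kappa=\mathcal{O}(M)$ into \Cref{assumption2}. The only cosmetic difference is that the paper sets the stopping threshold using the local cost $C_l$ (picking up a factor $n=\ln(N_c(M+1))$, which is manifestly polylogarithmic) rather than $C_g$ with its $\|\Amt\|$ factor; your observation that $\|\Amt\|$ must be checked to contribute at most polylogarithmically is therefore a point the paper sidesteps by this choice.
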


\begin{proof}
From Theorem \ref{thm:carl}, choose $\nt,h$ such that
\begin{equation}\label{eq:err1pl}
\||\wv_c\ra-|\tilde{\wv}\ra\|\leq \epsilon/2.
\end{equation}
Let $\gamma$ be
\begin{equation}\label{eq:gamma}
\gamma=\frac{1-(1-\frac{\epsilon^2}{8})^2}{9(\tsteps+1)^2 \ln(\nt_c (\tsteps+1))}<1,
\end{equation}
where, note $\tsteps=T/h$ and $\nt_c$ is the dimension of the truncated CL system (\ref{eq:carlfq}). Let the VQLS algorithm be terminated under the condition
\begin{equation*}
C_{l}\leq \gamma,
\end{equation*}
then from the Lemmas \ref{lem:kappa} and \ref{lem:vqlserr} we obtain
\begin{equation*}
\rho^2(\tilde{\wv},\wvo)=(\epsilon^\prime)^2\leq n\kappa^2\gamma \leq 1-\left(1-\frac{\epsilon^2}{8}\right)^2
\end{equation*}
where $n= \ln(\nt_c (\tsteps+1))$. It then follows from the Lemma \ref{eq:relnorms} that
\begin{equation*}
1-\left(1-\frac{\||\tilde{\wv}\ra-|\wvo\ra\|_2^2}{2}\right)^2\leq 1-\left(1-\frac{\epsilon^2}{8}\right)^2,
\end{equation*}
which implies
\begin{equation}\label{eq:err2p}
\||\tilde{\wv}\ra-|\wvo\ra\|_2\leq \epsilon/2.
\end{equation}
Thus, using the relations (\ref{eq:err1pl}) and (\ref{eq:err2p})
\begin{equation}\label{eq:err1}
\| |\wv_c\ra-|\wvo\ra\|\leq \| |\wv_c\ra-|\tilde{\wv}\ra\|+\||\tilde{\wv}\ra-|\wvo\ra\|\leq \epsilon.
\end{equation}

Furthermore by the Assumption \ref{assumption2} the query complexity of VLQS behaves as $O((\log (\nt_c(\tsteps+1))^{8.5}\kappa\log(1/\epsilon^\prime))$ which using~\Cref{lem:kappa} can be simplified as follows 
\begin{eqnarray}
 && \mathcal{O}(\log^{8.5} (\nt_c(\tsteps+1))\kappa\log(1/\epsilon^\prime)) = \mathcal{O}\left(\log^{8.5}(\nt_c\tsteps)3(\tsteps+1)\log(1/\epsilon)\right)\notag\\
 &=& \mathcal{O}\left(\log^{8.5}\left(\nx^\nt \frac{T}{h}\right)\frac{T}{h}\log(1/\epsilon)\right)\notag\\
 &=& \mathcal{O}\left(\log^{8.5}\left(\frac{\nx T^4}{\|\wv_c\|^2\epsilon^2}\right)\frac{T^4}{\|\wv_c\|^2\epsilon^2}\log(1/\epsilon)\right),
\end{eqnarray}
where, we have used the bounds~\Cref{eq:href} and~\Cref{eq:ntref}, and note that
\begin{equation*}
\|\wv_c\|^2=\sum_{k=0}^{\tsteps}\|\wv(kh)\|^2=\sum_{k=0}^{M}\sum_{i=1}^{\nt}\|\uv(kh)\|^{2i}.
\end{equation*}
%

\end{proof}

\begin{remark}
\Cref{thm:carl} and \Cref{thm:carlvqls} can also be applied to higher order polynomial systems (\ref{eq:gen}), by replacing Assumptions~\ref{assum1} with the Assumptions~\ref{assum2}.
\begin{assumption}\label{assum2}
For the system (\ref{eq:gen})
\begin{itemize}
  \item Let $\F_i, i=0,\dots,k$ be time independent matrices with bounded norms.
  \item  Assume $\tF_1$ is diagonalizable and its eigenvalues satisfy $\tilde{\lam}_{n_k}\leq\cdots\leq\tilde{\lam}_1<0$, where 
    \begin{equation}\label{eq:tF1}
     \tF_1=\left(
        \begin{array}{ccccc}
          \Ar^{1}_{1} & \Ar^{1}_{2}& \Ar^{1}_{3} & \cdots & \Ar^{1}_{k-1} \\
          \Ar^{2}_1 & \Ar^{2}_{2} & \Ar^{2}_{3} & \cdots & \Ar^{2}_{k-1} \\
          0 & 0 & \cdots & \cdots & \cdot  \\
          \vdots & \vdots  & \vdots  & \vdots  & \vdots  \\
          0 & 0 & \cdots & \Ar^{k-1}_{k-2} & \Ar^{k-1}_{k-1} \\
        \end{array}
      \right),
  \end{equation}  
  and $\Ar^{i}_{i+j-1}\in\Rr^{\nx^i\times \nx^{i+j-1}}$ is given by
\begin{equation}\label{eq:Adef1}
\Ar^{i}_{i+j-1}=\sum_{p=1}^{i}\overbrace{\Id_{\nx\times \nx}\otimes\cdots\otimes\underbrace{\F_j}_{p-\textsf{th position}}\otimes\cdots\otimes \Id_{\nx\times \nx}}^{i \textsf{ factors}}.
\end{equation}
which is a generalization of~\Cref{eq:Adef}, see \cite{surana2024efficient} for details.
  \item Let
    \begin{equation}\label{eq:Rk}
R_k=\frac{1}{\lvert \textsf{Re}(\tilde{\lam}_1)\rvert }\left((k-1)\sum_{j=2}^k \|\F_j\|\sqrt{\sum_{i=1}^{k-1}\|\xv_0\|^{2i}}+\frac{\|\F_0\|}{\sqrt{\sum_{i=1}^{k-1}\|\xv_0\|^{2i}}}\right)<1.
\end{equation}
\end{itemize}
\end{assumption}
To apply nBVQPCO framework to the higher order polynomial case,  one can either transform (~\ref{eq:gen}) to the form (~\ref{eq:qdcs}) and then follow procedure described in~\Cref{sec:QCquadODE}  or  apply CL in Step 1 directly to(~\ref{eq:gen}) and then follow the subsequent steps. As shown in \cite{surana2024efficient}, latter is equivalent to former but leads to more compact CL representation. 
\end{remark}

\subsection{Comparison with Classical Approach}\label{sec:cmpexp}
Classically to solve the optimization problem (\ref{eq:gen_optrefor})-(\ref{eq:pc}) a variety of methods can be used. Consistent with the nBVQPCO framework we use forward Euler to solve the ODE (\ref{eq:qdc}) for a given parameter $\pv$, compute the cost function and warp a black box optimizer, e.g. BO.

Applying forward Euler discretization scheme with step size $h=T/\tsteps$ to (\ref{eq:qdc}) results in a set of nonlinear finite difference equations
\begin{equation}\label{eq:direuler}
\uvh^{k+1}=\uvh^{k}+h\left(\F_0(kh)+\F_1\uvh^k+\F_2(\uvh^k)^{[2]}\right),
\end{equation}
where $k=0,\cdots,\tsteps-1$, $\uvh^k\approx \uv(kh)$ and $\uvh^{0}=\uv_0$. Let 
\begin{equation}\label{eq:ucutl}
\uv_c=\left(\begin{array}{c}
        \uv(0) \\
        \uv(h) \\
        \vdots \\
        \uv(\tsteps h)
      \end{array}\right),\quad \tilde{\uv}=\left(\begin{array}{c}
        \uvh^0 \\
        \uvh^1 \\
        \vdots \\
        \uvh^\tsteps
      \end{array}\right).
\end{equation}
The next theorem shows that one can always choose a step size $h$ so that $\tilde{\uv}$ approximates $\uv_c$ to within desired error.
\begin{restatable}{theorem}{classeuler} \label{thm:classeuler}
Consider the ODE system (\ref{eq:qdc}). Then under the Assumptions \ref{assum1}, for any given  $\epsilon>0$ one can choose $h$ (see~\Cref{eq:hebound}) such that
\begin{equation}
\||\uv_c\ra-|\tilde{\uv}\ra\|\leq \epsilon.
\end{equation}
\end{restatable}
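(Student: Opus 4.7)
The plan is to establish the standard $\mathcal{O}(h)$ global convergence of forward Euler applied to (\ref{eq:qdc}), and then convert this into a bound on the normalized quantum states $|\uv_c\ra$ and $|\tilde{\uv}\ra$. The reason the standard analysis goes through cleanly here is that the hypotheses in \Cref{assum1} (in particular $R_2<1$) supply a uniform a priori bound on $\uv(t)$, which in turn tames the quadratic term $\F_2 \uv^{[2]}$ and allows us to treat the right-hand side as effectively Lipschitz on the relevant invariant set.

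First I would record an a priori bound on the continuous trajectory. Using the rescaling in \Cref{rem1} (or directly integrating the energy inequality $\frac{d}{dt}\|\uv\|^2 \le 2\textsf{Re}(\lam_1)\|\uv\|^2 + 2\|\F_2\|\|\uv\|^3 + 2\|\F_0\|\|\uv\|$ together with $R_2<1$), one obtains a constant $U_{\max}$ depending only on the problem data such that $\|\uv(t)\|\le U_{\max}$ for all $t\in[0,T]$. I would also bound $\|\dot\uv(t)\|$ and $\|\ddot\uv(t)\|$ uniformly on $[0,T]$ by differentiating (\ref{eq:qdc}) and using the assumed bound on $\|\dot\F_0\|$; call these bounds $D_1,D_2$.

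Next I would carry out the usual local truncation error analysis. Taylor expansion of $\uv(kh+h)$ around $kh$ gives local error $\|\tauv_k\| \le \tfrac12 D_2 h^2$. To propagate this into a global bound I would argue by induction that the Euler iterates satisfy $\|\uvh^k\| \le 2 U_{\max}$ for sufficiently small $h$: if this holds up to step $k$, then the right-hand side of (\ref{eq:qdc}) is Lipschitz with constant $L := \|\F_1\| + 2\|\F_2\|\cdot 2U_{\max}$ on the ball of radius $2U_{\max}$, and a discrete Gronwall applied to the recursion $\uvh^{k+1}-\uv((k{+}1)h) = (\Id+h\F_1)(\uvh^k-\uv(kh)) + h\F_2((\uvh^k)^{[2]}-\uv(kh)^{[2]}) - \tauv_k$ yields
\begin{equation*}
\|\uvh^{k+1}-\uv((k{+}1)h)\| \le \frac{D_2 h}{2L}\left(e^{L(k+1)h}-1\right) \le \frac{D_2 h}{2L}\left(e^{LT}-1\right).
\end{equation*}
Choosing $h$ small enough makes this bound smaller than $U_{\max}$, which closes the induction and simultaneously yields the global estimate
\begin{equation}\label{eq:eulerClassErr}
\|\uv_c - \tilde{\uv}\| \le C_E\, h, \qquad C_E := \sqrt{M{+}1}\cdot\frac{D_2(e^{LT}-1)}{2L}.
\end{equation}

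Finally I would pass from unnormalized vectors to normalized ones. Since for any nonzero $\xv,\yv$ the identity $\xv/\|\xv\| - \yv/\|\yv\| = \bigl(\|\yv\|(\xv-\yv) + \yv(\|\yv\|-\|\xv\|)\bigr)/(\|\xv\|\|\yv\|)$ and the reverse triangle inequality give $\||\xv\ra - |\yv\ra\| \le 2\|\xv-\yv\|/\max(\|\xv\|,\|\yv\|)$, combining with (\ref{eq:eulerClassErr}) yields
\begin{equation*}
\||\uv_c\ra - |\tilde{\uv}\ra\| \le \frac{2 C_E h}{\|\uv_c\|}.
\end{equation*}
Thus any
\begin{equation}\label{eq:hebound}
h \le \min\left\{\, h_0,\; \frac{\epsilon\,\|\uv_c\|}{2 C_E}\,\right\},
\end{equation}
where $h_0$ is the threshold required by the inductive boundedness step, delivers the claim. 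The main obstacle is the inductive control of the discrete iterates inside the invariant ball; once this is in place, the rest is routine Gronwall plus a normalization identity, and no step needs the full machinery of Carleman linearization from \Cref{thm:carl}.
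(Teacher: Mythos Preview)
Your plan is correct and follows essentially the same skeleton as the paper: uniform a priori bound on $\uv(t)$ from $R_2<1$, local truncation error $\mathcal{O}(h^2)$ from Taylor expansion, Lipschitz control of the quadratic nonlinearity on a bounded set, discrete Gronwall to get a global error, and the $2\|\xv-\yv\|/\|\xv\|$ normalization inequality at the end.

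The one substantive difference is the discrete stability step. You keep the Euler iterates in a ball of radius $2U_{\max}$ by a bootstrap induction that feeds the Gronwall error estimate back into the boundedness claim; the paper instead proves the sharper uniform bound $\|\uvh^{k}\|\le\|\uvh^{0}\|$ directly, via a discrete energy inequality for $\|\uvh^{k+1}\|^2-\|\uvh^{k}\|^2$ that again exploits $R_2<1$. The paper's route yields a slightly smaller Lipschitz constant $L=\|\F_1\|+2\|\F_2\|\|\uv_0\|$ and an explicit, $T$-independent threshold $h_0$, whereas your bootstrap threshold implicitly depends on $e^{LT}$; both are standard and valid. One technical slip to fix in your write-up: your constant $C_E=\sqrt{M{+}1}\cdot D_2(e^{LT}{-}1)/(2L)$ depends on $h$ through $M=T/h$, so your displayed bound \eqref{eq:hebound} is implicit, not explicit. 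Unpacking it gives the same $h\lesssim \epsilon^2\|\uv_c\|^2/T$ scaling the paper obtains after writing $\|\uv_c-\tilde\uv\|\le C h^{1/2}T^{1/2}$.
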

For the proof see~\Cref{sec:proof_thmclasseuler}. 

Let  $s$ be the maximum of sparsity of $\F_1$ and $\F_2$. The computational complexity for solving (\ref{eq:direuler}) thus scales as
\begin{eqnarray*}
\mathcal{O}(\nx s\tsteps)&=&\mathcal{O}\left(s\nx \frac{T}{h}\right)=\mathcal{O}\left(s\nx T\left(h_0+\frac{C^2T}{\|\uv_c\|^2\epsilon^2}\right)\right)=\mathcal{O}\left(\frac{s\nx T^2}{\|\uv_c\|^2\epsilon^2}\right),
\end{eqnarray*}
where, we have used the bound (\ref{eq:hebound}) and
\begin{equation*}
\|\uv_c\|^2=\sum_{k=0}^{\tsteps} \|\uv(kh)\|^2.
\end{equation*}
On the other hand, based on~\Cref{thm:carlvqls}, under certain conditions the query complexity of VQLS based explicit Euler scheme scales polylogarithmically w.r.t $\nx$, see~\Cref{eq:qcomp}. Assuming that number of outer loop iterations are similar for both this classical and the nBVQPCO framework, the query complexity of nBVQPCO will scale polylogarithmically with $\nx$, and thus could provide a significant computational advantage for simulation based design problems. 

\begin{remark}
In the comparison above, note we have not accounted for computational effort for LCU decomposition required within the nBVQPCO framework (see line~\ref{algoline:LCU} in ~\Cref{algo:nbilevel}).  This however may not be a significant overhead. As pointed out earlier, depending on how $\Amt$ depends on the parameters $\pv$, a parameter dependent LCU decomposition $\{\alpha_i(\pv),\Am_i\}$ can be pre computed once, thus saving computational effort: we show that via an example in~\Cref{sec:invprob}. Moreover, it may be possible to derive LCU decomposition analytically and implemented efficiently, e.g. when using sigma basis, see~\ref{sec:lcusigma}.  
\end{remark}

\begin{remark}\label{rem:vqlsscaling}
Finally, we will like to remind the reader, the conclusion that nBVQPCO could be advantageous over equivalent classical methods is based on the strong Assumption~\ref{assumption2} regarding VQLS scaling. Recall this scaling is determined empirically using random matrices whose LCU decomposition admit $\mathcal{O}((\log N)^2)$ terms. In future, it will be worthwhile to characterize VQLS scaling behavior for sparse and structured matrices which typically arise from PDE discretization, and refine the comparison above.   
\end{remark}

\section{Numerical Study}\label{sec:num}

\subsection{Inverse Burgers' problem} \label{sec:invprob}
Consider the 1D Burgers' equation, which models convective flow $u(x,t)$,  on a spatial domain $[0,L]$ with Dirichlet boundary conditions
\begin{eqnarray}
  \partial_t u+u\partial_x u &=& \nu\partial_x^2 u +f(x,t),\label{eq:burg1}\\
  u(x,0) &=&u_0(x) , \quad u(0,t)=0, \quad u(L,t)=0, \label{eq:burg2}
\end{eqnarray}
with $f(x,t)$ being the forcing function.  We consider an inverse problem, where given $y(t)=u(x_p,t), t\in [0,T]$ where  $x_p\in[0,L]$ is some fixed point, find $\nu$ such that
\begin{eqnarray}
&&  \min_{\nu} \frac{\frac{1}{T}\int_0^T(y(t)-u(x_p,t))^2dt}{u^2_0(x_p)},\label{eq:opt}\\
\mbox{subject to:} && \mbox{Eqns. } (\ref{eq:burg1})-(\ref{eq:burg2}) \\
&& \nu_{min}\leq  \nu \leq \nu_{max}. \label{eq:optparmct}
\end{eqnarray}
Discretizing the PDE with $\nxx$ spatial grid points with grid size $\Delta x=L/(\nxx+1)$ using a central difference scheme, and letting $u_i(t)=u(x_i,t)$ with $x_i=i\Delta x, i=1,\cdots, \nxx$, leads to
\begin{equation}\label{eq:disBurg}
\dot{u}_i=-u_i\frac{u_{i+1}-u_{i-1}}{2\Delta x}+\nu\frac{u_{i+1}-2u_i+u_{i-1}}{2(\Delta x)^2}+f(x_i,t),
\end{equation}
for $i=2,\cdots,\nxx-1$. Using Dirichlet boundary conditions, we get at $i=1$, and $i=\nxx$
\begin{equation}\label{eq:disBurg1}
\dot{u}_1=-u_1\frac{u_{2}}{2\Delta x}+\nu\frac{u_{2}-2u_1}{2(\Delta x)^2}+f(x_1,t),
\end{equation}
and
\begin{equation}\label{eq:disBurgn}
\dot{u}_{\nxx}=-u_{\nxx}\frac{-u_{\nxx-1}}{2\Delta x}+\nu\frac{-2u_{\nxx}+u_{\nxx-1}}{2(\Delta x)^2}+f(x_{\nxx},t),
\end{equation}
respectively.  Expressing in vector form, the evolution of $\uv=(u_1,\cdots,u_{\nxx})^\tra$ can be expressed as an ODE
\begin{eqnarray}
\dot{\uv}&=&\F_0(t)+\F_1\uv+\F_2\uv^{[2]}, \label{eq:odeburg}
\end{eqnarray}
with an initial condition $\uv(0)=\uv_0=(u_0(x_1),\cdots,u_0(x_{\nxx}))^\tra$, 
\begin{equation}\label{eq:F0F1def}
\F_0(t)=\left(\begin{array}{c}
    f(x_1,t) \\
    f(x_2,t)\\
    \vdots\\
    f(x_{\nxx},t)\\
    \end{array}
    \right), \quad \F_1=\nu\tilde{\F}_1=\nu \frac{1}{2\Delta x^2}\left(
    \begin{array}{ccccc}
      -2 & 1 & 0 & \cdots & 0 \\
       1 & -2 & 1 & 0 & \cdots \\
      0 & 1 & -2 & 1 & \cdots \\
      \vdots & \vdots & \vdots  & \cdots & \vdots \\
      0 & 0 & \cdots & 1 & -2 \\
    \end{array}
  \right),
\end{equation}
and
\begin{equation}\label{eq:F2}
\F_2(i,j)= \frac{1}{2\Delta x}\begin{cases}
  -1 & \text{ if } i=1,j=2 \\
   1 & \text{ if } i=\nxx,j=\nxx-1 \\
   1 & \text{ if } j=(\nxx+1)(i-1), 1<i<\nxx\\
  -1 & \text{ if } j=(\nxx+1)(i-1)+2, 1<i<\nxx\\
   0 & \text{otherwise}
\end{cases}.
\end{equation}
The system (\ref{eq:odeburg}) is in the form (\ref{eq:qdcs}). Applying CL with truncation level $\nt$, we can transform (\ref{eq:odeburg}) to (\ref{eq:carlfq}) with
\begin{equation}\label{eq:AN}
\Ar_\nt(t)=\left(\begin{array}{ccccc}
                \nu\tilde{\Ar}_1^1 & \Ar^1_2 & 0 & \cdots & 0  \\
                \Ar^2_1 & \nu\tilde{\Ar}_2^2 & \Ar^2_3 & \cdots & 0 \\
                0 & \Ar^3_2 & \nu\tilde{\Ar}_3^3 & \Ar^3_4 & 0  \\
                \vdots & \vdots & \vdots & \vdots & \vdots  \\
                0 & 0 & 0 & \Ar^{N}_{N-1} & \nu\tilde{\Ar}^{N}_N\\
              \end{array}
            \right),
\end{equation}
where
\begin{equation}\label{eq:Adef1}
\tilde{\Ar}^{i}_{i}=\sum_{p=1}^{i}\overbrace{\Id_{\nxx\times \nxx}\otimes\cdots\otimes\underbrace{\tilde{\F}_1}_{p-\textsf{th position}}\otimes\cdots\otimes \Id_{\nxx\times \nxx}}^{i \textsf{ factors}}.
\end{equation}
We further decompose $\Ar_\nt$ as
\begin{equation*}
\Ar_\nt=\Ar_{\nt,1}+\nu\Ar_{\nt,2},
\end{equation*}
with
\begin{equation*}
\Ar_{\nt,1}=\left(\begin{array}{ccccc}
                0 & \Ar^1_2 & 0 & \cdots & 0  \\
                \Ar^2_1 & 0 & \Ar^2_3 & \cdots & 0 \\
                0 & \Ar^3_2 & 0 & \Ar^3_4 & 0  \\
                \vdots & \vdots & \vdots & \vdots & \vdots  \\
                0 & 0 & 0 & \Ar^{N}_{N-1} & 0\\
              \end{array}
            \right),\quad  \Ar_{\nt,2}=\left(\begin{array}{ccccc}
                \tilde{\Ar}_1^1 & 0 & 0 & \cdots & 0  \\
                0 & \tilde{\Ar}_2^2 &0 & \cdots & 0 \\
                0 & 0& \tilde{\Ar}_3^3 & 0 & 0  \\
                \vdots & \vdots & \vdots & \vdots & \vdots  \\
                0 & 0 & 0 & 0 & \tilde{\Ar}^{N}_N\\
              \end{array}
            \right).
\end{equation*}

Using the backward Euler time stepping and taking $M=n_t-1$ time steps with step size $h=T/M$, $\tilde{\Am}$ in (\ref{eq:n_backlin}) can be expressed as,
\begin{eqnarray*}
 \tilde{\Am}(\nu) &=& \left(
  \begin{array}{cccc}
    \Id & 0 & 0 & \cdots \\
    -\Id & \Id-[(\Ar_{\nt,1}(0)+\nu\Ar_{\nt,2}(0)) h] & 0 & \cdots \\
    0 & 0 & \ddots & \ddots \\
    0 & 0 &  -\Id & \Id-[(\Ar_{\nt,1}((\tsteps-1)h))+\nu\Ar_{\nt,2}((\tsteps-1)h)) h] \\
  \end{array}
\right) \\
  &=&  \tilde{\Am}_1+\nu\tilde{\Am}_2,
\end{eqnarray*}
where
\begin{align*}
\tilde{\Am}_1 &= \left(
  \begin{array}{cccc}
    \Id & 0 & 0 & \cdots \\
    -\Id & \Id-\Ar_{\nt,1}(0)h & 0 & \cdots \\
    0 & 0 & \ddots & \ddots \\
    0 & 0 & -\Id & \Id - \Ar_{\nt,1}((\tsteps-1)h)) h \\
  \end{array}
\right), \\
\tilde{\Am}_2&=\left(
  \begin{array}{cccc}
    0 & 0 & 0 & \cdots \\
    0 & -\Ar_{\nt,2}(0) h & 0 & \cdots \\
    0 & 0 & \ddots & \ddots \\
    0 & 0 &   & -\Ar_{\nt,2}((\tsteps-1)h)h \\
  \end{array}
\right).
\end{align*}
Consequently, for VQLS one can apply LCU for $\tilde{\Am}_1$, $\tilde{\Am}_2$ once, and then generate a parameter dependent LCU for $ \tilde{\Am} $ for any parameter $\nu$ for implementation of the nBVQPCO framework.  

Furthermore, the optimization problem (\ref{eq:opt}-\ref{eq:optparmct}) can be approximated as
\begin{eqnarray}
&&  \min_{\nu} \frac{h}{T}\frac{\la \yv-\Hm\tilde{\wv},\yv-\Hm\tilde{\wv}\ra}{|\la \hv , \tilde{\wv}\ra|^2},\\
\mbox{subject to: } &&\tilde{\Am}(\nu)\tilde{\wv}=\tilde{\bv}.\\
&& \nu_{min}\leq  \nu \leq \nu_{max},
\end{eqnarray}
where $u_0(x_p)=\la \hv , \tilde{\wv}\ra$ for some appropriate vector $\hv$, $\yv=(y(0),\cdots,y(kh),\cdots,y(\tsteps h))^\tra$ and $\Hm$ is an appropriate matrix which picks sub vector $(\hat{w}^0_{1p},\cdots,\hat{w}^{\tsteps}_{1p})^\tra$ from $\tilde{\wv}$, where $\hat{w}^k_{1p}$ is $p$-th element taken from first $\nxx$ components $\wvh^k_{1}$ of vector $\wvh^k$ for each $k=0,\cdots,\tsteps$. Note that
\begin{eqnarray*}
  \frac{\la \yv-\Hm\tilde{\wv},\yv-\Hm\tilde{\wv}\ra}{|\la \hv , \tilde{\wv}\ra|^2} &=& \frac{\la \yv, \yv\ra}{u^2_0(x_p)}-\frac{2\la \yv, \Hm\tilde{\wv}\ra}{|u_0(x_p)|\la \hv , \tilde{\wv}\ra}+\frac{\la \Hm\tilde{\wv}, \Hm\tilde{\wv}\ra}{|\la \hv , \tilde{\wv}\ra|^2}, \\
   &=&  \frac{\la \yv, \yv\ra}{u^2_0(x_p)}-\frac{2\la \yv, \Hm\tilde{\wv}\ra}{|u_0(x_p)|\la \hv , \tilde{\wv}\ra}+\frac{\la \Hm\tilde{\wv}, \Hm\tilde{\wv}\ra}{|\la \hv , \tilde{\wv}\ra|^2},\\
   &=& \frac{\la \yv, \yv\ra}{u^2_0(x_p)}-\frac{2\|\Hm^T\yv\|}{|u_0(x_p)|\|\hv\|}\frac{\la \Hm^T\yv|\tilde{\wv}\ra}{\la \hv| \tilde{\wv}\ra}+\frac{1}{\|\hv\|^2}\frac{\la \tilde{\wv}| \Hm^T\Hm|\tilde{\wv}\ra}{|\la \hv| \tilde{\wv}\ra|^2}.
\end{eqnarray*}
Thus, the optimization problem can be expressed in the form of (\ref{eq:transopt1}-\ref{eq:transopt3}) as
\begin{eqnarray}
&&  \min_{\nu} \frac{h}{T} \left(\frac{\la \yv, \yv\ra}{u^2_0(x_p)} -\frac{2\|\Hm^T\yv\|}{|u_0(x_p)|\|\hv\|}\frac{\la \Hm^T\yv|\tilde{\wv}\ra}{\la \hv| \tilde{\wv}\ra}+\frac{1}{\|\hv\|^2}\frac{\la \tilde{\wv}| \Hm^T\Hm|\tilde{\wv}\ra}{|\la \hv| \tilde{\wv}\ra|^2}\right),\label{eq:opttrans}\\
\mbox{subject to: } && \tilde{\Am}(\nu)|\tilde{\wv}\ra=|\tilde{\bv}\ra,\\
&& \nu_{min}\leq  \nu \leq \nu_{max}.
\end{eqnarray}
Note that all the inner products can be computed using appropriate quantum circuits once solution from VQLS is available as described in the~\Cref{sec:nBVQPCO}.  


\subsection{LCU Decomposition with Sigma Basis}\label{sec:lcusigma}
For sparse and structured matrices, such as arising from PDE discretization, LCU type decomposition based on sigma basis can offer significant advantage ~\cite{gnanasekaran2024efficient,liu2021variational} as compared with the Pauli basis discussed in~\Cref{sec:QCquadODE}. For example, for the matrices arising from discretrization of Possion equation ~\cite{liu2021variational} and heat equation ~\cite{gnanasekaran2024efficient} the number of LCU terms under sigma basis only scale logarithmically, i.e. as $\mathcal{O}(n)$ with the matrix size $2^n \times 2^n$, compared to the Pauli basis for which the number of terms can vary from $\mathcal{O}(2^n)$ (diagonal matrices) to $\mathcal{O}(2^{2n})$ (dense matrix). 

The sigma basis set $S$ comprises of following operators
\begin{equation}\label{eq:sigmabasis}
S = \{\Sp, \Sm, \Sp\Sm, \Sp\Sm, \Id\},
\end{equation}
where $\Sp=\ket{0} \bra{1}$, $\Sm=\ket{1}\bra{0}$, $\Sp\Sm = \ket{0}\bra{0}$ and $\Sm\Sp=\ket{1}\bra{1}$. Given a matrix $\tilde{\Am}$, its LCU type tensor product decomposition in terms of sigma basis takes a similar form as (\ref{eq:LCU}) but with $\Amt_i\in \mathcal{S}_n$ where
\begin{equation*}
\mathcal{S}_n=\{\sigma_1\otimes \sigma_2 \dots \otimes \sigma_n:\sigma_i\in S\}.
\end{equation*}
Even though some of the operators in $S$ are non-unitary, using the concept of unitary completion, one can still design efficient quantum circuits for computing the global/local VQLS cost functions, see ~\cite{gnanasekaran2024efficient} for details. 

We use a recursive strategy for tensor decomposition for $\tilde{\Am}_1, \tilde{\Am}_2$ under the sigma basis as follows. For simplicity, assume that the forcing function $f(x,t) \equiv 0$. Since, $\F_1, \F_2$ as defined in~\Cref{eq:F0F1def} and~\Cref{eq:F2} are time independent, $\Am_{N}(t) \coloneqq \Am_N$ is also time independent. Assume $n_x=2^s$, $n_t=2^t$.  Note that $\Am_{N} \in \mathbb{R}^{N_c \times N_c}$ where $N_c = \frac{n_x^{N+1}-{n_x}}{n_x-1}$ is not a power of $2$ and it is padded with zero block of size $N_0 = 2^{sN+1} - N_c $ and thus can be represented with a ${sN+1}$ qubit system. Then, $\tilde{\Am}_1$ corresponds to $sN+1+t$ qubits and can be written as 
\begin{align*}
 \tilde{\Am}_1 \coloneqq \tilde{\Am}_1^{(sN+1+t)} &= \hat{\Am}_1^{(sN+1+t)} + \underbrace{\Sp\Sm \otimes \dots \otimes \Sp\Sm}_{t \text{ times}} \otimes \Am_{N,1}\, h ,
\end{align*}
where
\begin{align*}
 \hat{\Am}_1^{(sN+1+t)} &= \begin{pmatrix}
 \hat{\Am}_1^{(sN+t)} & 0 \\
 \Dm_1^{(sN+t)} & \hat{\Am}_1^{(sN+t)}
 \end{pmatrix}, \\
\Dm_1^{(sN+t)} &= \begin{pmatrix}
0 & \dots & -\Id_{sN+1} \\
\vdots & \ddots & \vdots \\
0 & \dots & 0
\end{pmatrix} = -\underbrace{\Sp \otimes \dots \otimes \Sp}_{t \text{ times}} \otimes \Id_{sN+1}.
\end{align*} 
Then
\begin{align*}
\hat{\Am}_1^{(sN+1+t)} &= \Id_2 \otimes \hat{\Am}_1^{(sN+t)} + \Sm \otimes \Dm_1^{(sN+t)}, \\
\hat{\Am}_1^{(sN+2)} &= \begin{pmatrix*}[c]
\Id_{sN+1}- \Am_{N,1}h & 0\\
-\Id_{sN+1} & \Id_{sN+1} - \Am_{N,1}h
\end{pmatrix*} = \Id_2 \otimes \left(\Id_{sN+1}- \Am_{N,1}h \right)- \Sm \otimes \Id_{sN+1}. 
\end{align*}
Similarly, $\Tilde{\Am}_{2}$ can be written as
\begin{align*}
\Tilde{\Am}_2 &= \Id_{n_t} \otimes \Am_{N,2} - \underbrace{\Sp\Sm \otimes \dots \otimes \Sp\Sm}_{t \text{ times}} \otimes \Am_{N,2}.
\end{align*}
Next, we find the tensor decomposition of $\Am_{N,1}, \Am_{N,2}$. Each diagonal block of $\Am_{N,2}$ is of the form
\begin{align*}
\tilde{\Ar}^{i}_{i}=\sum_{p=1}^{i}\overbrace{\Id_{\nx\times \nx}\otimes\cdots\otimes\underbrace{\tilde{\F}_1}_{p-\textsf{th position}}\otimes\cdots\otimes \Id_{\nx\times \nx}}^{i \textsf{ factors}},
\end{align*}
where
\begin{align*}
\Tilde{\F}_1 &\coloneqq \Tilde{\F}_1^{(s)} = \Id_2 \otimes \Tilde{\F}_1^{(s-1)} + \Sm \otimes \Tilde{\Dm}_1^{(s-1)} + \Sp \otimes (\Tilde{\Dm}_1^{(s-1)})^T, \\
\Tilde{\Dm}_1^{(s-1)} &= \underbrace{\Sp \otimes \dots \otimes \Sp}_{s-1 \text{ times}}, \\
\Tilde{\F}_1^{(1)} &= -2\,\Id_2 + \Sm + \Sp.
\end{align*}
With these relations, $\Tilde{\F}_1^{(s)}$ can be written using $2\log n_x +1$ terms. As a result, $\tilde{\Ar}^{i}_{i}$ can be written using $i\left( 2\log n_x +1\right)$ terms out of which the factor $\Id \otimes \dots \otimes \Id$ will repeat $i-1$ times and can be combined into a single term. Hence, $\tilde{\Ar}^{i}_{i}$ can be written using $2i \log n_x + 1$ terms. Finally $\Am_{N,2}$ can be decomposed using $\sum_i \left(2i \log n_x + 1\right) = N + N (N+1) \log n_x$ terms. 

Next, we focus on the $\Am_{N,1}$ term. Since $f(x,t) = 0$, the sub-diagonal blocks $\Am^{i+1}_i$ are zero. Each of the super-diagonal block $\Am_{i+1}^i \in \mathbb{R}^{n_x^i \times n_x^{i+1}}$. The first term $\Am_{2}^1 = \F_2 \in \mathbb{R}^{n_x\times n_x^2}$ can be split into square blocks $\F_2^i$ of size $n_x \times n_x$
\begin{align*}
\Am_{2}^1 = \F_2  = \begin{pmatrix}
\F_2^1 & \F_2^2 & \dots & \F_2^{n_x}
\end{pmatrix}.
\end{align*}
Let's say that each block $\F_2^i, i=1,\cdots,\nxx$ can be decomposed with at most $C$ terms where $C$ is a small constant. We will show below that $C\leq 2$ and there are no repeat factors in the decomposition of $\F_2^p$ and $\F_2^q$ for any $p\neq q$. Then, representing $\Am_{2}^1$ in $\Am_{N,1}$ requires $C\,n_x$ terms; one for each $\F_2^i$. Next, we can write
\begin{align*}
\Am_3^2 &= \Id \otimes \F_2 + \F_2 \otimes \Id \\
&= \begin{pmatrix}
\F_2 & & \\
& \F_2 & \\
& & \ddots & \\
& & & \F_2
\end{pmatrix} + \begin{pmatrix}
\F_2^1 \otimes \Id  & \F_2^2 \otimes \Id & \dots & \F_2^{n_x} \otimes \Id
\end{pmatrix}.
\end{align*} 
Each block $\F_2^i \otimes \Id $ is of size $n_x^2 \times n_x^2$ and can be written directly using the tensor decomposition of $\F_2^i$. Then, $\F_2 \otimes \Id$ requires $C\,n_x$ terms. On the other hand, each of the diagonal blocks of $\Id \otimes \F_2$ requires $n_x$ terms. Thus, representing $\Am_3^2$ in $\Am_{N,1}$ requires $C\left(n_x + n_x^2\right)$ terms. In general, we can prove by induction that the number of terms needed are as follows: 
\begin{align*}
\Am_{2}^1 &:C\, n_x \text{ terms}, \\
\Am_{3}^2 &: C \left(n_x +n_x^2\right) \text{ terms}, \\
\Am_{4}^3 &:  C \left(n_x + n_x^2 + n_x^3\right) \text{ terms}, \\
\vdots & \qquad \vdots \\
\Am_{N}^{N-1} &: C\sum_{i=1}^{N-1} n_x^i \text{ terms}.
\end{align*}
Thus, $\Am_{N,1}$ has a total of $\sum_{j=1}^{N-1} \sum_{i=1}^{j}C\, n_x^i = \frac{C\,n_x}{(n_x-1)^2}\left( n_x^N-1 - N(n_x-1) \right)$. Putting it all together, 
\begin{align*}
\tilde{\Am}_1 &: \log n_t + 1 + 2\left( \frac{C\,n_x}{(n_x-1)^2}\left( n_x^N-1 - N(n_x-1) \right)\right) \text{ terms}, \\
\tilde{\Am}_2 &: 2 \left( N + N (N+1) \log n_x\right) \text{ terms} .\\
\end{align*} 

Finally, we look at the decomposition of the blocks $\F_2^i$ and determine the constant $C$. 
The first and last block of $\F_2$ have one non-zero entry with following structure
\begin{align*}
\F_2^1 &= \frac{1}{2\Delta x} \begin{pmatrix}
0 & -1 & 0 & \dots & 0 \\
0 & 0 & 0 & \dots & 0\\
 &   \vdots & & \vdots & \\
 0 & 0 & 0 & \dots & 0\\
\end{pmatrix} = \frac{-1}{2\Delta x} \underbrace{\Sp\Sm \otimes \dots \otimes \Sp\Sm}_{s-1 \text{ times}} \otimes \Sp, \\
 \F_2^{n_x} &= \frac{1}{2\Delta x} \begin{pmatrix}
0 &  \dots & 0 & 0 & 0\\
0  & \dots & 0 & 0 & 0\\
 &   \vdots & \vdots &  \\
 0 &  \dots & 0 & 1 & 0\\
\end{pmatrix} = \frac{1}{2\Delta x} \underbrace{\Sm\Sp \otimes \dots \otimes \Sm\Sp}_{s-1 \text{ times}} \otimes \Sm,\\
\end{align*}
and thus have a single decomposition term, i.e. $C=1$. The remaining $\F_2^k$ blocks have exactly two non-zero entries and can be decomposed into two terms (i.e., $C=2$), which follows from the result below.  

\begin{theorem}\label{thm:sigma_decomp}
Let $A_s$ be a $2^s \times 2^s$ matrix with a single non-zero entry in position $(r,c)$ $r,c \in \{0, 1, \dots, n_x-1\}$. Let the binary representation of $r = \sum_{p=0}^{s-1} 2^p q_r(p)$ and $c = \sum_{p=0}^{s-1} 2^p q_c(p)$.  Then, the tensor decomposition of $A$ in sigma basis is given by $A_s = A_s(r,c) \left(\otimes_p \sigma (q_r(p), q_c(p))\right)$ where,
\begin{align*}
\sigma (q_r(p), q_c(p)) &= \begin{cases}
  \Sp\Sm & \text{ if } q_r(p)=0,q_c(p)=0 \\
   \Sp & \text{ if } q_r(p)=0,q_c(p)=1 \\
   \Sm & \text{ if } q_r(p)=1,q_c(p)=0\\
  \Sm\Sp & \text{ if } q_r(p)=1,q_c(p)=1
\end{cases}.
\end{align*} 
\end{theorem}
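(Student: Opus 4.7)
The plan is to prove \Cref{thm:sigma_decomp} by a direct factorization of a single-entry matrix into a tensor product of rank-one single-qubit operators. Since $A_s$ has exactly one non-zero entry, located at $(r,c)$, the first step is to write
\[
A_s \;=\; A_s(r,c)\,|r\rangle\langle c|,
\]
where $|r\rangle$ and $\langle c|$ are computational basis vectors in $\Cr^{2^s}$. This reduces the theorem to identifying $|r\rangle\langle c|$ with the claimed tensor product $\bigotimes_p \sigma(q_r(p),q_c(p))$.

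The second step uses the fact that computational basis states factorize along their binary expansion. Fixing a bit-ordering convention consistent with~\Cref{sec:lcusigma}, the expansions $r=\sum_p 2^p q_r(p)$ and $c=\sum_p 2^p q_c(p)$ give $|r\rangle = \bigotimes_p |q_r(p)\rangle$ and $\langle c| = \bigotimes_p \langle q_c(p)|$. Applying the tensor identity $(u_1\otimes u_2)(v_1^\dagger\otimes v_2^\dagger) = u_1 v_1^\dagger \otimes u_2 v_2^\dagger$ inductively across the $s$ qubits yields
\[
|r\rangle\langle c| \;=\; \bigotimes_p |q_r(p)\rangle\langle q_c(p)|.
\]
The final step is to match each single-qubit factor $|q_r(p)\rangle\langle q_c(p)|$ with the sigma basis element $\sigma(q_r(p),q_c(p))$ by a $2\times 2$ case check: using $\Sp=|0\rangle\langle 1|$ and $\Sm=|1\rangle\langle 0|$, one has $|0\rangle\langle 0|=\Sp\Sm$, $|0\rangle\langle 1|=\Sp$, $|1\rangle\langle 0|=\Sm$, and $|1\rangle\langle 1|=\Sm\Sp$, which is exactly the four-branch definition of $\sigma(\cdot,\cdot)$ in the theorem statement. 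Multiplying by the scalar $A_s(r,c)$ gives the claimed decomposition.

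There is no substantive obstacle here: the statement is essentially the observation that a single-entry matrix is a pure tensor product of elementary single-qubit outer products, and the sigma basis is chosen precisely to provide one operator for each of the four rank-one single-qubit possibilities. The only care point is fixing the qubit ordering in the tensor product consistently with the convention used elsewhere in the paper; once that is pinned down, the identification of factors is immediate. An alternative route, which may integrate more naturally with the recursive decompositions already written out for $\tilde{\Am}_1$ and $\tilde{\Am}_2$, is induction on $s$: partition $A_s$ into four $2^{s-1}\times 2^{s-1}$ blocks according to the most-significant bits of $r$ and $c$, note that only one block is non-zero, apply the inductive hypothesis to that block, and prepend the corresponding single-qubit $\sigma$ factor dictated by the leading bits.
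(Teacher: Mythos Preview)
Your proof is correct. Your primary route---writing $A_s=A_s(r,c)\,|r\rangle\langle c|$, factorizing $|r\rangle$ and $\langle c|$ qubit-wise via the binary expansions, and then matching each single-qubit outer product $|q_r(p)\rangle\langle q_c(p)|$ to the corresponding sigma basis element---is a direct, non-inductive argument. The paper instead proves the result by induction on $s$: it blocks $A_s$ into four $2^{s-1}\times 2^{s-1}$ blocks indexed by the most significant bits of $r$ and $c$, observes that exactly one block is non-zero, reads off the leading $\sigma$ factor from which block that is, and then invokes the induction hypothesis on the surviving block. This is precisely the alternative you sketch in your last paragraph. Your direct approach is slightly more transparent because it exposes all $s$ factors at once and makes the connection to the computational-basis outer product explicit; the paper's inductive argument dovetails with the recursive block decompositions already used for $\tilde{\Am}_1,\tilde{\Am}_2$ in~\Cref{sec:lcusigma}. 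Either is entirely adequate for this elementary statement.
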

\begin{proof}
This can be proved by induction. Trivially true for $s=1$, by the definition of sigma basis. Assume, the statement holds for $s-1$. Then, 
\begin{align*}
A_s = A_s(r,c) \left(\begin{bmatrix}
a & b \\ c & d
\end{bmatrix} \otimes A_{s-1} \right)
\end{align*}
where $a, b,c, d \in \{0,1 \}$. Since, there is only one non-zero entry in $A_s$, only one of $a, b, c$ or $d$ can be non-zero. If $ r\,//\,2^{s-1} = c\,//\, 2^{s-1} = 0$, where $//$ corresponds to integer division, then $a=1$. Equivalently, using the binary representation of $r, c$, if $q_r(s-1) = q_c(s-1) = 0$, then $a=1$ and the corresponding basis is $\Sp\Sm$. Similarly, $b=1$ if $q_r(s-1) = 0, q_c(s-1)=1$, $c=1$ if $q_r(s-1)=1, q_c(s-1)=0$, and $d=1$ if $q_r(s-1)=q_c(s-1)=1$. Combining this with the induction hypothesis completes the proof. 
\end{proof}

In general, the tensor decomposition of a square matrix $A_{rc}$ in sigma basis with a single non-zero entry at $(r,c)$ position for $r,c \in \{0, 1, \dots, n_x-1\}$ can be determined as given in~\Cref{thm:sigma_decomp}. 

To demonstrate the efficacy of using sigma basis compared to the Pauli basis, we compute the number of terms needed in the LCU decomposition for different matrix sizes for $\tilde{\Am}_1, \tilde{\Am}_2$. To vary the matrix size we consider different number of spatial/temporal discretization points $\nxx$ and $\tstepst$, respectively and different values of CL truncation level $N$. For Pauli basis we used the matrix slicing method proposed in~\cite{lcu} to determine the number of terms, while for sigma basis we used analytical expressions derived above. Figure~\ref{fig:lcucomp} shows the comparison and highlights significantly better scaling for the sigma basis. For example, for $\nxx=\tstepst=4$, $\nt=4$, LCU with Pauli basis generates $276,000$ terms whereas sigma basis gives rise to only $111$ terms. Furthermore, computation of the Pauli LCU decomposition was too time consuming as the problem size increases as indicated by the missing data points for $\nxx=\tstepst=8,16$ and $\nt>2$.   

\begin{figure}
\centering
\includegraphics[scale=0.5]{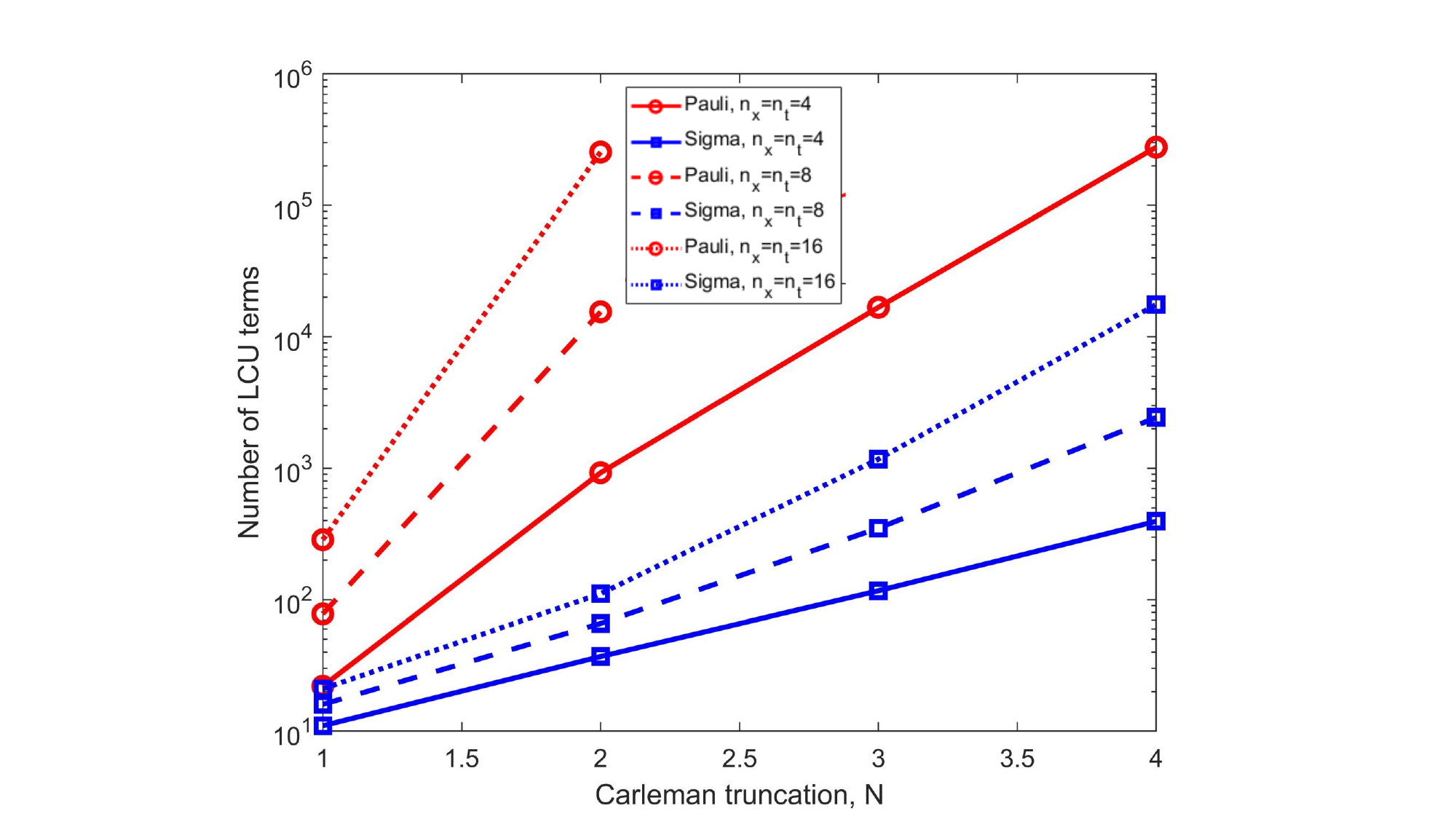}
\caption{Comparison of number of LCU terms using sigma basis (blue curve) and Pauli basis (red curves) for different values of CL truncation levels $N$ and number of spatial/temporal discretization points $\nxx$ and $\tstepst$.}
\label{fig:lcucomp}
\end{figure}

\subsection{Implementation}
In this section, we demonstrate the application of our nBVQPCO framework to solve the inverse problem described in ~\Cref{sec:invprob}.  We take $L=0.5$ and $T=0.35$, and assume that the forcing function $f(x,t) \equiv 0$. The problem is discretized on a spatial grid with $\nxx=4$ grid points leading to $\Delta x =0.1$. For time discretization we take $n_t=8$ time steps resulting in $\Delta t = 0.05$. We consider an initial condition of the form $u_0(x) = c\sin(k(x-\Delta x)), k=\frac{2\pi}{L}$, which upon discertization takes the form
\begin{equation}
 \uv_0 = c \left(\begin{array}{cc}
    \sin(0)   \\
     \sin(k\Delta x) \\
     \vdots\\
     \sin((\nxx-1)k\Delta x)
 \end{array}\right).  \label{eq:u0des}
\end{equation}
For simplicity we choose $c$ such that $\|\uv_0 \|=1$. To simulate the measurement data $y(t),t\in[0,T]$, we take the measurement point to be $x_p=x_2$ where $x_i=i\Delta x$. We integrate the nonlinear ODEs~(\ref{eq:odeburg}), and generate the measurement data $y(t_i)=u_2(t_i),t_i=ih, i=0,\cdots,n_t-1$ using $\nu=0.07$. 
In our numerical studies we investigate $N=1$ and $N=2$ truncation for the CL. 

The nBVQPCO framework was implemented using the PennyLane software framework for quantum computing. PennyLane's \texttt{lightning.qubit} device was used as the simulator, \texttt{autograd} interface was used as the automatic differentiation library and the adjoint method was used for gradient computations. We do not consider any shot, device or measurement noise throughout this study. 

\subsubsection{VQLS Implementation and Results}
In this section we provide details related to VQLS implementation including state preparation circuit,  and the selected ansatz, optimizer, and LCU decomposition approach.

\paragraph{State Preparation:} Here we describe circuit to construct $\tilde{\bv}$ in (\ref{eq:n_backlin}). Given padding of $\Am_N$ as discussed in~\Cref{sec:lcusigma}, we equivalently pad $\bvt$ which takes the form
\begin{equation}\label{eq:btildeburg}
\bvt=\left(\begin{array}{c}
           0\\
         \wvh(0) \\
         0 \\
         h\bv(0,\pv)\\
         \vdots \\
         0 \\
          h\bv((\tstepst-2)h,\pv) \\
       \end{array}\right)=\left(\begin{array}{c}
           0\\
         \wvh(0) \\
         0 \\
        0 \\
         \vdots \\
         0 \\
         0 \\
       \end{array}\right),
\end{equation}
where $\wvh(0) = (\uv_0 \quad (\uv_0^{[2]})^\tra \cdots \quad (\uv_0^{[\nt]})^\tra)^\tra$ with $\uv_0$ defined in~\Cref{eq:u0des}. Since $f(x,t)=0$, implies $\F_0(t)=0$ and thus $\bv(t,\pv)=0$ in~\Cref{eq:what}, and the second equality above follows.
Since, all other entries are zero, it is sufficient to focus on state preparation of the form $
\frac{1}{\|\wvh(0)\|}\left(\begin{array}{cc}
     0 \\
     \wvh(0)\\ 
\end{array}\right)$.
We  generate this state using a sequence of RY, (multi-) controlled RY gates and controlled applications of the circuit for generating $\uv_0$ as follows.

State preparation for $\uv_0$ can be accomplished using RY, CNOT and PauliZ gates as shown in~\Cref{fig:sinkx} for different values of $\nxx$. Next, starting from the state $\ket{0}^{\nx_x}$, we first create the state $\ket{\hat{\wv}^1(0)}= \frac{1}{\sqrt{N}}(0, e_1^\tra, (e_1^{[2]})^\tra,\dots, (e_1^{[\nt]})^\tra)^\tra$ where $e_1 = [1, 0, \dots, 0]^\tra\in \Rr^{\nx_x}$ using RY gates as shows in~\Cref{fig:carl_state_part1}. Finally, by adding the circuit to prepare $\uv_0$ and its controlled application as illustrated in~\Cref{fig:carl_state_part2}, we  sequentially create the states
\begin{align*}
\ket{\hat{\wv}^1(0)} &\rightarrow \frac{1}{\sqrt{N}}(0^\tra, \uv_0^\tra, \uv_0^\tra\otimes e_1^\tra, \dots, \uv_0^\tra \otimes (e_1^{[N-1]})^\tra)^\tra \\
&\rightarrow \frac{1}{\sqrt{N}}(0^\tra, \uv_0^\tra, (\uv_0^{[2]})^\tra, \dots, (\uv_0^{[2]})^\tra \otimes (e_1^{[N-2]})^\tra)^\tra \\
&\rightarrow \frac{1}{\sqrt{N}}(0^\tra, \uv_0^\tra, (\uv_0^{[2]})^\tra, (\uv_0^{[3]})^\tra, \dots, (\uv_0^{[3]})^\tra\otimes (e_1^{[N-3]})^\tra)^\tra \\
\vdots \\
&\rightarrow \frac{1}{\sqrt{N}}(0^\tra, \uv_0^\tra, (\uv_0^{[2]})^\tra, (\uv_0^{[3]})^\tra, \dots, (\uv_0^{[N]})^\tra)^\tra=\frac{1}{\|\wvh(0)\|}\left(\begin{array}{cc}
     0 \\
     \wvh(0)\\ 
\end{array}\right).\\
\end{align*}
Note that since $\|\uv_0\|=1$ and $\|\uv_0^{[i]}\|=1$ for $i=1,2,\cdots,\nt$, the states generated above are normalized as required. Finally, since the forcing term $f(x,t)=0$ in our example, simply adding empty $\log_2{n_t}$ 
wires completes state preparation of the state $\bvt$.

\begin{figure}
\centering
\begin{subfigure}[b]{0.25\textwidth}
        \centering
    \includegraphics[scale=0.4]{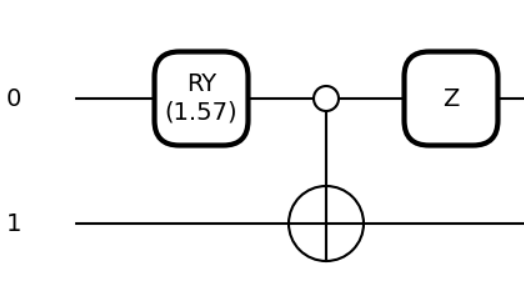}
    \caption{$n_x=4$}
\end{subfigure}
\begin{subfigure}[b]{0.7\textwidth}

    \includegraphics[scale=0.4]{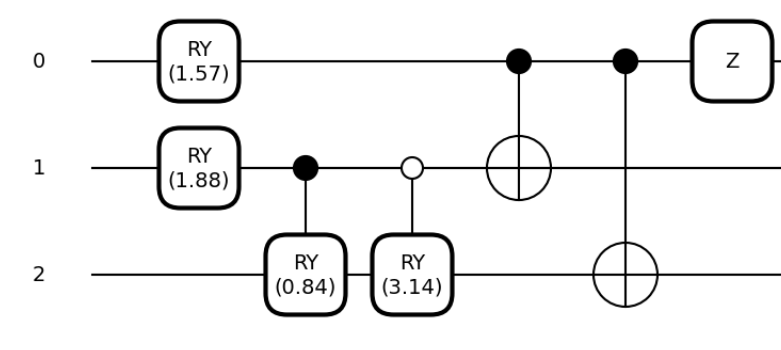}
        \caption{$n_x=8$}
    \end{subfigure}
\caption{Quantum circuit to prepare the initial condition $\uv_0$~\Cref{eq:u0des} for $n_x=4$ and $n_x=8$ grid points. }
\label{fig:sinkx}
\end{figure}

\begin{figure}
\centering
\begin{subfigure}[b]{0.35\textwidth}
        \centering
    \includegraphics[scale=0.4]{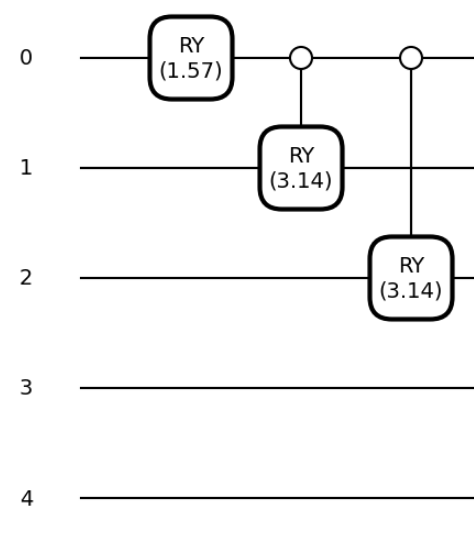}
    \caption{$N=2$}
\end{subfigure}
\begin{subfigure}[b]{0.6\textwidth}

    \includegraphics[scale=0.4]{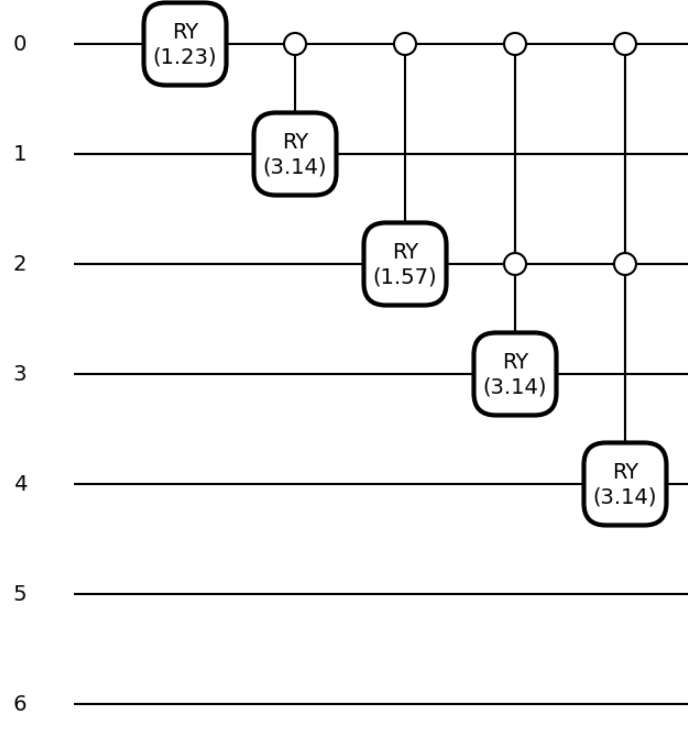}
        \caption{$N=3$}
    \end{subfigure}
\caption{Quantum circuit to prepare the state $\frac{1}{\sqrt{N}}(0^\tra, e_1^\tra, (e_1^{[2]})^\tra,\dots, (e_1^{[\nt]})^\tra)^\tra$ for $n_x=4$ with (a) $N=2$ and (b) $N=3$. }
\label{fig:carl_state_part1}
\end{figure}

\begin{figure}
\centering
\begin{subfigure}[b]{0.32\textwidth}
        \centering
    \includegraphics[scale=0.29]{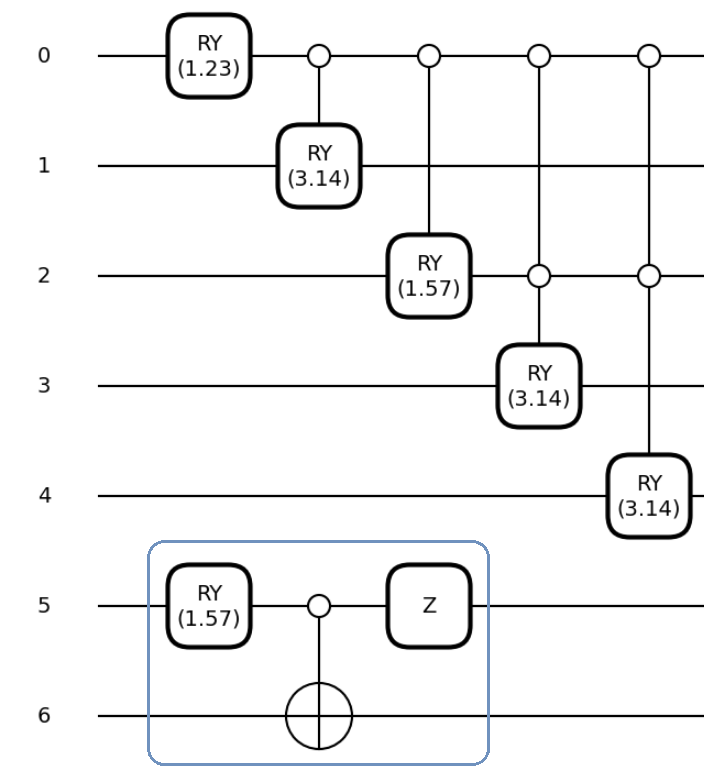}
    \caption{$(0^\tra, \uv_0^\tra, \uv_0^\tra\otimes e_1^\tra, \uv_0^\tra \otimes (e_1^{[2]})^\tra)^\tra$}
\end{subfigure}
\begin{subfigure}[b]{0.65\textwidth}

    \includegraphics[scale=0.32]{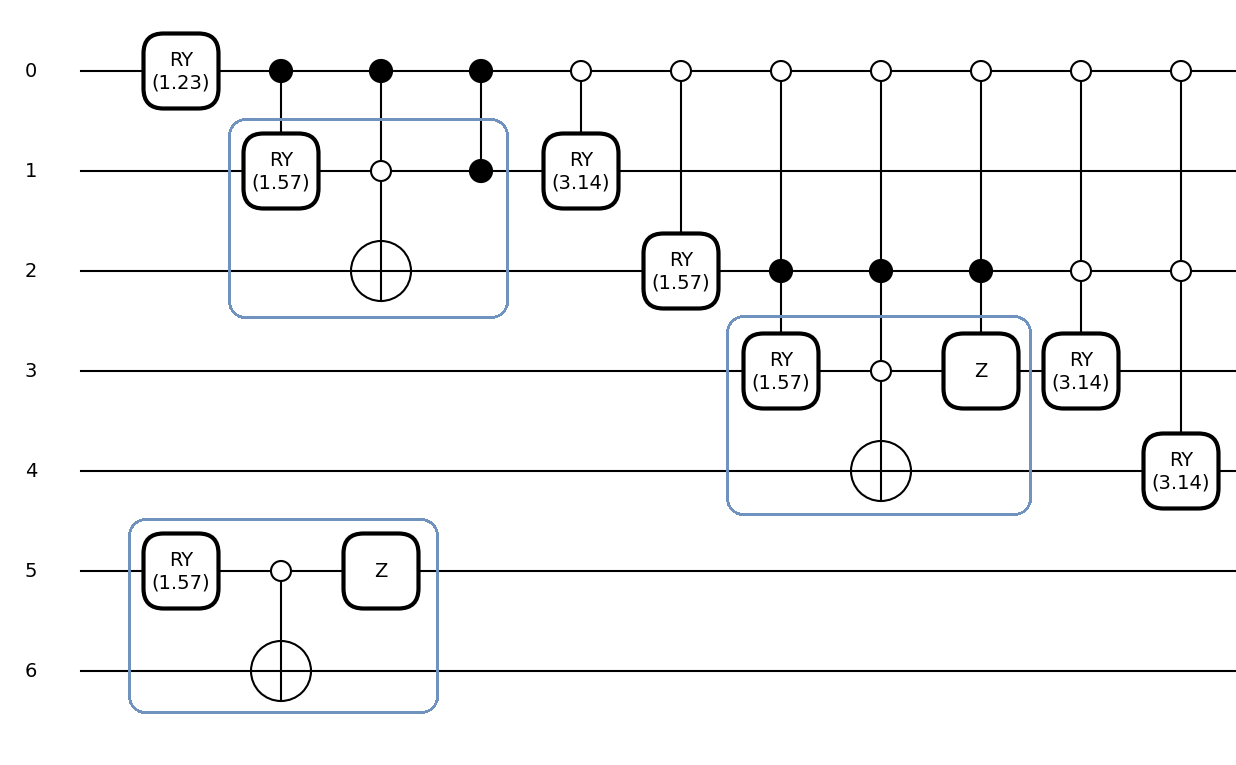}
        \caption{$(0^\tra, \uv_0^\tra, (\uv_0^{[2]})^\tra, (\uv_0^{[2]})^\tra \otimes e_1^\tra)^\tra$}
    \end{subfigure}
    
\begin{subfigure}[b]{1.0\textwidth}
\centering
    \includegraphics[scale=0.4]{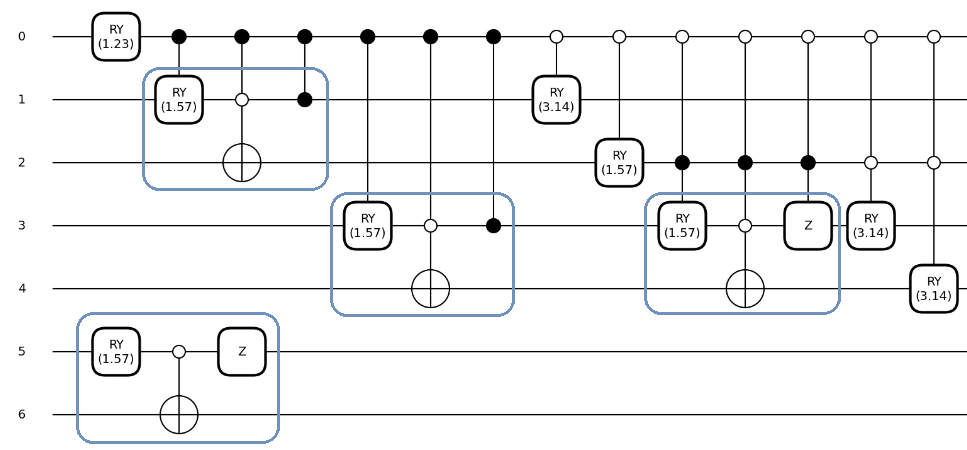}
        \caption{$(0^\tra, \uv_0^\tra, (\uv_0^{[2]})^\tra, (\uv_0^{[3]})^\tra)^\tra$}
    \end{subfigure}
\caption{Sequential construction of the quantum circuit to prepare the desired state 
$(0, \uv_0^\tra, (\uv_0^{[2]})^\tra, (\uv_0^{[3]})^\tra)^\tra$ with $N=3$, $n_x=4$ and $\uv_0$ as defined in~\Cref{eq:u0des}.
}
\label{fig:carl_state_part2}
\end{figure}

\paragraph{Ansatz:} We use a modified version of the ansatz circuit 9 from~\cite{sim2019expressibility}  with 3 repetitions inspired by the work in~\cite{demirdjian2022variational}. This is shown in~\Cref{fig:n_ansatz}.
\begin{figure}
\centering
\includegraphics[scale=0.5]{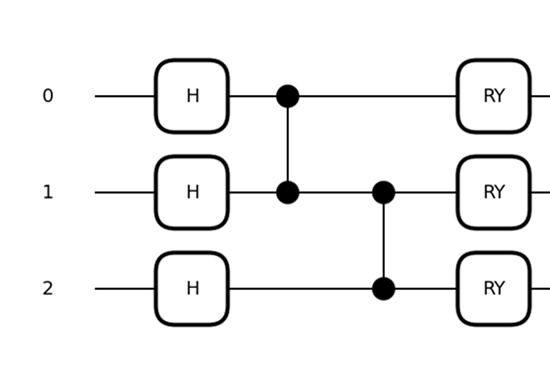}
\caption{Modified real version of circuit 9 from~\cite{sim2019expressibility} that is used as the VQLS ansatz.}
\label{fig:n_ansatz}
\end{figure}

\begin{figure}[hbt!]
   \centering
  \begin{subfigure}[b]{\textwidth}
    \centering  \includegraphics[scale=0.5]{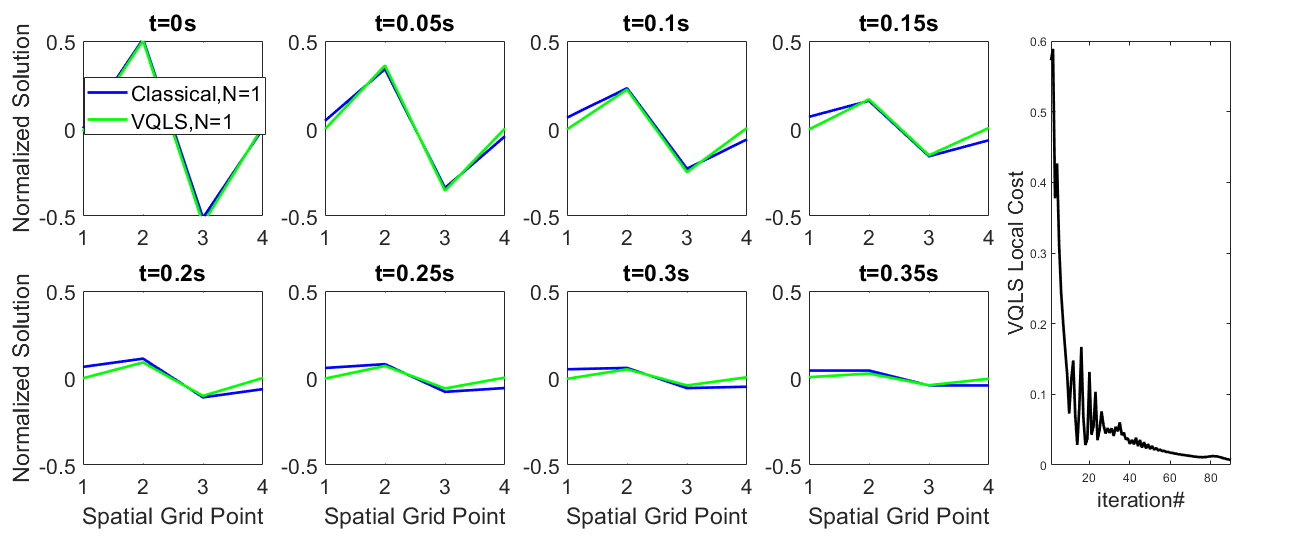}
  \caption{CL truncation level $\nt=1$.}
  \end{subfigure}
  \vfill 
  \begin{subfigure}[b]{\textwidth}
   \centering
    \includegraphics[scale=0.5]{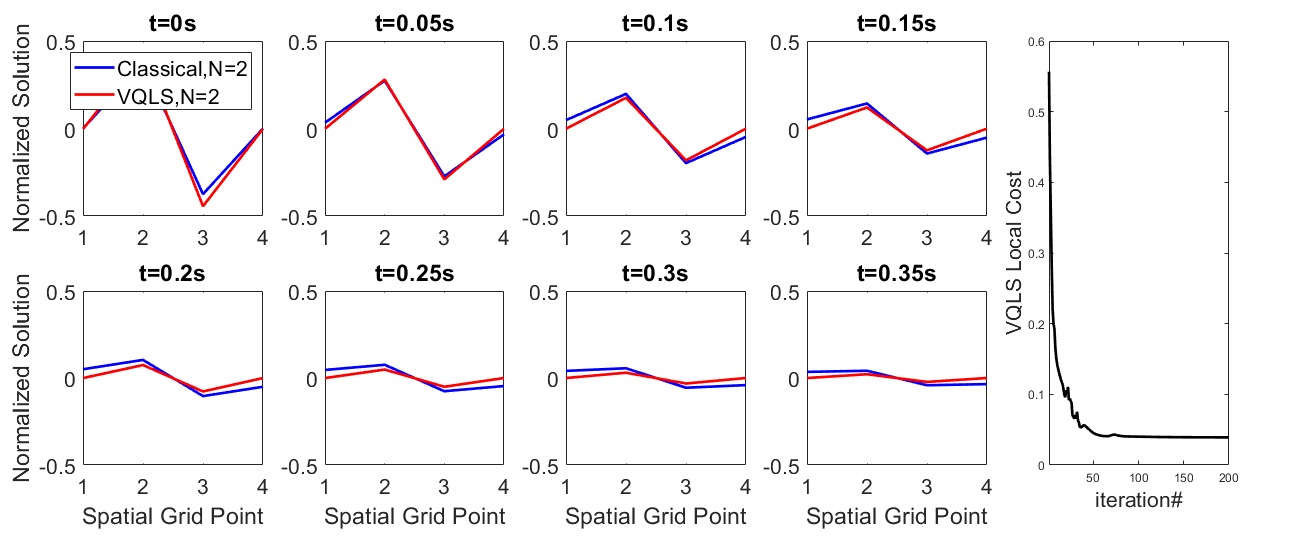}
    \caption{CL truncation level $\nt=2$.}
  \end{subfigure}
  \caption{Comparison of normalized solution, obtained classically and via VQLS, of the implicit linear system (\ref{eq:n_backlin}) for the inverse Burgers problem with $\nu=0.07$ and for two CL truncation levels.}
  \label{fig:vqlsN1N2}
\end{figure}

\paragraph{LCU Decomposition:} As discussed in~\Cref{sec:lcusigma}, we use sigma basis based tensor product decomposition due to its computationally efficiency compared to the Pauli basis for our application. Since, sigma basis is non-unitary, we used circuits based on unitary completion~\cite{gnanasekaran2024efficient} to implement the Hadamard test required for computing the VQLS cost functions.

\paragraph{Optimizer:} There are several choices of optimizers which can be used with VQLS \cite{pellow2021comparison}. We used the Adagrad optimizer from PennyLane with a step size of $0.8$. 
The convergence criteria for VQLS was set to a maximum of $200$ iterations. The optimizer was initialized randomly with samples taken from the Beta distribution with shape parameters $\alpha=\beta=0.5$. 

\paragraph{VQLS Results:} We study the convergence of VQLS for two Carleman truncation levels $N=1,2$ using local VQLS cost function. The VQLS results and convergence of the VQLS cost functions are shown in the~\Cref{fig:vqlsN1N2}. The VQLS solution is compared with the solution obtained by classically solving the underlying linear system~(\ref{eq:n_backlin}). Qualitative comparison of the two solutions indicate that VQLS produces reasonable solutions to the problem.  To characterize VQLS solution quality we define aggregated solution error as time average of norm of normalized solution error at each time step, i.e.
\begin{equation}\label{eq:err}
\mathcal{E}=\frac{1}{n_t}\sum_{i=1}^{n_t} \|\tilde{\wv}^i-\overline{\wv}^i\|,
\end{equation}
where $\tilde{\wv}^i$ is the vector of components taken from $\ket{\tilde{\wv}}$ corresponding to classical solution at the time step $i$, and similarly $\overline{\wv}^i$ is the vector of components taken from $\ket{\overline{\wv}}$ corresponding to VQLS solution at the $i$-th time step. Figure~\ref{fig:compCostN1N2} shows the error $\mathcal{E}$ for different values of $\nu$ and two CL truncation levels. The VQLS solution with $\nt=2$ truncation level has lower error compared to the truncation level $\nt=1$. 

\subsubsection{nBVQPCO Results}\label{sec:conv_design}
Next we study the performance of the nBVQPCO framework. The true optimum of the inverse problem is  $\nu^*=0.07$ as we used that value to generate the simulated measurement data $y(t)$. Since the optimization variable is a scalar, we use exhaustive search as the black box optimizer. In this approach we uniformly sample $\nu$ in the range $[\nu_{min},\nu_{max}]$, use CL+VQLS to generate the normalized solution and compute the design cost~(\ref{eq:opttrans}).  As the curves in Fig.~\ref{fig:compCostN1N2}b indicate, our algorithm returns the minimum solution at $\nu=0.06$ for both CL truncation levels $\nt=1$ and $\nt=2$, which is close to true optimal value of  $\nu^*=0.07$.

\begin{figure}[hbt!]
  \begin{subfigure}[b]{0.5\linewidth}
    \centering
  \centering
  \includegraphics[scale=0.42,trim={0cm 8cm 0cm 7cm}]{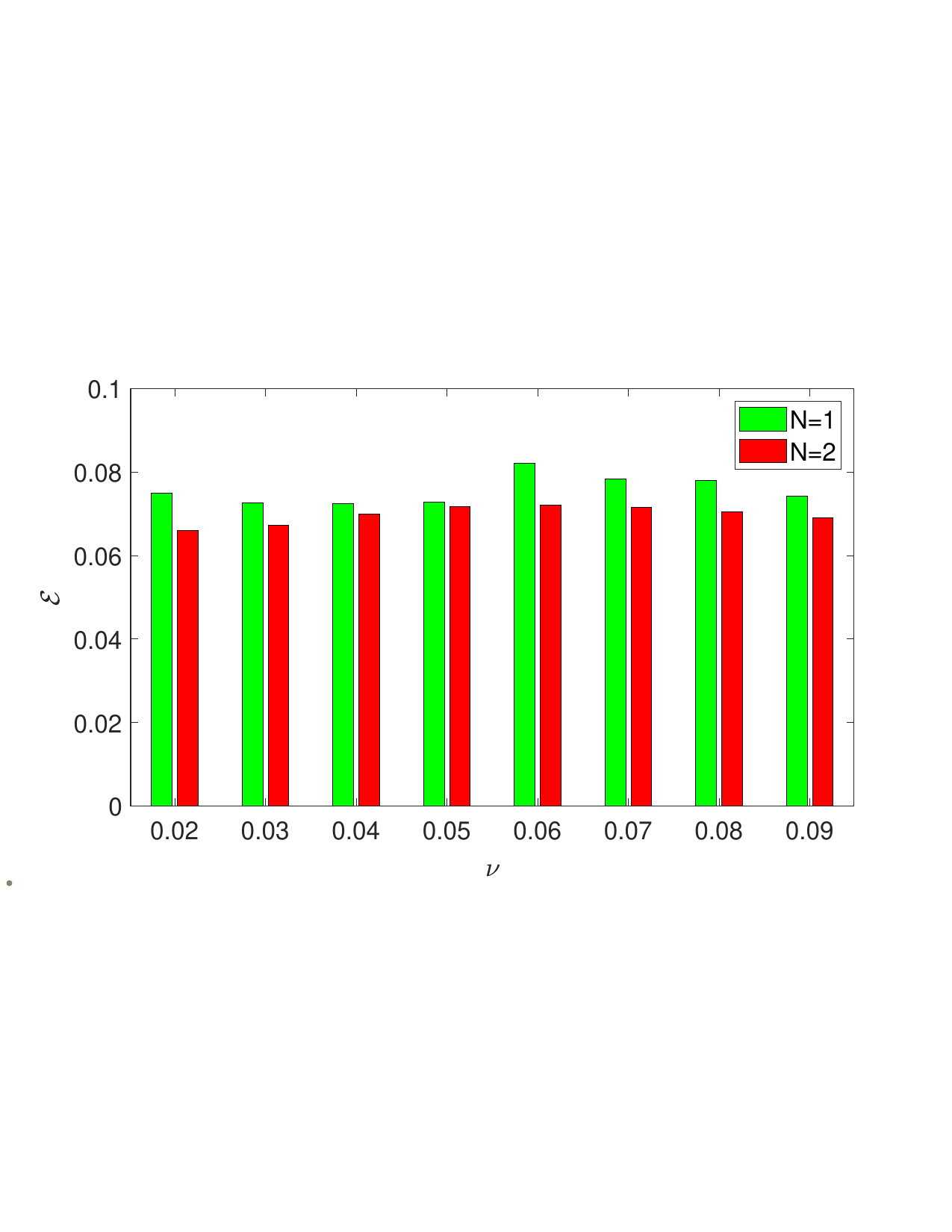}
  \caption{}\label{fig:compN1N2}
\end{subfigure}
\begin{subfigure}[b]{0.5\linewidth}
  \centering
  \includegraphics[scale=0.42,trim={0cm 8cm 0cm 7cm}]{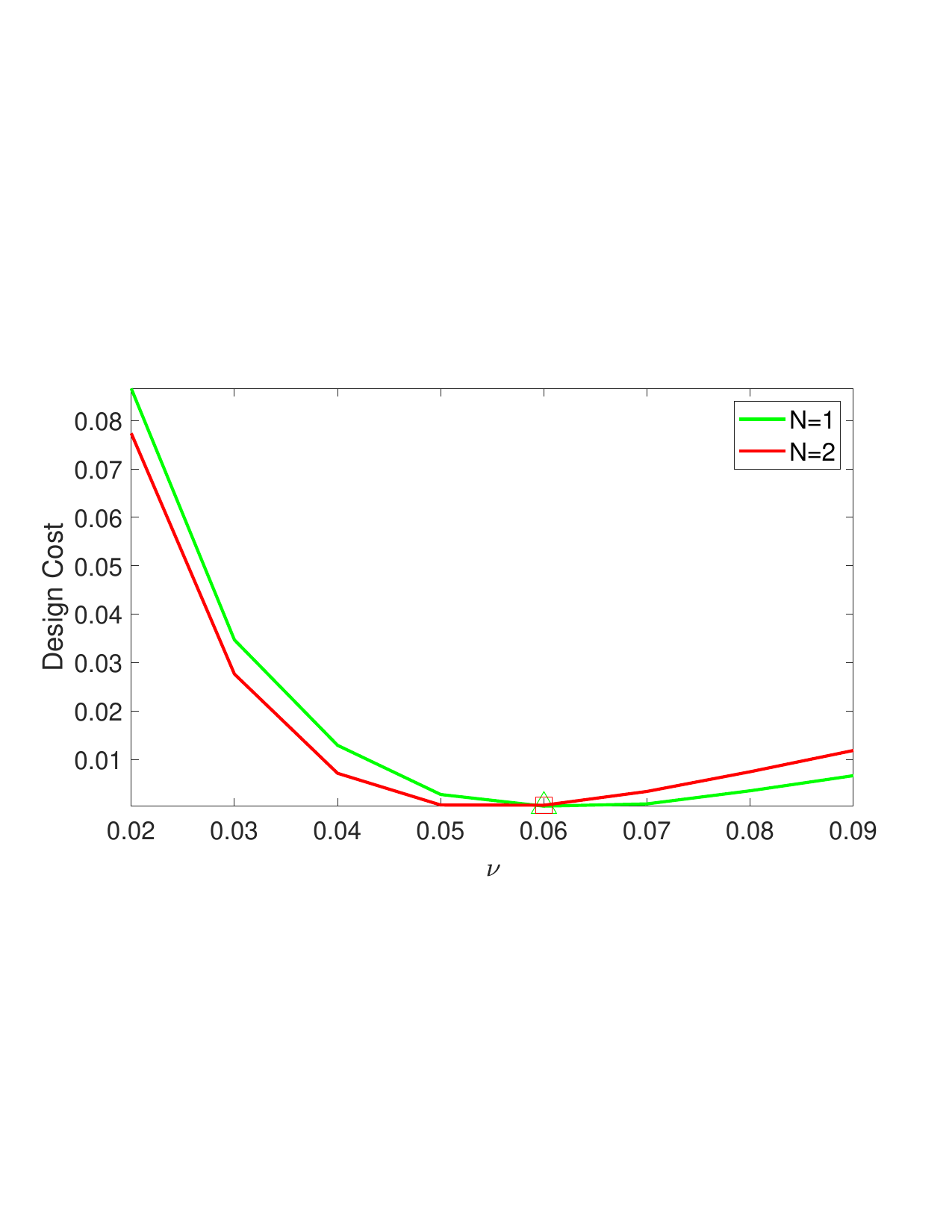}
    \caption{}
\end{subfigure}
\caption{(a) CL+VQLS solution error $\mathcal{E}$ (\ref{eq:err}) for the inverse Burgers problem for different values of $\nu$ and two CL truncation levels $\nt=1$ and $\nt=2$. (b)Design cost (Eq.~\ref{eq:opttrans}) as a function of $\nu$ obtained via our nBVQPCO framework for CL truncation levels $\nt=1$ and $\nt=2$. Also, marked are $\nu$ values where the cost takes the smallest value, which is $\nu=0.06$ for both the truncation levels. The true optima is known to be $\nu=0.07$.}\label{fig:compCostN1N2}
\end{figure}

\section{Discussion}
In this section we discuss various pros/cons of the nBVQPCO framework and a path towards fully fault tolerant quantum computing implementation. While nBVQPCO framework is well suited for NISQ implementation, like other variational quantum algorithms it is difficult to subject it to a fully theoretical complexity analysis. As pointed out in the Remark \ref{rem:vqlsscaling}, it will be worthwhile to analyze VQLS empirical scaling behavior for sparse and structured matrices.  Furthermore, the error and computational complexity results we presented depend on the condition~\Cref{eq:R2}, which can be conservative and too restrictive to be satisfied in practical applications. As pointed out in the Introduction, for CFD type applications, one can work with alternative form of conservation laws, e.g. LBM \cite{li2023potential}, for which the condition~\Cref{eq:R2} can more readily be met in practical regimes of interest.  

One of the bottlenecks in the nBVQPCO framework is the computation of VQLS cost function which scales polynomially with the number of LCU terms. By replacing conventionally employed Pauli basis with sigma basis, we showed in our application that the number of LCU terms scales more favorably. Ideally, one would like this scaling to have a polylogarithmic dependence on the matrix size, and further enhancing the sigma basis type approaches would be necessary to make the the proposed framework scalable. Other variational quantum algorithms such as variational quantum eigensolver (VQE), and quantum approximate optimization algorithm (QAOA) can also significantly benefit from such efficient LCU decompositions.  

In the nBVQPCO framework one can replace VQLS by the quantum linear systems algorithm (QLSA) \cite{dervovic2018quantum,harrow2009quantum,costa2022optimal}. This could be beneficial since QLSA has provable exponential advantage over classical linear system solvers and thus comes with rigorous guarantees unlike variational methods. However,  QLSA based implementation is expected to have high quantum resource needs. For instance in \cite{penuel2024feasibility}, the authors obtained a detailed quantum resource estimate for implementing CL based LBM simulation of incompressible flow fields and concluded that QLSA based framework can only be feasible in a fault tolerant quantum computing setting. This study also highlighted that dominant quantum resource need arise due to the block encoding step which is required to implement the oracle in QLSA. Block encoding in fault tolerant implementation can be thought of as an analogous step to the LCU decomposition in NISQ implementation. Techniques like sigma basis which exploit sparsity and structure for more efficient LCU decomposition can also be potentially leveraged to make block encoding schemes more efficient. 

\section{Conclusions}\label{sec:conc}
In this paper we presented nBVQPCO, a novel variational quantum framework for nonlinear PDE constrained optimization problems. The proposed framework utilizes Carleman linearization, VQLS algorithm and a black box optimizer nested in bi-level optimization structure. We presented a detailed computational error and complexity analysis to establish potential benefits of our framework over classical techniques under certain assumptions. We demonstrated the framework on a inverse problem and presented simulation results. The analysis and results demonstrate the correctness of our framework for solving nonlinear simulation-based design optimization problems. 

Future work will involve studying and mitigating the effect of device/ measurement noise and making the nBVQPCO framework robust. It will also be important to study the scalability of the framework by applying it to larger problem sizes and implementing on quantum hardware. Finally, exploring and refining the framework to better exploit sparsity/structure and extension to fault tolerant setting as discussed above are also avenues for future research.


\section*{Acknowledgements}
This research was developed with funding from the Defense Advanced Research Projects Agency (DARPA). The views, opinions, and/or findings expressed are those of the author(s) and should not be interpreted as representing the official views or policies of the Department of Defense or the U.S. Government.

\appendix
\section*{Appendix}

\section{Proof of~\Cref{thm:carl}}\label{sec:proof_thmcarl}
\carl*
\begin{proof}
Let $\wvh_d=(\wvh^\tra(0),\cdots,\wvh^\tra(\tsteps h))^\tra$. Using the Lemma \ref{lem:euler}
\begin{eqnarray}
\|\wvh_d-\tilde{\wv}\|^2&=& \sum_{k=0}^{\tsteps} \|\wvh(kh)-\wvh^k\|^2, \notag \\
  &\leq& \left(3 \nt^{2.5} h^2\left(\left(\|\F_2\|+\|\F_1\|+\|\F_0\|\right)^2+\|\dot{\F}_0\|\right)\right)^2\sum_{k=0}^{\tsteps}k^2,
\end{eqnarray}
which implies
\begin{eqnarray}
\|\wvh_d-\tilde{\wv}\|&\leq & 3 \nt^{2.5} h^2\left((\|\F_2\|+\|\F_1\|+\|\F_0\|)^2+\|\dot{\F}_0\|\right)\tsteps^{3/2},\notag\\
  &\leq & 3 \nt^{2.5} h^{1/2}\left((\|\F_2\|+\|\F_1\|+\|\F_0\|)^2+\|\dot{\F}_0\|\right)T^{3/2}.
\end{eqnarray}

Similarly, using the Lemma \ref{lem:cltruc}, we get
\begin{eqnarray}
\|\wv_c-\wvh_d\|&\leq& \nt \|\F_2\| \|\uv_{0}\|^{\nt+1} h\sqrt{\sum_{k=0}^{\tsteps}k^2}, \notag \\
  &\leq& \nt \|\F_2\| \|\uv_{0}\|^{\nt+1} T^{3/2}h^{-1/2}.
\end{eqnarray}

We next show that for given $\epsilon^\prime>0$, one can choose $\nt$ and $h$ such that they satisfy a system of inequalities
\begin{eqnarray}
 && 3\nt^{2.5} h^{1/2}\left((\|\F_2\|+\|\F_1\|+\|\F_0\|)^2+\|\dot{\F}_0\|\right)T^{3/2} \leq  \epsilon^\prime/2, \label{eq:ineq1} \\
 && \nt \|\F_2\| \|\uv_{0}\|^{\nt+1} T^{3/2}h^{-1/2} \leq  \epsilon^\prime/2, \label{eq:ineq2}\\
 && \|\etav(t)\|\leq t\nt \|\F_2\| \|\uv_{0}\|^{\nt+1} \leq T\nt \|\F_2\| \|\uv_{0}\|^{\nt+1} \leq  \frac{\|\uv(T)\|}{4},\label{eq:ineq3}\\
 && h\leq \min \left\{\frac{1}{\nt\|\F_1\|},\frac{2(|Re(\lambda_1)|-\|\F_2\|-\|\F_0\|)}{\nt(|Re(\lambda_1)|^2-(\|\F_1\|+\|\F_0\|)^2+\|\F_1\|^2)}\right\},\label{eq:ineq4}
\end{eqnarray}
where, the last two inequalities comes form the Assumption (\ref{eq:etabound}) and the requirement (\ref{eq:h}) in the Lemma \ref{lem:euler}. Combining (\ref{eq:ineq1}) and (\ref{eq:ineq4}) implies
\begin{equation}\label{eq:href}
h\leq \left\{\frac{1}{\nt\|\F_1\|},\frac{2(|Re(\lambda_1)|-\|\F_2\|-\|\F_0\|)}{\nt(|Re(\lambda_1)|^2-(\|\F_1\|+\|\F_0\|)^2+\|\F_1\|^2)},\frac{(\epsilon^\prime)^2}{36\,\nt^5T^3\left(2(\|\F_2\|+\|\F_1\|+\|\F_0\|)^2+\|\dot{\F}_0\|\right)^2}\right\}.
\end{equation}
Consequently, we can take $h$ to be of the form $h=K/N^\alpha$, where $\alpha>1$ and $K>0$ is an appropriate constant. Thus, (\ref{eq:ineq2}) implies
\begin{equation*}
\nt \|\F_2\| \|\uv_{0}\|^{\nt+1} T^{3/2}\leq \frac{1}{2}\epsilon^\prime\sqrt{h}\leq \frac{\epsilon^\prime\sqrt{K}}{2 N^{\alpha/2}},
\end{equation*}
and $\nt$ needs to satisfy an equation of the form
\begin{equation}
\nt^{1+\alpha/2}\|\uv_{0}\|^{\nt}\leq \delta_1=\frac{\epsilon^\prime\sqrt{K}}{2T^{3/2} \|\F_2\|\|\uv_{0}\| },\label{eq:nteq1}
\end{equation}
and similarly from (\ref{eq:ineq3})
\begin{equation}
\nt\|\uv_{0}\|^{\nt}\leq \delta_2=\frac{\|\uv(T)\|}{4 T \|\F_2\|\|\uv_{0}\|}.\label{eq:nteq2}
\end{equation}
Note that if we choose $\nt>1$ such that
\begin{equation}
\nt^{1+\alpha/2}\|\uv_{0}\|^{\nt}\leq \delta=\min\{\delta_1,\delta_2\},\label{eq:nteq3}
\end{equation}
then the inequality (\ref{eq:nteq2}) is satisfied. Recall, from the Remark \ref{rem1} we can always assume $\|\uv_{0}\|<1$.  Consider a function of the form
\begin{equation*}
f(x)=x^\beta a^x,
\end{equation*}
where, $0<a<q, \beta\geq 1$ and $x\ge 0$. Then it is easy to show that $f(x)$ is monotonically decreasing for $x>\frac{\beta}{\log \frac{1}{a}}$ and $\lim_{x\rightarrow \infty} f(x)=0$. Thus,  one can always find an integer 
\begin{equation}\label{eq:ntref}
\nt>\frac{1+\alpha/2 }{\log \frac{1}{\|\uv_{0}\|}}
\end{equation}
such that (\ref{eq:nteq3}) is satisfied.

Thus, we conclude that for given $\epsilon^\prime$ one can always choose $\nt$ and $h$, such that
\begin{equation*}
\|\wv_c-\wvh_d\|\leq \epsilon^\prime/2, \quad \|\wvh_d-\tilde{\wv}\|\leq \epsilon^\prime/2,
\end{equation*}
and so
\begin{equation}\label{eq:err1p}
\|\wv_c-\tilde{\wv}\|\leq \|\wv_c-\wvh_d\|+\|\wvh_d-\tilde{\wv}\|\leq \epsilon^\prime.
\end{equation}
Finally, note that
\begin{eqnarray}\label{eq:normstep}
&&\left\|\frac{\wv_c}{\|\wv_c\|}-\frac{\tilde{\wv}}{\|\tilde{\wv}\|}\right\|\leq \left\|\frac{\wv_c}{\|\wv_c\|}-\frac{\tilde{\wv}}{\|\wv_c\|}+\frac{\tilde{\wv}}{\|\wv_c\|}-\frac{\tilde{\wv}}{\|\tilde{\wv}\|}\right\|,\notag\\
&\leq & \frac{\|\wv_c-\tilde{\wv}\|}{\|\wv_c\|}+\frac{\|\|\wv_c\|-\|\tilde{\wv}\|\|}{\|\wv_c\|}\leq \frac{2\|\wv_c-\tilde{\wv}\|}{\|\wv_c\|}\leq \frac{2\epsilon^\prime}{\|\wv_c\|}, 
\end{eqnarray}
where we have use the fact
\begin{equation*}
\frac{\|\|\wv_c\|-\|\tilde{\wv}\|\|}{\|\wv_c\|}\leq \frac{\|\wv_c-\tilde{\wv}\|}{\|\wv_c\|}.
\end{equation*}
Thus, for given $\epsilon$, by choosing $\epsilon^\prime=\|\wv_c\|\epsilon/2$ and determining $\nt$ and $h$ as described above, we can conclude,
\begin{equation*}
\|\frac{\wv_c}{\|\wv_c\|}-\frac{\tilde{\wv}}{\|\tilde{\wv}\|}\|\leq \epsilon.
\end{equation*}
as required.
\end{proof}

\section{Proof of~\Cref{thm:classeuler}}\label{sec:proof_thmclasseuler}

\classeuler*

\begin{proof}
Consider the R.H.S. of (\ref{eq:qdc})
\begin{equation}
\F(t,\uv)=\F_0(t)+\F_1\uv+\F_2\uv^{[2]}.
\end{equation}
Let $r>0$, then for any $\uv_1,\uv_2\in \{\uv:\|\uv\|\leq r\}$ 
\begin{eqnarray}
\|\F(t,\uv_1)-\F(t,\uv_2)\|&\leq& \|\F_1\||\ \uv_1-\uv_2\|+\|\F_2\||\ \uv_1^{[2]} -\ \uv_2^{[2]} \|\notag\\
&=&\|\F_1\||\ \uv_1-\uv_2\|+\|\F_2\||\ \uv_1\otimes\uv_1-\uv_1\otimes\uv_2+\uv_1\otimes\uv_2 -\ \uv_2^{[2]} \|\notag\\
&\leq&\|\F_1\||\ \uv_1-\uv_2\|+\|\F_2\| |\ \uv_1\| \|\uv_1-\uv_2\|+\|\F_2\| |\ \uv_2\| \|\uv_1-\uv_2\|\notag\\
&\leq&\|\F_1\| |\ \uv_1-\uv_2\|+(\|\F_2\| |\ \uv_1\| +|\uv_2\|)\|\uv_1-\uv_2\|\notag.
\end{eqnarray}
Thus
\begin{equation}\label{eq:lip}
\|\F(t,\uv_1)-\F(t,\uv_2)\|\leq L \|\uv_1-\uv_2\|
\end{equation}
where
\begin{equation}\label{eq:lipcst}
L=\|\F_1\|+2r\|\F_2\|.
\end{equation}

As shown in the Appendix A.1  in \cite{liu2021efficient}, under the Assumptions \ref{assum1} the solution of (\ref{eq:qdc}) is bounded, i.e.
\begin{equation}\label{eq:solboundcnt}
\|\uv(t)\|\leq \|\uv_0\|, \forall t\geq 0.
\end{equation}
Hence, taking the time derivative of (\ref{eq:qdc}) and since $\|\dot{\F}_0\|$ is bounded as per the Assumptions \ref{assum1}, we conclude that $\|\ddot{\uv}(t)\|$ is bounded.  Then by Taylor expansion
\begin{equation}\label{eq:loctrunc}
\uv((k+1)h)=\uv(kh)+h\F(kh,\uv(kh))+\tauv_k,
\end{equation}
where $\tauv_k$ is the local truncation error which can be bounded as
\begin{equation}
\|\tauv_k\|\leq\frac{M_k}{2}h^2,\label{eq:taubound}
\end{equation}
with $M_k=\max_{t\in [kh,(k+1)h]}\|\ddot{\uv}(t)\|< \infty$. Let $M=\max_k M_k$. 

Similar to (\ref{eq:solboundcnt}) we next show that for sufficiently small $h$, iterates of~\Cref{eq:direuler} also satisfy
\begin{equation}\label{eq:solbounddis}
\|\uv^k\|\leq \|\uv^0\|, \forall k\geq 0,
\end{equation}
and thus the Euler scheme is stable. To show this, from~\Cref{eq:direuler} we get
\begin{eqnarray*}
\frac{(\uvh^{k+1})^\tra\uvh^{k+1}-(\uvh^{k})^\tra\uvh^{k}}{h}&=&(\uvh^{k})^\tra\F_0(kh)+\F_0(kh)^\tra\uvh^{k}+(\uvh^k)^\tra(\F_1+\F_1^\tra)\uvh^k+\\
&+&(\uvh^{k})^\tra\F_2(\uvh^k)^{[2]}+((\uvh^k)^{[2]})^\tra\F_2^\tra\uvh^{k}+h\|\F(kh,\uvh^{k})\|^2,
\end{eqnarray*}
which implies
\begin{eqnarray*}
\frac{\|\uvh^{k+1}\|^2-\|\uvh^{k}\|^2}{h}&\leq&2\|\F_0\|\|\uvh^{k}\|+2\mbox{Re}(\lambda_1)\|\uvh^k\|^2+2\|\F_2\|\| \uvh^{k}\|^3+h\|\F(kh,\uvh^{k})\|^2.
\end{eqnarray*}


Under the condition (\ref{eq:R2}), for $k=0$, since 
\begin{equation*}
\|\F_0\|+\mbox{Re}(\lambda_1)\|\uvh^0\|+\|\F_2\|\| \uvh^{0}\|^2< 0,
\end{equation*}
one can choose $h$ 
\begin{equation}\label{eq:h0}
h\leq h_0=-\frac{\|\F_0\|+\mbox{Re}(\lambda_1)\|\uvh^0\|+\|\F_2\|\| \uvh^{0}\|^2}{\|\F(kh,\uvh^{0})\|^2}\|\uvh^0\|,
\end{equation}
so that
\begin{equation*}
\|\uvh^{1}\|\leq \|\uvh^{0}\|.
\end{equation*}
Following similar argument one can show that $\|\uvh^{2}\|\leq \|\uvh^{1}\|$ and so on, concluding (\ref{eq:solbounddis}).

Next, subtracting (\ref{eq:direuler}) from (\ref{eq:loctrunc}) leads to 
\begin{equation*}
\uv((k+1)h)-\uvh^{k+1}=\uv(kh)+h\F(kh,\uv(kh))+\tauv_k-(\uvh^{k}+h\F(kh,\uvh^k)),
\end{equation*}
which implies that the norm of the error $e_{k}=\|\uv(kh)-\uvh^{k}\|$ satisfies
\begin{equation}
e_{k+1}\leq (1+hL)e_k+\|\tauv_k\|,
\end{equation}
where we have used the facts (\ref{eq:solboundcnt}), (\ref{eq:solbounddis}) and (\ref{eq:lip}) with $L=\|\F_1\|+2\|\F_2\|\|\uv_0\|$.  Iterating on $k$ we get
\begin{equation}
e_{k+1}\leq (1+hL)^{k+1}e_0+((1+hL)^{k}\|\tauv_0\|+(1+hL)^{k-2}\|\tauv_2\|+\cdots \|\tauv_k\|), 
\end{equation}
or 
\begin{equation}
e_{k+1}\leq (1+hL)^{k+1}e_0+\max \|\tauv_k\|\frac{(1+hL)^k+1}{hL}, 
\end{equation}
Since $e_0=0$ for Euler method, using (\ref{eq:taubound}) and $1+Lh\leq e^{hL}$ we conclude that
\begin{equation*}
\|\uv(kh)-\uvh^{k}\|=e_{k}\leq C h.
\end{equation*}
where $C=\frac{M}{2L}(e^{LT}-1)$. It then follows that
\begin{eqnarray}
\|\uv_c-\tilde{\uv}\|^2&=& \sum_{k=0}^{\tsteps} \|\uv(kh)-\uvh^k\|^2\leq (Ch)^2\tsteps,\notag
\end{eqnarray}
which implies
\begin{eqnarray}
\|\uv_c-\tilde{\uv}\|&\leq &  Ch\tsteps^{1/2}\leq C h^{1/2} T^{1/2}.
\end{eqnarray}
Thus, for any given $\epsilon^\prime$ one can choose 
\begin{equation}\label{eq:hebound}
h\leq \min\Big\{\frac{(\epsilon^\prime)^2}{C^2T},h_0\Big\},
\end{equation}
where $h_0$ is as defined in (\ref{eq:h0}),  so that
\begin{equation*}
\|\uv_c-\tilde{\uv}\|\leq \epsilon^\prime.
\end{equation*}
Finally, for given $\epsilon$, by choosing $\epsilon^\prime=\|\uv_c\|\epsilon/2$, selecting $h$ as described above and following similar step (\ref{eq:normstep}) as discussed in the proof of~\Cref{thm:carl}, we conclude
\begin{equation*}
\left\|\frac{\uv_c}{\|\uv_c\|}-\frac{\tilde{\uv}}{\|\tilde{\uv}\|}\right\|\leq \epsilon.
\end{equation*}
\end{proof}

\bibliographystyle{plain}
\bibliography{references}

\end{document}